\documentclass[review,12pt,authoryear]{elsarticle}
\usepackage[top=1in, bottom=1in, left=1in, right=1in ]{geometry}

\journal{Econometric Reviews}
\usepackage{graphicx,amssymb,amsthm,amsfonts,amsmath,booktabs,caption,subcaption,float,longtable,multirow,enumitem,xcolor,textcomp}
\usepackage[hyphens]{url}
\usepackage{hyperref}
\hypersetup{pdfauthor={Olivia Kvist and J. Eduardo Vera-Vald\'es},pdftitle={Cointegration by Parts: Locating Cointegration in Time}}

\newdefinition{definition}{Definition}
\newtheorem{thm}{Theorem}
\newtheorem{assumption}{Assumption}
\newtheorem{proposition}{Proposition}

\newtheorem{lemma}{Lemma}

\newtheorem{remark}{Remark}

\renewcommand{\thethm}{\arabic{section}.\arabic{thm}}

\renewcommand{\theproposition}{\arabic{section}.\arabic{proposition}}

\renewcommand{\thelemma}{\arabic{section}.\arabic{lemma}}

\renewcommand{\theremark}{\arabic{section}.\arabic{remark}}

\begin{document}

\begin{frontmatter}

\title{Cointegration by Parts: Locating Cointegration in Time}

\author[add1]{Olivia Kvist\corref{cor1}}
%\fnref{fn1}
\ead{omtk@math.aau.dk}

\author[add1,add2]{J. Eduardo Vera-Vald\'es}
\ead{eduardo@math.aau.dk}

\affiliation[add1]{
    organization    = {Department of Mathematical Sciences, Aalborg University},
    postcode        = {9220},
    city            = {Aalborg},
    country         = {Denmark}
}

\affiliation[add2]{
    organization = {Center for Research in Energy: Economics and Markets (CoRE)},
    postcode = {8000},
    city     = {Aarhus},
    country  = {Denmark}
}

\cortext[cor1]{Corresponding author}

\begin{abstract}
  Tests for cointegration are typically applied to a single window spanning the entire sample,
assuming that the long-run relationship holds throughout. When it holds over only a part of the sample, such tests lose power, because the stationary episode is diluted by periods without cointegration. We propose three statistics for testing whether two or more series cointegrate only over a part of the sample, each an infimum of the Engle–Granger statistic over recursive, backward-expanding, or doubly-flexible windows. We derive their limiting distributions and establish which alternatives each is consistent against. Only the doubly-flexible statistic has power against both break directions. Inspired by the seminal work of James G. MacKinnon, critical values are obtained by simulation and summarized through response surface regressions. We apply the tests to global mean sea level and global mean surface temperature anomalies. All three reject the null of no cointegration at the 5\% level, locating it in a sub-period of the 1880--2019 record that coincides with documented discontinuities in sea surface temperature data collection.
\end{abstract}

\begin{keyword}
Cointegration \sep Engle–Granger test \sep Subsample inference \sep MacKinnon response surface \sep Climate econometrics

\JEL C12 \sep C15 \sep C22 \sep C32 \sep Q54
\end{keyword}

\end{frontmatter}

\clearpage

\section{Introduction}\label{sec:intro}

Cointegration is a widely used method in econometrics to uncover long-run equilibrium behavior between two or more time series of interest. The concept of multiple non-stationary series having a stationary linear combination was first introduced by \citet{Granger1981SomeSpecification}. Later, the formal Engle-Granger representation theorem was established \citep{Granger1983Co-IntegratedModels, Engle1987Co-IntegrationTesting}. Since its conceptualization, cointegration has been studied quite extensively in the literature. The interest in cointegration is not surprising, since many non-stationary time series, especially in economics, adhere to some long-run equilibrium.

Historically, cointegration has been used in macroeconomics and finance, for example, to test purchasing power parity, the relationship between stock prices and dividends, or the relationship between interest rates and inflation (see, e.g., \citealp{Campbell1987CointegrationModels, Engle1987Co-IntegrationTesting, Mark1990RealInvestigation}). However, the existence of a long-term relationship between two or more variables is not limited to an economic perspective.

In climate research, cointegration analysis has increasingly been carried out in recent years. \cite{Kaufmann2002CointegrationRelations} used cointegration to model the relation between surface temperature and different sources of radiative forcing. \cite{Schmith2012StatisticalMethods} tests for a cointegration relation between global sea level and global mean surface temperature. They find that temperature causally depends on sea level. \cite{Pretis2020EconometricAutoregressions} shows that energy-balance models of the climate are equivalent to a cointegrated system. \cite{Bennedsen2024AFraction} propose a regression-based approach to the estimation of the CO$_2$ airborne fraction. 

At the same time, the maintained assumption in these studies is that the long-run equilibrium relationship is stable across the whole sample, which may not hold in a climate system that is increasingly believed to contain ``tipping elements'' susceptible to abrupt and potentially irreversible regime shifts (see, e.g., \citealp{Lenton2008TippingSystem, Lenton2019ClimateAgainst, ArmstrongMcKay2022ExceedingPoints}). If the relationship between two climate variables holds over only part of the record, a test applied to the full sample can fail to detect it. The appropriate response is then a procedure that searches subsamples rather than one that assumes time invariance.

In this paper, we suggest a family of tests designed for the situation in which a cointegrating relationship holds over part of the sample rather than throughout it. In the classical Engle-Granger [EG] test, the null of no cointegration is tested by examining whether the residuals from the Ordinary Least Squares [OLS] regression of one variable on a group of others follow a unit-root process. The Dickey-Fuller [DF] and Augmented Dickey-Fuller [ADF] tests are commonly used for this purpose (\citealp{Engle1987Co-IntegrationTesting, Gregory1996Residual-basedShifts, MacKinnon2010CriticalTests}). The idea behind our tests is to apply that residual-based test not once, to the full sample, but repeatedly across subsamples, and to retain the window that provides the strongest evidence against a unit root.

It is worth being explicit at the outset about what this provides. Our null hypothesis is that the series do not cointegrate anywhere in the sample, the alternative is that they cointegrate over some sub-interval of it. Therefore, a rejection is evidence that a cointegrating relationship is present somewhere, together with an indication of where the evidence for it is concentrated. It is not, by itself, a test of the hypothesis that a relationship which once held has since broken down, that would require the complementary pairing, with cointegration throughout the sample as the null. We return to this distinction in Section~\ref{sec:conclusion}. What the tests do provide is power in precisely the situation where the standard full-sample test does not, namely when the long-run relationship is confined to a sub-period, and a means of locating that sub-period.

Our test is inspired by research on financial bubbles. In the literature on financial bubbles, the aim is to detect explosive behavior in the price of an asset. 

The most commonly used tests for financial bubbles are the Supremum ADF test [SADF], Backward Supremum ADF test [BSADF], and Generalized Supremum ADF test [GSADF] (\citealp{Phillips2011BehaviorValues, Phillips2014SpecificationBehavior, Phillips2015TestingSP500, Phillips2015TestingDetectors}). We adapt the idea to our context by using left-sided unit root tests, ultimately checking for periods where the cointegrating relationship holds at any point in time.

In most economic applications, there may be no theoretical justification for asking whether it holds only over part of the record. In the climate system, on the contrary, the physical mechanisms linking variables can change over time. Further, the observational record is itself heterogeneous in quality and method, so restricting attention to a single full-sample window is a substantive restriction rather than a neutral default (\citealp{EuropeanCommission2025CausesChange, NationalAeronauticsandSpaceAdministration2025TheChange, U.S.EnvironmentalProtectionAgency2025CausesChange, IPCC2021Summaryeds..}). 

We illustrate our approach on the relationship between global mean sea level and global mean surface temperature over 1880--2019. Under the deterministic specification appropriate for the two drifting series, all three statistics reject the null of no cointegration at the $5\%$ level. The doubly-flexible statistic locates the evidence in 1947--1967, a window that covers the two decades immediately following the wartime change in sea surface temperature measurement documented by \citet{Thompson2008ATemperature}. Hence, what the procedure establishes is that the evidence in this well-studied relationship is not spread evenly across the record.

The paper is structured as follows. In Section \ref{sec:method}, we introduce the methodology behind cointegration and motivate the subsample search. In Section \ref{sec:theory}, we present the three test statistics, derive their limiting distributions, and establish which alternatives each is consistent against. Section \ref{sec:simulation}, contains the Monte Carlo simulation of the critical values and obtains the response surface functions inspired by \citet{MacKinnon1996NumericalTests,MacKinnon2010CriticalTests}. Finally, we make an empirical application of our test in Section \ref{sec:application}.

\section{Methodology}\label{sec:method}

The notion of cointegration testing was first outlined in \cite{Engle1987Co-IntegrationTesting}. They proposed an explicit way of defining whether two or more time series cointegrate, based on specific testing procedures and error-correcting models. In the following, we outline the test for cointegration between two variables to ease notation. Nevertheless, the method can be easily extended to multiple variables.

Let $x_t, y_t \sim I(d)$ be two time series, where $d \in \mathbb{N}$, which means that both $x_t$ and $y_t$ are integrated of order $d$. If there exists a non-trivial linear combination of the two series that is integrated of a smaller order $d-b$, where $d \geq b > 0$, we say that $x_t$ and $y_t$ are cointegrated. The most common values for cointegration testing are $d=b=1$. Hence, assume $x_t, y_t \sim I(1)$; then we say that $x_t$ and $y_t$ cointegrate, with cointegration vector $\beta (\neq 0)$ if, 
\[
z_t = y_t - \beta x_t \sim I(0).
\]
\cite{Engle1987Co-IntegrationTesting} proceed to discuss how to formally test for the existence of $\beta$. The test they propose is closely related to a unit root testing setup, originally formulated by \cite{Fuller1976IntroductionSeries} and \cite{Dickey1979DistributionRoot, Dickey1981LikelihoodRoot}. If a unit root is rejected in the error term, it would be evidence in favor of a cointegrating relationship. From seven different unit root testing methods analyzed in a simulation study, the authors recommend using the ADF test, given that it shows the best power properties. However, new critical values must be obtained since the cointegrating vector must be estimated from the data.

Given this construction, the intuition behind a test on \textit{Cointegration by Parts} is to detect when the error term from the linear regression does not contain a unit root. The explanation of why the test works is straightforward. If two or more series are cointegrated during a specific period, then the OLS regression residuals will be stationary there, but non-stationary elsewhere. Therefore, an EG test applied to the entire sample may fail to reject the unit root, because stationary periods are diluted by observations from periods with no cointegration. By estimating the residuals over many subsamples and computing the corresponding EG statistic for each one, we search for the part of the sample where the evidence against a unit root is strongest. Since more negative EG statistics provide stronger evidence of stationarity, taking the infimum across subsamples identifies the period with the strongest evidence of cointegration. As we will see in Figure~\ref{fig:simulation_GIEG}, the test statistic becomes more negative as the window approaches the cointegrating period, whereas under the unit root it simply fluctuates around its null distribution. 

This motivates applying the EG test recursively over different subsamples and focusing on the subsample that provides the strongest evidence against a unit root. As earlier mentioned, this idea is closely related to the testing methods developed for financial bubbles (see, e.g., \citealp{Keynes1936TheMoney, Samuelson1957IntertemporalSpeculation, Hahn1966EquilibriumGoods, Flood1980MarketTests, Evans1986A1981-84, Evans1991PitfallsPrices}). Later, a sequence of papers developed three tests for explosive behavior based on the supremum of the ADF test statistic. These tests are the earlier mentioned SADF, BSADF, and GSADF (\citealp{Phillips2014SpecificationBehavior, Phillips2015TestingSP500, Phillips2015TestingDetectors}). Each of these tests combines recursive right-sided ADF regressions with a supremum taken over the windows considered. Since we are interested in detecting the existence of cointegration, we perform regular left-sided EG tests. Correspondingly, we take the infimum instead of the supremum of the test statistics.

\subsection{Relation to the Existing Literature}\label{sec:related}

Our tests sit at the intersection of three pieces of literature.

The first is the literature on changes in persistence. \citet{Kim2000DetectionSeries} proposed ratio-based statistics for testing the null of stationarity throughout the sample against the alternative that the process changes from $I(0)$ to $I(1)$ at some unknown date. \citet{Busetti2004TestsPersistence} extended the framework and studied its behavior in both break directions, and \citet{Leybourne2003TestsDifferenceStationarity} developed complementary tests taking difference-stationarity as the null. \citet{Harvey2006ModifiedPersistence} modified these statistics so that they are correctly sized when the process is $I(1)$ throughout, and \citet{Leybourne2007CUSUMPersistence} proposed CUSUM-of-squares versions. The residual process in our Definition~\ref{def:coint} is exactly the object studied in the literature, and our alternative is a change in persistence. The difference is that the process is not observed, rather it is the residual from a cointegrating regression whose coefficient vector must itself be estimated, and which we re-estimate within every window. That estimation step is what generates the nuisance-parameter problem addressed in Lemma~\ref{lem:pivotal}, and it is also why the critical values in Section~\ref{sec:simulation} are not those of the change-in-persistence literature. Estimating $\beta$ once on the full sample and applying an existing persistence test to the resulting residuals is not valid here, precisely because under our alternative the full-sample estimate of $\beta$ is not consistent for the cointegrating vector that holds on the stationary segment.

The second is the literature on recursive and rolling unit root testing. \citet{Banerjee1992RecursiveEvidence} were among the first to study recursive and rolling minimum Dickey-Fuller statistics and to note that their distributions differ from the fixed-sample case. The financial bubble literature discussed above developed the supremum-based versions and the associated limit theory that we adapt \citep{Phillips2011BehaviorValues, Phillips2015TestingSP500, Phillips2015TestingDetectors}. Our contribution is to extend that machinery over from an observed series to a cointegrating residual, and to a left-sided rather than a right-sided alternative.

The third is the literature on instability in cointegrated systems. \citet{Hansen1992TestsProcesses} and \citet{Quintos1993ParameterRegressions} developed tests for parameter constancy in cointegrating regressions. \citet{Gregory1996Residual-basedShifts} and \citet{Gregory1996TestingRelationships} constructed residual-based cointegration tests that allow the cointegrating vector to shift at an unknown date. \citet{Gregory1996Residual-basedShifts} is the closest precedent to our paper, since it also takes the infimum of an ADF statistic over an unknown break date. The distinction is in what varies across the candidates over which the infimum is taken. \citet{Gregory1996Residual-basedShifts} hold the estimation sample fixed at the full sample and vary the form of the regression, adding shift dummies indexed by a candidate break date. Their alternative is cointegration throughout the sample with a coefficient vector that changes. Our alternative is that the cointegrating relation holds only over a sub-interval. Hence, we hold the form of the regression fixed and vary the estimation window.

We begin by presenting the main theoretical result. The supporting constructions, lemmas, and proofs are collected in \ref{app:theory}. Afterward, we conduct a simulation exercise to establish the validity of our theoretical framework. 

\section{Theoretical Establishment}\label{sec:theory}

In this section, we present the formal theorem of the Cointegration by Parts test. The limiting objects it is stated in terms of, the lemmas it rests on, and its proof are given in~\ref{app:theory}.

First, let us formally define what we mean with \textit{Cointegration by Parts}.

\begin{definition}[Model]\label{def:model}
    Consider $N+1$ distinct time series, $y_t$ and
    $x_t=(x_{1,t},\dots,x_{N,t})^\top$ observed for $t = 1,\dots,T$, all integrated of order one, $x_t,y_t\sim I(1)$. They can be represented by
    \begin{equation}\label{eq:model1}
        y_t = \theta^\top d_t + \beta^\top x_t + \varepsilon_t,
    \end{equation}
    where $d_t\in\mathbb{R}^q$ collects the deterministic regressors with coefficient vector $\theta\in\mathbb{R}^q$, $\beta\in\mathbb{R}^N$ is the potential cointegrating vector, and the integration order of the residual process $\varepsilon_t$ determines whether $x_t$ and $y_t$ are indeed cointegrated.
\end{definition}

Following \citet{MacKinnon1996NumericalTests, MacKinnon2010CriticalTests}, we index the procedure by the deterministic specification, since the limiting distributions, and hence the critical values, differ across the three standard cases:
\begin{equation}\label{eq:detcases}
    \text{Case }\mathrm{n}: \theta^\top d_t \equiv 0; \qquad
    \text{Case }\mathrm{c}: d_t = 1; \qquad
    \text{Case }\mathrm{ct}: d_t = (1,\,t/T)^\top .
\end{equation}
Case $\mathrm{n}$ omits the deterministic component altogether, so that the cointegrating relation is forced through the origin, and it is rarely appropriate in practice, since the levels of $x_t$ and $y_t$ typically have an arbitrary origin. Case $\mathrm{c}$ is the default throughout this paper and is the specification underlying the critical values reported in Section~\ref{sec:simulation}. Case $\mathrm{ct}$ is required when $x_t$ and $y_t$ are integrated with drift, in which case the drift dominates the regression and the trend term is needed to recover the cointegration parameter, see the discussion in Section~\ref{sec:application}. The scaling of the trend by $T$ in Case $\mathrm{ct}$ is immaterial for the residuals, since OLS is invariant to a non-singular linear reparameterization of the regressors, but it is convenient for the asymptotics.

Beyond this convention, the deterministic term plays a substantive role for the window-based statistics introduced below. A window beginning at $\lfloor T r_1\rfloor$ with $r_1>0$ opens after the integrated regressors have already accumulated to a level of order $\sqrt{T}$, which the residual process on that window would otherwise inherit as an unmodeled initial condition. Including $d_t$ removes it, since the fitted residuals are orthogonal to $d_t$ by construction and in Case $\mathrm{c}$ therefore sum to zero over every window, this is also why the ADF regression in Equation~\eqref{eq:residmodel} below carries no deterministic terms of its own.

Definition~\ref{def:model} is maintained throughout, under both the null and the alternative hypothesis. What distinguishes the two is the integration order of $\varepsilon_t$, and in particular whether it changes within the sample.

\begin{definition}[Cointegration by Parts]\label{def:coint}
    Let Definition~\ref{def:model} hold. The residual process is said to exhibit a break in the cointegrating relation at the fraction $\tau_0\in(0,1)$ if
    \begin{equation}\label{eq:epsbehave}
        \varepsilon_t\sim
        \begin{cases}
            I(0), & t < \lfloor T \tau_0 \rfloor, \\
            I(1), & t \geq \lfloor T \tau_0 \rfloor.
        \end{cases}
    \end{equation}
    Hence, $x_t$ and $y_t$ are cointegrated before the break and cease to be cointegrated after the break. Analogously, a break in the opposite direction is defined by
    \begin{equation}\label{eq:epsbehave_reverse}
        \varepsilon_t\sim
        \begin{cases}
            I(1), & t < \lfloor T \tau_0 \rfloor,\\
            I(0), & t \geq \lfloor T \tau_0 \rfloor.
        \end{cases}
    \end{equation}
\end{definition}

Neither $\tau_0$ nor the direction of the change in Definition~\ref{def:coint} is known, so the sub-interval on which $\varepsilon_t$ is stationary has to be searched for rather than assumed. That search needs an estimate of $\varepsilon_t$, which is not observed. Equation~\eqref{eq:model1} can be written as
$$\varepsilon_t = y_t - \theta^\top d_t - \beta^\top x_t,$$ 
and $(\theta,\beta)$ can be estimated by OLS \citep{Engle1987Co-IntegrationTesting}. It is convenient to stack the regressors as $z_t := (d_t^\top, x_t^\top)^\top \in \mathbb{R}^{q+N}$ and the coefficients as $\delta := (\theta^\top,\beta^\top)^\top$.

For a recursive window ending at $t = \lfloor Tr \rfloor$, let
\begin{equation}\label{eq:recursive_beta}
    \hat{\delta}_r = \begin{pmatrix}\hat{\theta}_r \\ \hat{\beta}_r\end{pmatrix} = \left(\sum_{t=1}^{\lfloor Tr \rfloor} z_t z_t^\top \right)^{-1} \sum_{t=1}^{\lfloor Tr \rfloor} z_t y_t, \quad r \in [r_0,1],
\end{equation}
where $r_0 \in (0,1)$ denotes the minimum recursive-window fraction. The corresponding recursive residuals are
\begin{equation}\label{eq:recursive_residual}
    \hat{\varepsilon}_{t,r} = y_t - \hat{\theta}_r^\top d_t - \hat{\beta}_r^\top x_t.
\end{equation}

For each recursive endpoint $r$, the residuals are subsequently tested using an ADF
regression over the recursive window,
\begin{equation}
\label{eq:residmodel}
    \Delta\hat{\varepsilon}_{t,r}
      = \gamma_r \hat{\varepsilon}_{t-1,r}
      + \sum_{i=1}^{p} \psi_{i,r}\, \Delta\hat{\varepsilon}_{t-i,r}
      + \nu_{t,r}.
\end{equation}
Equation \eqref{eq:residmodel} is the usual reparameterization of an autoregression
of order $p+1$: if
$$\hat{\varepsilon}_{t,r} = \sum_{j=1}^{p+1} a_{j,r}\hat{\varepsilon}_{t-j,r} + \nu_{t,r},$$
then $\gamma_r = \phi_r - 1$ and $\psi_{i,r} = -\sum_{j=i+1}^{p+1} a_{j,r}$, where
$\phi_r := \sum_{j=1}^{p+1} a_{j,r}$ denotes the sum of the autoregressive
coefficients. The unit-root hypothesis $\phi_r = 1$ is therefore equivalent to
$H_0 : \gamma_r = 0$. The asymptotic results below are stated for a deterministic lag rule
$p = \bar{p}(T)$, common to all windows, under either of two standard regimes. If
$\varepsilon_t$ admits a finite autoregressive representation of order $p_0+1$, then
$\bar{p} = p_0$ is fixed and Equation~\eqref{eq:residmodel} is correctly specified. If
instead $\varepsilon_t$ is of infinite order, $\bar{p}(T)$ must satisfy
$\bar{p} \to \infty$ and $\bar{p}^{3}/T \to 0$ as $T \to \infty$, which ensures that the
autoregressive approximation renders $\nu_{t,r}$ asymptotically serially uncorrelated and
leaves the limiting distribution of the resulting $t$-statistic unchanged
\citep{Said1984TestingOrder}. In the empirical implementation, the lag order is instead
selected in each window using the Bayesian Information Criterion [BIC], following
\citet{Schwarz1978EstimatingModel}, over $p = 0,\dots,p_{\max}$ with the maximum lag given
by Schwert's rule applied to the length $n$ of the window rather than to the full sample,
$p_{\max} = \lfloor 12(n/100)^{1/4} \rfloor$ \citep{Schwert1989TestsInvestigation}. Since
$n \leq T$, this bounds the selected order by $O(T^{1/4})$ uniformly over the windows
considered, so the upper restriction $p^{3}/T \to 0$ holds for every window. BIC does not, however,
deliver a diverging order, so the implementation targets the finite-order regime rather than
the sieve one. Remark~\ref{rem:bic} in~\ref{app:theory} reconciles the data-dependent selection
used in practice with the deterministic rule maintained in the asymptotic theory: if the selected
order is eventually constant across the windows actually scanned, the statistic process computed
with selected lags coincides path by path with the one computed under the deterministic rule, and
the asymptotic results below apply unchanged.

The resulting $t$-statistic is then combined with an infimum operator over the recursive windows to test the null hypothesis of a unit root, corresponding to $H_0:\gamma_r=0$. For each recursive endpoint $r \in [r_0,1]$, define the corresponding EG statistic as,
\begin{equation}
\label{eq:egstat}
    \textit{EG}(r) := \frac{\hat{\gamma}_r}{\operatorname{se}(\hat{\gamma}_r)},
\end{equation}
where $\hat{\gamma}_r$ is estimated by OLS from Equation~\eqref{eq:residmodel} and
$\operatorname{se}(\hat{\gamma}_r)$ denotes the corresponding OLS standard error.

The forward-infimum Engle-Granger [FIEG] statistic is then defined as,
\begin{equation}\label{eq:infegstat}
    \textit{FIEG}(r_0) := \inf_{r\in[r_0,1]}\textit{EG}(r),
\end{equation}
where $r_0 \in (0,1)$ denotes the minimum recursive-window fraction. The recursive endpoint $r$ therefore denotes the fraction of the sample used in each recursive estimation.

Before stating the theorem, we collect the conditions under which the asymptotic results below are derived. Throughout, let
\begin{equation}\label{eq:Rset}
    \mathcal{R}(r_0) := \{(r_1,r_2) : r_2 \in [r_0,1],\; r_1 \in [0, r_2-r_0]\}
\end{equation}
denote the set of admissible windows; the recursive windows $[0,r]$ with $r\in[r_0,1]$ used by $\textit{FIEG}$ correspond to the pairs $(0,r) \in \mathcal{R}(r_0)$.

\begin{assumption}[Joint FCLT and Identification]\label{ass:fclt}
    \begin{enumerate}[label=(\roman*)]
        \item[]
        \item The innovation processes, $\varepsilon_{x,t}$ and $\varepsilon_{y,t}$, associated with $x_t$ and $y_t$ satisfy the joint Functional Central Limit Theorem [FCLT]
        \begin{equation}\label{eq:fclt}
            \frac{1}{\sqrt{T}} \sum_{t=1}^{\lfloor Tr \rfloor}
            \begin{pmatrix}
                \varepsilon_{x,t} \\
                \varepsilon_{y,t}
            \end{pmatrix}
            \Rightarrow
            \begin{pmatrix}
                B_x(r) \\
                B_y(r)
            \end{pmatrix},
            \quad r \in [0,1],
        \end{equation}
        where $(B_x, B_y)^\top$ is an $(N+1)$-dimensional Brownian motion with long-run covariance matrix
        \begin{equation}\label{eq:Omega}
            \Omega = \lim_{T\to\infty} \frac{1}{T}\, \mathbb{E}\left[ \left(\sum_{t=1}^{T}\begin{pmatrix}\varepsilon_{x,t}\\ \varepsilon_{y,t}\end{pmatrix}\right) \left(\sum_{t=1}^{T}\begin{pmatrix}\varepsilon_{x,t}\\ \varepsilon_{y,t}\end{pmatrix}\right)^{\!\top} \right] = \begin{pmatrix} \Omega_{xx} & \Omega_{xy} \\ \Omega_{yx} & \omega_{yy}\end{pmatrix},
        \end{equation}
        assumed positive definite, where $\Omega_{xx}$ is $N \times N$, $\Omega_{xy} = \Omega_{yx}^\top$ is $N \times 1$, and $\omega_{yy}$ is scalar. In general $\Omega_{xy} \neq 0$, so that $B_x$ and $B_y$ are correlated; Lemma~\ref{lem:pivotal} in~\ref{app:theory} shows that the limiting distributions nevertheless do not depend on $\Omega$.
        \item The lag order is given by a deterministic rule $p = \bar{p}(T)$, common to all windows, satisfying one of the two regimes stated after Equation~\eqref{eq:residmodel}.
    \end{enumerate}
\end{assumption}

Part~(i) is the standard FCLT for the partial sums of the innovations. Part~(ii) fixes
the lag rule under which the limit theory is derived, keeping the data-dependence of the empirical lag selection
out of the asymptotics. Because each of the three statistics is an
infimum over a continuum of windows, the continuous-mapping step requires convergence of the process
$\{\textit{EG}(r_1,r_2)\}$ indexed by $(r_1,r_2)$, and not merely of $\textit{EG}(r_1,r_2)$ at each window.
Convergence of that process is established using only Assumption~\ref{ass:fclt}, by
Lemmas~\ref{lem:gram} and~\ref{lem:uniform} in~\ref{app:theory}.

That limit is a functional of Brownian motions projected off the deterministic regressors over the
window in question, the continuous-time counterpart of partialling $d_t$ out of the regression, and
standardized by their own quadratic-variation rate. \ref{app:theory} constructs these objects and defines
from them the functional $\mathcal{Q}(r_1,r_2)$ of Equation~\eqref{eq:Qfunctional}, the Dickey-Fuller
limit written for a process of unit variance rate, which is well defined at every admissible window and
continuous in $(r_1,r_2)$ on $\mathcal{R}(r_0)$.

The two hypotheses under consideration are,
\begin{align}\label{eq:hypotheses}
    H_0 & : \varepsilon_t \sim I(1) \text{ for all } t = 1,\dots,T, \\
    H_1 & : \varepsilon_t \text{ satisfies Definition~\ref{def:coint} for some } \tau_0 \in (0,1), \text{ with the stationary} \nonumber \\
        & \phantom{{}: {}} \text{segment of length at least } r_0. \nonumber
\end{align}
Under $H_0$ the two series do not cointegrate anywhere in the sample, so that the residual process contains a unit root throughout. Under $H_1$ they cointegrate on a sub-interval of the sample. It is convenient to name that sub-interval. Thus, let
\begin{equation}\label{eq:statseg}
    \mathcal{S} := \begin{cases}
        [0,\tau_0] & \text{under Equation~\eqref{eq:epsbehave}, the \emph{forward} break,} \\
        [\tau_0,1] & \text{under Equation~\eqref{eq:epsbehave_reverse}, the \emph{reverse} break,}
    \end{cases}
\end{equation}
denote the stationary segment, and call a window $(r_1,r_2)\in\mathcal{R}(r_0)$ \emph{clean} if $[r_1,r_2]\subseteq\mathcal{S}$. A window that merely \emph{overlaps} $\mathcal{S}$ without being contained in it retains an $I(1)$ segment of non-vanishing length, and the corresponding $\textit{EG}$ statistic remains $O_p(1)$. The length restriction in Equation~\eqref{eq:hypotheses} requires $\tau_0 \geq r_0$ under Equation~\eqref{eq:epsbehave} and $1-\tau_0 \geq r_0$ under Equation~\eqref{eq:epsbehave_reverse}, so that $\mathcal{S}$ is long enough to hold at least one admissible window. Meaning, a cointegration episode shorter than $r_0$ is contained in no admissible window, and no test in this class has power against it. However, length alone is not enough. What drives a statistic to $-\infty$ is a clean window in its own search set, and the three statistics search different sets.

We now formulate the theorem for the asymptotic behavior of the FIEG test statistic under the null hypothesis of a unit root in the residual process.

\begin{thm}[Forward-Infimum Engle-Granger]\label{thm:FIEG}
    Let $x_t$, $y_t$, and $\varepsilon_t$ satisfy Definition~\ref{def:model}, and let Assumption~\ref{ass:fclt} hold. Then, under the null hypothesis $H_0$ in Equation~\eqref{eq:hypotheses},
    \begin{equation}\label{eq:asympnull}
        \textit{FIEG}(r_0) \xrightarrow{d} \inf_{r\in[r_0,1]} \mathcal{Q}(0,r),
    \end{equation}
    with $\mathcal{Q}$ as defined in Equation~\eqref{eq:Qfunctional} in~\ref{app:theory}.
\end{thm}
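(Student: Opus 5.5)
The plan is to establish weak convergence of the whole process $\{\textit{EG}(0,r)\}_{r\in[r_0,1]}$ in $\ell^\infty([r_0,1])$ (indeed $D[r_0,1]$ with the uniform metric) to $\{\mathcal{Q}(0,r)\}$. Once that holds, the result follows from the continuous mapping theorem, because $f\mapsto\inf_{r\in[r_0,1]}f(r)$ is continuous in the sup norm. The recursive windows are the pairs $(0,r)\in\mathcal{R}(r_0)$. So I would simply specialize the process-level results for $\textit{EG}(r_1,r_2)$ provided by Lemmas~\ref{lem:gram} and~\ref{lem:uniform}, and the invariance result of Lemma~\ref{lem:pivotal}, to $r_1=0$.

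Under $H_0$, write $\varepsilon_t$ as an $I(1)$ process whose increments are a linear combination of $(\varepsilon_{x,t},\varepsilon_{y,t})$. By Assumption~\ref{ass:fclt}(i), the scaled partial sums $T^{-1/2}(x_{\lfloor Ts\rfloor},\varepsilon_{\lfloor Ts\rfloor})$ then converge jointly to Brownian motions. I would proceed in three steps.

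\textbf{Step 1 (regression).} With $D_T=\mathrm{diag}(\text{deterministic scaling},\sqrt{T}I_N)$, apply Lemma~\ref{lem:gram} to the normalized Gram matrix $T^{-1}D_T^{-1}\sum_{t\le Tr}z_tz_t^\top D_T^{-1}$. This gives a limit that is uniformly nonsingular on $[r_0,1]$. It follows that $\hat\beta_r-\beta$ is $O_p(1)$, with a spurious-regression limit that is a continuous functional of $(B_x,B_\varepsilon)$ on $[0,r]$. Consequently $T^{-1/2}\hat\varepsilon_{\lfloor Ts\rfloor,r}$ converges, uniformly in $(s,r)$ with $s\le r$, to the projection residual of $B_\varepsilon$ on $(d,B_x)$ over $[0,r]$.

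\textbf{Step 2 (ADF statistic).} For fixed $\bar p$, the lagged differences are $I(0)$ and asymptotically orthogonal to the level regressor after $T$-scaling. The $t$-ratio on $\hat\varepsilon_{t-1,r}$ therefore reduces to the usual ratio
\[
\frac{\int_0^r \tilde B\,d\tilde B}{\sigma\bigl(\int_0^r\tilde B^2\bigr)^{1/2}}.
\]
Here $\tilde B$ is the projected residual process and $\sigma^2$ is its quadratic-variation rate. The serial correlation correction comes from the augmentation, with the \citet{Said1984TestingOrder} argument covering the sieve case. Lemma~\ref{lem:uniform} gives this uniformly in $r$, and standardizing by $\sigma$ yields exactly $\mathcal{Q}(0,r)$. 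Lemma~\ref{lem:pivotal} then removes the dependence on $\Omega$, since the projected residual is, up to scale, a functional of independent standard Brownian motions.

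\textbf{Step 3 (infimum).} Tightness and continuity of $r\mapsto\mathcal{Q}(0,r)$ on $[r_0,1]$ are taken from the appendix, and the continuous mapping theorem gives \eqref{eq:asympnull}.

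I expect the main obstacle to be the uniformity in Step~2. The numerator contains $\sum\hat\varepsilon_{t-1,r}\Delta\hat\varepsilon_{t,r}$, whose convergence to a stochastic integral is not a continuous functional of the partial-sum process. It needs joint convergence of the sample covariance process $T^{-1}\sum_{t\le Ts}(\text{level})(\text{innovation})$ together with a one-sided long-run covariance correction, holding uniformly in $s$. The same difficulty arises for the residual autocovariances entering the lag coefficients when $\bar p\to\infty$. My first approach would avoid the stochastic integral by using the identity $\sum \hat\varepsilon_{t-1}\Delta\hat\varepsilon_t=\tfrac12\bigl(\hat\varepsilon_{\lfloor Tr\rfloor}^2-\hat\varepsilon_0^2-\sum(\Delta\hat\varepsilon_t)^2\bigr)$. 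This expresses the numerator through endpoint values, which are continuous functionals handled by Step~1, plus a quadratic-variation term. A uniform law of large numbers for $T^{-1}\sum_{t\le Tr}(\Delta\hat\varepsilon_t)^2$ would then follow from the FCLT together with $\hat\beta_r=O_p(1)$ uniformly.
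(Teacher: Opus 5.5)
Your proposal is correct and is essentially the paper's proof. It uses the limits of $\hat\beta_r$ and of the recursive residuals, the standard Dickey--Fuller argument, Lemma~\ref{lem:uniform} (resting on Lemma~\ref{lem:gram}) to pass to the process over the recursive windows $(0,r)\in\mathcal{R}(r_0)$, and the continuous mapping theorem for the infimum. Your summation-by-parts identity plays the role of the endpoint form in which the paper defines $\mathcal{Q}$, chosen precisely to avoid the non-adapted integral $\int\tilde B\,\mathrm{d}\tilde B$ that you first write down.

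Two small corrections. First, Lemma~\ref{lem:pivotal} is not needed for this statement, since $\mathcal{Q}(0,r)$ is already defined through the standardized process $W_{\hat{\varepsilon},r}$. Second, the uniform law of large numbers for $T^{-1}\sum_{t\le Tr}(\Delta\hat\varepsilon_t)^2$ does not follow from the FCLT alone: it requires a law of large numbers for the sample second moments of $(\varepsilon_{x,t}^\top,\varepsilon_{y,t})^\top$, an input the paper's Lemma~\ref{lem:uniform} also takes for granted.
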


\ref{app:theory} proves Theorem~\ref{thm:FIEG} and shows in addition that the statistic diverges to
$-\infty$ whenever some recursive window $[0,r]$ with $r\in[r_0,1]$ lies entirely inside a segment on which
$\varepsilon_t$ is stationary. Lemma~\ref{lem:pivotal} then shows that the limit in
Equation~\eqref{eq:asympnull} does not depend on the long-run covariance matrix $\Omega$, but only on $N$,
$r_0$, and the deterministic specification in Equation~\eqref{eq:detcases}. A single tabulation indexed by
those three arguments therefore serves for any data set, and Section~\ref{sec:responsesurface} reports it
as a function of the sample size.

Just as \citet{Phillips2015TestingDetectors} noted, the procedure in Theorem~\ref{thm:FIEG} is sensitive to the number of breaks. Thus, if the cointegrating episode is confined to an interior sub-interval of the sample, so that the relationship neither begins at the start of the record nor runs to its end, the FIEG test suffers from reduced power and, by Proposition~\ref{prop:consistency}, may be inconsistent. Therefore, we adopt their framework of dealing with multiple breaks by proposing the \textit{Backward-Infimum Engle-Granger} [BIEG] and \textit{Generalized-Infimum Engle-Granger} [GIEG] testing procedures. Before we introduce the test statistics and their asymptotic behavior, we need to introduce some more notation. 

The idea behind both the BIEG and GIEG is similar to the BSADF and GSADF from \citet{Phillips2015TestingSP500}. Thus, the BIEG uses a fixed endpoint, $r_2$, and a varying starting point, $r_1$, so that it performs the recursive $\inf \textit{EG}$ test on a backward-expanding sample sequence, while the GIEG allows both the starting and end points of the recursive EG tests to vary. We denote the endpoint of the EG regression by $r_2 \in [r_0, 1]$ and likewise the starting point $r_1 \in [0, r_2 - r_0]$, and write $\textit{EG}(r_1,r_2)$ for the statistic of Equation~\eqref{eq:egstat} computed on the window $[r_1,r_2]$, so that $\textit{EG}(r) = \textit{EG}(0,r)$; \ref{app:theory} sets the window estimator out in Equation~\eqref{eq:beta_window}. The two test statistics are denoted as 
\begin{align}
    \textit{BIEG}(r_0, r_2) & = \inf_{r_1 \in [0, r_2 - r_0]} \textit{EG}(r_1, r_2), \quad r_2 \in [r_0, 1], \label{eq:BIEGstat} \\
    \textit{GIEG}(r_0) & = \inf_{\substack{r_2 \in [r_0, 1] \\ r_1 \in [0, r_2 - r_0]}} \textit{EG}(r_1, r_2). \label{eq:GIEGstat} 
\end{align}
The formal theorems and corresponding proofs for the BIEG and GIEG statistics are given in~\ref{app:BIEGandGIEG}.

Proposition~\ref{prop:consistency} in~\ref{app:theory} shows that a statistic of the form
$\inf_{(r_1,r_2)\in\mathcal{W}}\textit{EG}(r_1,r_2)$ diverges to $-\infty$ if its search set $\mathcal{W}$
contains a clean window, and is $O_p(1)$ when its windows are instead kept a fixed distance away from
$\mathcal{S}$, as those of a one-sided search set are. Every $\textit{FIEG}$ window begins at $0$ and every
$\textit{BIEG}(r_0,1)$ window ends at $1$. Thus, $\textit{FIEG}$ is consistent against a forward break with
$\tau_0 \geq r_0$ and $\textit{BIEG}(r_0,1)$ against a reverse break with $1-\tau_0 \geq r_0$, each
remaining $O_p(1)$ against the other direction. The $\textit{GIEG}$ search set restricts neither endpoint
and is consistent against both, provided only that $\lvert\mathcal{S}\rvert \geq r_0$. Therefore, the restricted
search sets of $\textit{FIEG}$ and $\textit{BIEG}$ buy a computational saving at the cost of
power against one break direction, and only $\textit{GIEG}$ is consistent against $H_1$, as stated in
Equation~\eqref{eq:hypotheses}. Section~\ref{sec:simulation} returns to this trade-off.

\section{Simulation Exercises}\label{sec:simulation}

To verify that the test statistics, specifically the $\textit{GIEG}$ test, behave as we intended, we simulate two time series with a known break structure. They do not cointegrate for the first and last quarter of the sample, but cointegrate in between. 

\begin{figure}[H]
    \centering
    \begin{subfigure}[b]{.45\linewidth}
        \centering
        \includegraphics[width=\textwidth]{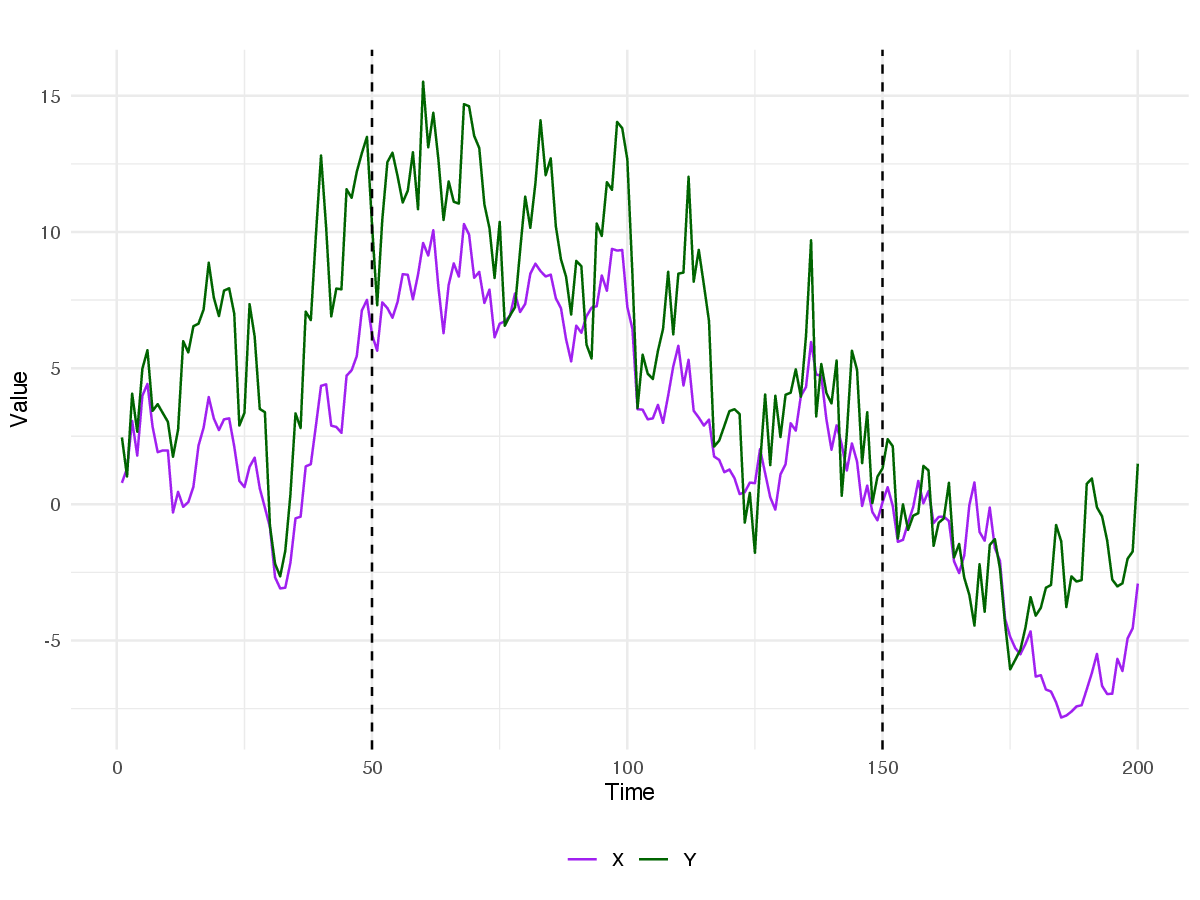}
    \caption{Two simulated time series, $x_t$ and $y_t$.}
    \label{fig:sim_breaks}
    \end{subfigure}
    \begin{subfigure}[b]{.45\linewidth}
        \centering
        \includegraphics[width=\textwidth]{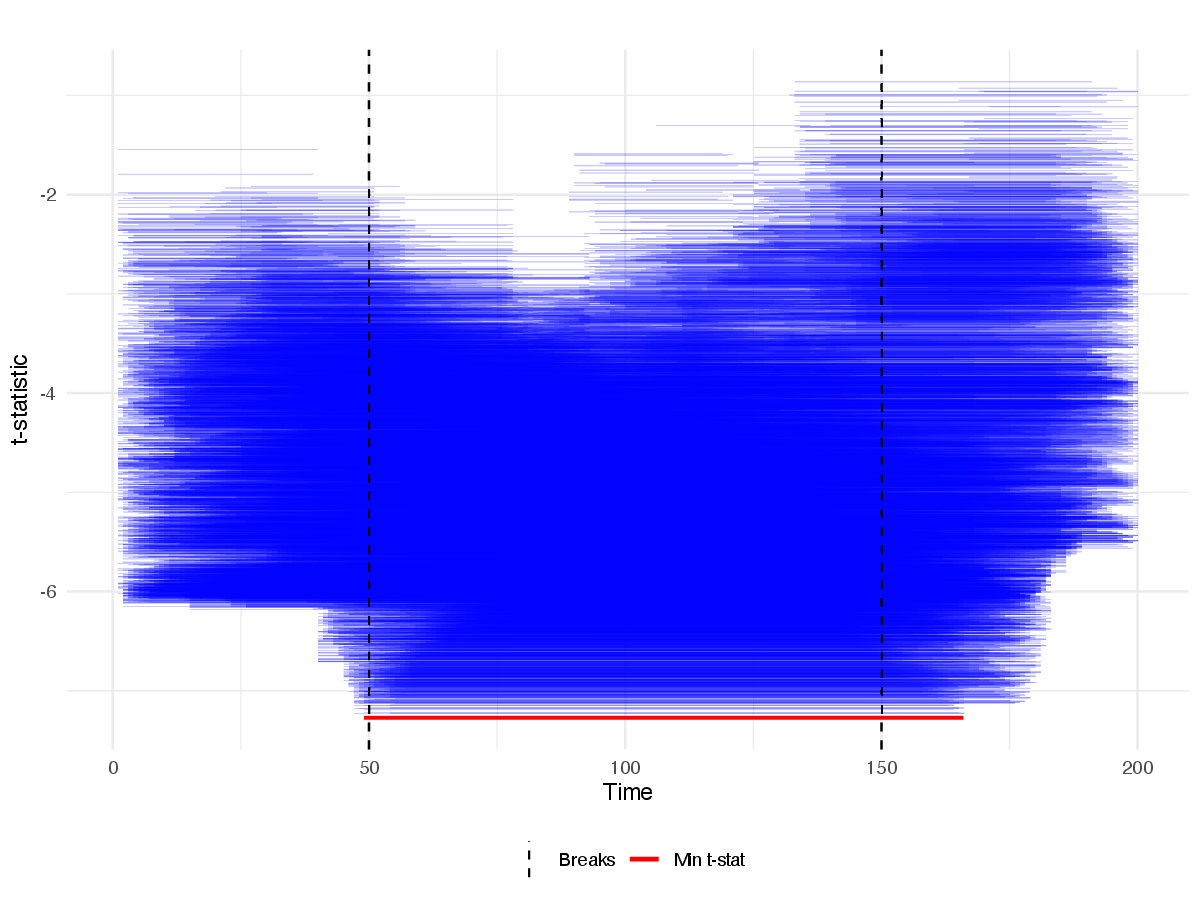}
    \caption{All values of the $t$-statistics.}
    \label{fig:GIEG_sim}
    \end{subfigure}
    \caption{Simulated example of how the $\textit{GIEG}(r_0)$ test works on two time series, $x_t$ and $y_t$.}
    \label{fig:simulation_GIEG}
\end{figure}

Figure~\ref{fig:GIEG_sim} shows the EG statistic for every examined window (light blue), the minimizing window (red), and the true break points (dashed black). The minimizing window falls almost squarely within the cointegrated period, confirming that the test correctly recovers the region of the sample where the relationship actually holds, despite the existence of non-cointegrated periods. 

As with the original ADF and EG tests, our proposed test statistics, also have a non-standard asymptotic distribution, so their critical values must be obtained by simulation. By Lemma~\ref{lem:pivotal} these limiting distributions depend only on $N$, $r_0$, and the deterministic specification, and not on the long-run covariance matrix $\Omega$. This characteristic makes it meaningful to tabulate critical values indexed by $T$ and $N$ alone within a given deterministic case, which parallels the corresponding invariance result for static residual-based cointegration tests in \citet{Phillips1990AsymptoticCointegration}. We tabulate Case $\mathrm{c}$ and Case $\mathrm{ct}$ of Equation~\eqref{eq:detcases} separately, since the two limits differ substantially and, as \citet{MacKinnon1996NumericalTests, MacKinnon2010CriticalTests} emphasizes for the residual-based tests, using the wrong one is a first-order error rather than a refinement. We report the results through one of MacKinnon's seminal contributions, the numerical distribution function: response surfaces fitted at a dense grid of probabilities rather than at three, returning an approximate $p$-value for any statistic at any sample size. Section~\ref{sec:ndf} carries out this estimation, and the results are utilized in Section~\ref{sec:application}.

\subsection{Simulation Design}\label{sec:simdesign}

The statistics of Equations~\eqref{eq:infegstat}, \eqref{eq:BIEGstat}, and \eqref{eq:GIEGstat} are infima over a continuum of windows, so any tabulation of them is a tabulation of one particular finite-sample implementation, and the critical values are usable only if that implementation is stated in full. 

\emph{Trimming.} The minimum window is $\lfloor T r_0 \rfloor$ observations, with $r_0 = 0.15$ throughout, following the convention of the parameter-instability literature \citep{Andrews1993TestsPoint}. We impose $r_0$ as a pure fraction and do not additionally floor the window at a fixed number of observations. Lemma~\ref{lem:pivotal} indexes the limiting distribution by $r_0$, so a window length held fixed in observations would let the effective fraction shrink as $T$ grows, the admissible set $\mathcal{R}(r_0)$ of Equation~\eqref{eq:Rset} would keep expanding, and the simulated quantiles would drift rather than approximate a single limit.

\emph{Lag selection.} Within each window the ADF lag order is chosen by BIC \citep{Schwarz1978EstimatingModel}, as in Remark~\ref{rem:bic}, subject to two restrictions. First, the candidate orders are scored on a common sample, namely the one available at the largest candidate order, so that the criterion compares likelihoods over identical observations. Second, the selected order is the largest $p$ not exceeding Schwert's bound $\lfloor 12(n/100)^{1/4} \rfloor$ \citep{Schwert1989TestsInvestigation}, where $n$ is the window length, that still leaves the ADF regression at least $4(p+1)$ usable observations. Without that cap, Schwert's bound assigns eight lags to a twenty-observation window, leaving two residual degrees of freedom.

\emph{Grid and replications.} We simulate under the null of Definition~\ref{def:model}, with $x_t$ and $y_t$ independent driftless random walks, at $T = 100$, $150$, $200$, $250$, $300$, $400$, $500$, $700$, $900$, and $1{,}000$, and additionally at $T = 1{,}500$ and $T = 2{,}000$ for $N = 1$. We use $10{,}000$ replications for $N = 1$, $5{,}000$ replications for $N = 2, 3$, and $2{,}500$ replication at $T = 1{,}500$ and $T = 2{,}000$. All three statistics are computed from a single scan of each replication, since the \textit{FIEG} and \textit{BIEG} window sets are subsets of the \textit{GIEG} window set. The full replication budget is concentrated at $N \leq 3$; additional tabulations for $N = 4, 5, 6$ on the same grid and with full number of replications are being computed and will be reported in the paper's repository.

Tables~\ref{tab:CVsim_N1} and~\ref{tab:CVsim_N1_ct} report the simulated $1\%$, $5\%$, and $10\%$ quantiles for $N=1$ in the two deterministic cases, each with a bootstrap Monte Carlo standard error, so that movement across $T$ can be separated from simulation noise. 

\begin{table}[H]
    \centering
    \caption{\label{tab:CVsim_N1} Simulated critical values for two time series, that is $N = 1$, for Case $\mathrm{c}$ of Equation~\eqref{eq:detcases}, with minimum window fraction $r_0 = 0.15$.}
    \footnotesize
    \setlength{\tabcolsep}{5pt}
    \begin{tabular}{r rrr rrr rrr}
    \toprule
    & \multicolumn{3}{c}{\textit{FIEG}} & \multicolumn{3}{c}{\textit{BIEG}} & \multicolumn{3}{c}{\textit{GIEG}} \\
    \cmidrule(lr){2-4}\cmidrule(lr){5-7}\cmidrule(lr){8-10}
    $T$ & $1\%$ & $5\%$ & $10\%$ & $1\%$ & $5\%$ & $10\%$ & $1\%$ & $5\%$ & $10\%$ \\
    \midrule
    $100$ & $-5.618$ & $-4.820$ & $-4.420$ & $-5.702$ & $-4.860$ & $-4.456$ & $-8.463$ & $-7.098$ & $-6.546$ \\
     & (0.045) & (0.023) & (0.017) & (0.058) & (0.025) & (0.021) & (0.124) & (0.038) & (0.022) \\[2pt]
    $150$ & $-5.426$ & $-4.613$ & $-4.238$ & $-5.320$ & $-4.582$ & $-4.200$ & $-7.316$ & $-6.371$ & $-5.988$ \\
     & (0.050) & (0.020) & (0.019) & (0.036) & (0.030) & (0.018) & (0.066) & (0.027) & (0.020) \\[2pt]
    $200$ & $-5.116$ & $-4.388$ & $-4.046$ & $-5.135$ & $-4.429$ & $-4.104$ & $-6.723$ & $-6.001$ & $-5.673$ \\
     & (0.047) & (0.020) & (0.015) & (0.044) & (0.022) & (0.019) & (0.045) & (0.020) & (0.015) \\[2pt]
    $250$ & $-4.973$ & $-4.292$ & $-3.984$ & $-5.046$ & $-4.360$ & $-4.047$ & $-6.492$ & $-5.820$ & $-5.516$ \\
     & (0.042) & (0.020) & (0.014) & (0.047) & (0.021) & (0.015) & (0.049) & (0.016) & (0.013) \\[2pt]
    $300$ & $-4.876$ & $-4.207$ & $-3.896$ & $-4.942$ & $-4.320$ & $-4.022$ & $-6.267$ & $-5.643$ & $-5.373$ \\
     & (0.038) & (0.019) & (0.013) & (0.041) & (0.019) & (0.015) & (0.040) & (0.021) & (0.016) \\[2pt]
    $400$ & $-4.738$ & $-4.151$ & $-3.863$ & $-4.924$ & $-4.317$ & $-4.024$ & $-6.116$ & $-5.498$ & $-5.181$ \\
     & (0.034) & (0.018) & (0.012) & (0.039) & (0.017) & (0.013) & (0.029) & (0.016) & (0.012) \\[2pt]
    $500$ & $-4.701$ & $-4.118$ & $-3.860$ & $-4.845$ & $-4.310$ & $-4.059$ & $-5.876$ & $-5.347$ & $-5.068$ \\
     & (0.036) & (0.017) & (0.013) & (0.027) & (0.013) & (0.012) & (0.031) & (0.019) & (0.011) \\[2pt]
    $700$ & $-4.646$ & $-4.103$ & $-3.833$ & $-4.820$ & $-4.316$ & $-4.091$ & $-5.698$ & $-5.233$ & $-4.982$ \\
     & (0.030) & (0.017) & (0.011) & (0.034) & (0.012) & (0.010) & (0.029) & (0.015) & (0.014) \\[2pt]
    $900$ & $-4.631$ & $-4.110$ & $-3.839$ & $-4.916$ & $-4.408$ & $-4.170$ & $-5.639$ & $-5.191$ & $-4.973$ \\
     & (0.036) & (0.018) & (0.013) & (0.037) & (0.014) & (0.012) & (0.027) & (0.014) & (0.010) \\[2pt]
    $1{,}000$ & $-4.613$ & $-4.120$ & $-3.847$ & $-4.921$ & $-4.404$ & $-4.162$ & $-5.684$ & $-5.209$ & $-4.987$ \\
     & (0.037) & (0.018) & (0.013) & (0.035) & (0.013) & (0.012) & (0.022) & (0.013) & (0.010) \\[2pt]
    $1{,}500$ & $-4.636$ & $-4.158$ & $-3.856$ & $-4.974$ & $-4.457$ & $-4.208$ & $-5.644$ & $-5.198$ & $-5.001$ \\
     & (0.064) & (0.033) & (0.026) & (0.043) & (0.032) & (0.023) & (0.043) & (0.018) & (0.020) \\[2pt]
    $2{,}000$ & $-4.597$ & $-4.101$ & $-3.844$ & $-4.905$ & $-4.483$ & $-4.248$ & $-5.690$ & $-5.252$ & $-5.047$ \\
     & (0.073) & (0.030) & (0.021) & (0.037) & (0.033) & (0.018) & (0.048) & (0.026) & (0.018) \\[2pt]
    \bottomrule
    \end{tabular}
    \\[4pt]
    \textbf{Note:} Empirical quantiles of the simulated minima under the null of no cointegration, with bootstrap Monte Carlo standard errors in parentheses ($1{,}000$ bootstrap draws of the quantile). $10{,}000$ replications at $T \leq 1{,}000$, $2{,}500$ at $T = 1{,}500$, and $T = 2{,}000$. Minimum window $\lfloor T r_0 \rfloor$ observations, ADF lag order by BIC on a common sample subject to $n_{\text{eff}} \geq 4(\hat{p}+1)$, as described in Section~\ref{sec:simdesign}.
\end{table}

\begin{table}[H]
    \centering
    \caption{\label{tab:CVsim_N1_ct} Simulated critical values for two time series, that is $N = 1$, for Case $\mathrm{ct}$ of Equation~\eqref{eq:detcases}, with minimum window fraction $r_0 = 0.15$.}
    \footnotesize
    \setlength{\tabcolsep}{5pt}
    \begin{tabular}{r rrr rrr rrr}
    \toprule
    & \multicolumn{3}{c}{\textit{FIEG}} & \multicolumn{3}{c}{\textit{BIEG}} & \multicolumn{3}{c}{\textit{GIEG}} \\
    \cmidrule(lr){2-4}\cmidrule(lr){5-7}\cmidrule(lr){8-10}
    $T$ & $1\%$ & $5\%$ & $10\%$ & $1\%$ & $5\%$ & $10\%$ & $1\%$ & $5\%$ & $10\%$ \\
    \midrule
    $100$ & $-6.314$ & $-5.482$ & $-5.095$ & $-6.540$ & $-5.564$ & $-5.120$ & $-9.669$ & $-8.106$ & $-7.452$ \\
     & (0.039) & (0.026) & (0.018) & (0.057) & (0.028) & (0.021) & (0.089) & (0.043) & (0.025) \\[2pt]
    $150$ & $-5.995$ & $-5.184$ & $-4.814$ & $-6.009$ & $-5.206$ & $-4.839$ & $-8.045$ & $-7.095$ & $-6.677$ \\
     & (0.042) & (0.022) & (0.016) & (0.066) & (0.021) & (0.016) & (0.063) & (0.028) & (0.018) \\[2pt]
    $200$ & $-5.662$ & $-4.996$ & $-4.623$ & $-5.752$ & $-5.004$ & $-4.642$ & $-7.508$ & $-6.648$ & $-6.287$ \\
     & (0.039) & (0.022) & (0.015) & (0.047) & (0.023) & (0.016) & (0.062) & (0.020) & (0.017) \\[2pt]
    $250$ & $-5.500$ & $-4.844$ & $-4.511$ & $-5.531$ & $-4.863$ & $-4.526$ & $-7.094$ & $-6.435$ & $-6.108$ \\
     & (0.034) & (0.019) & (0.017) & (0.042) & (0.019) & (0.017) & (0.039) & (0.014) & (0.016) \\[2pt]
    $300$ & $-5.354$ & $-4.697$ & $-4.398$ & $-5.457$ & $-4.795$ & $-4.472$ & $-6.902$ & $-6.261$ & $-5.957$ \\
     & (0.040) & (0.017) & (0.014) & (0.031) & (0.020) & (0.014) & (0.035) & (0.018) & (0.013) \\[2pt]
    $400$ & $-5.267$ & $-4.633$ & $-4.339$ & $-5.349$ & $-4.722$ & $-4.419$ & $-6.672$ & $-6.025$ & $-5.727$ \\
     & (0.032) & (0.019) & (0.012) & (0.033) & (0.021) & (0.012) & (0.027) & (0.018) & (0.014) \\[2pt]
    $500$ & $-5.118$ & $-4.554$ & $-4.302$ & $-5.266$ & $-4.671$ & $-4.392$ & $-6.437$ & $-5.845$ & $-5.587$ \\
     & (0.031) & (0.021) & (0.011) & (0.030) & (0.016) & (0.014) & (0.033) & (0.016) & (0.012) \\[2pt]
    $700$ & $-5.118$ & $-4.550$ & $-4.289$ & $-5.158$ & $-4.644$ & $-4.379$ & $-6.207$ & $-5.668$ & $-5.405$ \\
     & (0.025) & (0.017) & (0.011) & (0.026) & (0.013) & (0.012) & (0.034) & (0.013) & (0.009) \\[2pt]
    $900$ & $-5.038$ & $-4.535$ & $-4.299$ & $-5.129$ & $-4.654$ & $-4.384$ & $-6.078$ & $-5.587$ & $-5.361$ \\
     & (0.028) & (0.014) & (0.011) & (0.027) & (0.014) & (0.012) & (0.029) & (0.013) & (0.010) \\[2pt]
    $1{,}000$ & $-5.065$ & $-4.524$ & $-4.270$ & $-5.195$ & $-4.688$ & $-4.441$ & $-6.043$ & $-5.576$ & $-5.339$ \\
     & (0.037) & (0.016) & (0.011) & (0.021) & (0.014) & (0.013) & (0.027) & (0.017) & (0.012) \\[2pt]
    $1{,}500$ & $-5.093$ & $-4.547$ & $-4.295$ & $-5.219$ & $-4.719$ & $-4.476$ & $-5.974$ & $-5.578$ & $-5.329$ \\
     & (0.040) & (0.040) & (0.025) & (0.074) & (0.035) & (0.028) & (0.049) & (0.023) & (0.020) \\[2pt]
    $2{,}000$ & $-5.050$ & $-4.555$ & $-4.308$ & $-5.203$ & $-4.773$ & $-4.522$ & $-5.963$ & $-5.532$ & $-5.322$ \\
     & (0.074) & (0.038) & (0.027) & (0.059) & (0.023) & (0.025) & (0.062) & (0.021) & (0.019) \\[2pt]
    \bottomrule
    \end{tabular}
    \\[4pt]
    \textbf{Note:} As Table~\ref{tab:CVsim_N1}, for Case $\mathrm{ct}$: a constant and a linear trend in the cointegrating regression, and no deterministic terms in the residual ADF regression.
\end{table}

For reference, the critical values in Table~\ref{tab:CVsim_N1} are considerably more negative than those conventionally used for a static single-window EG cointegration test. MacKinnon's response surface critical values for a two-variable cointegrating regression with a constant give a $5\%$ critical value of approximately $-3.37$ for $T=200$ and $-3.35$ for $T=500$ \citep{MacKinnon1996NumericalTests, MacKinnon2010CriticalTests}. Since our Case $\mathrm{c}$ uses the same deterministic specification, the comparison is exact: the difference reflects only the infimum taken over windows, and is the reason a subsample search cannot be conducted with off-the-shelf EG critical values.

\subsection{Estimating Response Surface Functions}\label{sec:responsesurface}

MacKinnon’s work has had a major impact on empirical econometrics, and much of it reaches applied researchers through the software they use. Critical values and $p$-values for the ADF and EG tests are reported in the major econometric packages from his response surface simulations \citep{MacKinnon1994ApproximateTests, MacKinnon1996NumericalTests, MacKinnon2010CriticalTests}, and his numerical distribution functions are built into Stata \citep{StataCorp2025StataManual, Schaffer2012EGRANGER:Estimation}, \texttt{R} \citep{Wuertz2025FUnitRoots:Roots, Pfaff2008AnalysisR}, and Python \citep{Seabold2010Statsmodels:Python}, among others. Inspired by his approach, we construct analogous objects for the three statistics of Section~\ref{sec:theory}: the response surfaces in this subsection, and the distribution function built from them in Section~\ref{sec:ndf}. 

To obtain critical values at arbitrary sample sizes, we follow MacKinnon's response surface approach \citep{MacKinnon2010CriticalTests}. Thus, for test $s \in \{ \textit{FIEG}, \textit{BIEG}, \textit{GIEG} \}$, number of regressors $N$, significance level $\alpha$, and sample size $T$, we define
\begin{equation}\label{eq:surfaceresponse}
    \hat{c}_{s,N,\alpha}(T) = {\beta}_{\infty, s, N, \alpha} + \frac{{\beta}_{1,s,N,\alpha}}{T} + \frac{{\beta}_{2,s,N,\alpha}}{T^2} + \frac{{\beta}_{3,s,N,\alpha}}{T^3}.
\end{equation}
In Equation~\eqref{eq:surfaceresponse}, ${\beta}_{\infty}$ is the estimated asymptotic critical value, and ${\beta}_1$, ${\beta}_2$, and ${\beta}_3$ capture finite-sample bias. The three tests reject the null hypothesis of no cointegration when $S_{s,N,T}^{obs} \leq \hat{c}_{s,N,\alpha}(T)$.

We estimate $(\beta_{\infty,s,N,\alpha}, \beta_{1,s,N,\alpha},$ $\beta_{2,s,N,\alpha}, \beta_{3,s,N,\alpha})$ by Weighted Least Squares [WLS]. \citet{MacKinnon1996NumericalTests} noted that OLS estimation of the response surface ignores both the heteroskedasticity of the simulated critical values across $T$ and the correlation across quantiles, and proposed feasible GLS as the remedy. We correct for the heteroskedasticity by weighting each $T$ by the inverse of its estimated variance,
\begin{align*}
    w_T = \widehat{s.e.}\left[\hat{c}_{s,N,\alpha}(T) \right]^{-2},
\end{align*}
so that more precisely estimated points receive proportionally more weight in the fit. Since we fit each quantile in a separate regression rather than jointly, we do not model the cross-quantile correlation component of MacKinnon's GLS.

Because the weights are in units of the Monte Carlo standard error, the weighted residual standard error of the fit,
\begin{align*}
    \hat{\sigma}_{s,N,\alpha}^2 = \frac{1}{n-k}\sum_{T \in \mathcal{T}} w_T \left[ \hat{q}_{s,N,\alpha}(T) - \hat{c}_{s,N,\alpha}(T) \right]^2 ,
\end{align*}
where $\hat{q}_{s,N,\alpha}(T)$ is the simulated quantile, $\mathcal{T}$ is the grid of sample sizes, $n = |\mathcal{T}|$ and $k$ is the number of estimated coefficients, is directly interpretable: $\hat{\sigma} \approx 1$ means the surface reproduces the simulated quantiles to within simulation noise, and $\hat{\sigma} \gg 1$ that the functional form, not the simulation, is the binding constraint. We report it for every fit, together with a standard error for every coefficient. The cubic term is retained only when $|\hat{\beta}_3 / \widehat{s.e.}(\hat{\beta}_3)| \geq 1.96$, and is otherwise dropped and the surface refitted as a quadratic; the table signals this by leaving $\hat{\beta}_3$ blank.

Our design differs from \citet{MacKinnon2010CriticalTests} in how the weights are obtained. MacKinnon splits the simulation at each sample size into many independent replicate sets and treats each resulting quantile estimate as a separate observation. We instead run a single large simulation per sample size and obtain the weight by bootstrap, so each regression has only as many observations as sample sizes in the grid, $12$ for $N=1$ and $10$ for $N=2,3$. Given the cost of repeating the $\textit{GIEG}(r_0)$ scan, this concentrates the replication budget into a more precisely estimated quantile at each point. Since WLS is unbiased under either scheme when the surface is correctly specified, and both weighting schemes reflect the relative precision of each point, the trade-off is one of efficiency and residual degrees of freedom, not of validity.

Table~\ref{tab:response_surface} reports the estimates. Three features deserve comment. First, $\hat{\beta}_{\infty}$ is estimated precisely: its standard error is at most $0.060$ and has a median of $0.031$ across all fits, so the asymptotic critical values are pinned down to roughly the second decimal. Second, $\hat{\sigma}$ is close to unity for \textit{FIEG} and \textit{BIEG} in both cases, with medians between $0.82$ and $1.10$, confirming that Equation~\eqref{eq:surfaceresponse} tracks those two quantile sequences to within simulation error. Third, it is not close to unity for \textit{GIEG}, where the median is $1.47$ in Case $\mathrm{c}$ and the maximum reaches $3.36$. The \textit{GIEG} quantile sequence approaches its limit far more slowly than the other two, the cubic term is retained in only $3$ of the $18$ \textit{GIEG} fits against $23$ of the $36$ \textit{FIEG} and \textit{BIEG} fits, and the consequence is visible in the extrapolation: $\hat{\beta}_{\infty}$ for \textit{GIEG} is less negative than the simulated quantile at the largest available sample size in all $18$ combinations, by between $0.10$ and $0.35$, against Monte Carlo standard errors of at most $0.06$. For \textit{FIEG} and \textit{BIEG} the corresponding discrepancies are of both signs and exceed $0.15$ in absolute value in only three of thirty-six combinations. We therefore regard $\hat{\beta}_{\infty}$ for \textit{GIEG} as an extrapolation that the current grid does not resolve, and we report it as a coefficient of Equation~\eqref{eq:surfaceresponse} rather than as an estimate of the asymptotic critical value. Within the fitted range of $T$, which is the range in which the test is used, this does not affect the accuracy of the fitted values; Section~\ref{sec:ndf} documents that accuracy directly.

\setlength{\tabcolsep}{4pt}
\begin{longtable}[H]{l cc rrrrr}
\caption{Response surface estimates for the simulated critical values
of $FIEG(r_0)$, $BIEG(r_0,1)$, and $GIEG(r_0)$, for $N=1,2,3$
regressors, in both deterministic cases. Coefficients are from Equation~\eqref{eq:surfaceresponse}
estimated by weighted least squares; standard errors in parentheses; $\hat\beta_3$ is blank where the quadratic specification was retained.}\label{tab:response_surface}\tabularnewline
\toprule
Test & $N$ & Level & $\hat\beta_\infty$ & $\hat\beta_1$ & $\hat\beta_2$ & $\hat\beta_3$ & $\hat\sigma$ \\
\endfirsthead
\toprule
Test & $N$ & Level & $\hat\beta_\infty$ & $\hat\beta_1$ & $\hat\beta_2$ & $\hat\beta_3$ & $\hat\sigma$ \\
\midrule
\endhead
\bottomrule
\endlastfoot
\midrule
\multicolumn{8}{l}{\textit{Panel A: Case $\mathrm{c}$}} \\
\midrule
\textit{FIEG} & $1$ & $1\%$ & $-4.622$ & $39.11$ & $-43{,}234.7$ & $2{,}935{,}669$ & $0.32$ \\
 & & & (0.017) & (16.18) & (3{,}953.7) & (256{,}540) & \\
 & $1$ & $5\%$ & $-4.168$ & $93.58$ & $-40{,}067.3$ & $2{,}418{,}344$ & $0.65$ \\
 & & & (0.016) & (15.21) & (3{,}643.5) & (234{,}548) & \\
 & $1$ & $10\%$ & $-3.889$ & $77.40$ & $-31{,}601.4$ & $1{,}854{,}138$ & $0.91$ \\
 & & & (0.016) & (16.02) & (3{,}971.5) & (259{,}592) & \\
 & $2$ & $1\%$ & $-4.886$ & $-131.59$ & $-1{,}806.1$ &  & $1.12$ \\
 & & & (0.050) & (26.98) & (2{,}670.8) &  & \\
 & $2$ & $5\%$ & $-4.427$ & $-85.13$ & $-1{,}815.4$ &  & $1.73$ \\
 & & & (0.034) & (19.43) & (2{,}035.6) &  & \\
 & $2$ & $10\%$ & $-4.314$ & $54.05$ & $-30{,}107.2$ & $1{,}795{,}004$ & $1.06$ \\
 & & & (0.034) & (31.63) & (7{,}423.8) & (471{,}713) & \\
 & $3$ & $1\%$ & $-5.280$ & $-114.07$ & $-5{,}458.9$ &  & $0.82$ \\
 & & & (0.037) & (21.14) & (2{,}239.4) &  & \\
 & $3$ & $5\%$ & $-4.878$ & $5.24$ & $-27{,}195.9$ & $1{,}628{,}214$ & $0.38$ \\
 & & & (0.015) & (14.02) & (3{,}340.6) & (214{,}325) & \\
 & $3$ & $10\%$ & $-4.703$ & $76.14$ & $-38{,}067.7$ & $2{,}259{,}685$ & $0.75$ \\
 & & & (0.025) & (23.29) & (5{,}513.5) & (351{,}794) & \\
\textit{BIEG} & $1$ & $1\%$ & $-4.985$ & $131.51$ & $-42{,}867.8$ & $2{,}264{,}027$ & $1.12$ \\
 & & & (0.044) & (44.29) & (10{,}861.4) & (720{,}431) & \\
 & $1$ & $5\%$ & $-4.570$ & $221.50$ & $-52{,}282.2$ & $2{,}726{,}703$ & $1.38$ \\
 & & & (0.033) & (33.44) & (8{,}486.2) & (561{,}440) & \\
 & $1$ & $10\%$ & $-4.343$ & $224.74$ & $-45{,}666.3$ & $2{,}210{,}900$ & $1.12$ \\
 & & & (0.019) & (19.13) & (4{,}772.7) & (315{,}995) & \\
 & $2$ & $1\%$ & $-5.166$ & $-47.90$ & $-7{,}021.9$ &  & $0.99$ \\
 & & & (0.035) & (18.63) & (1{,}806.0) &  & \\
 & $2$ & $5\%$ & $-4.836$ & $133.63$ & $-41{,}983.0$ & $2{,}303{,}573$ & $0.79$ \\
 & & & (0.031) & (29.29) & (7{,}073.7) & (455{,}906) & \\
 & $2$ & $10\%$ & $-4.576$ & $114.58$ & $-31{,}056.1$ & $1{,}587{,}288$ & $1.10$ \\
 & & & (0.035) & (31.62) & (7{,}391.9) & (468{,}281) & \\
 & $3$ & $1\%$ & $-5.425$ & $-113.09$ & $-4{,}008.9$ &  & $0.48$ \\
 & & & (0.020) & (12.02) & (1{,}250.3) &  & \\
 & $3$ & $5\%$ & $-4.968$ & $-53.95$ & $-3{,}938.5$ &  & $0.81$ \\
 & & & (0.018) & (9.97) & (1{,}038.1) &  & \\
 & $3$ & $10\%$ & $-4.830$ & $64.00$ & $-27{,}143.5$ & $1{,}452{,}916$ & $0.65$ \\
 & & & (0.021) & (19.68) & (4{,}643.5) & (298{,}730) & \\
\textit{GIEG} & $1$ & $1\%$ & $-5.443$ & $-217.52$ & $-8{,}916.9$ &  & $1.83$ \\
 & & & (0.047) & (34.21) & (4{,}488.2) &  & \\
 & $1$ & $5\%$ & $-5.028$ & $-174.92$ & $-3{,}601.0$ &  & $2.71$ \\
 & & & (0.033) & (21.29) & (2{,}495.5) &  & \\
 & $1$ & $10\%$ & $-4.945$ & $-1.67$ & $-42{,}770.4$ & $2{,}708{,}621$ & $3.36$ \\
 & & & (0.054) & (56.12) & (14{,}291.2) & (953{,}053) & \\
 & $2$ & $1\%$ & $-5.870$ & $-257.18$ & $-8{,}133.6$ &  & $0.87$ \\
 & & & (0.037) & (23.06) & (2{,}718.3) &  & \\
 & $2$ & $5\%$ & $-5.396$ & $-223.20$ & $-1{,}671.0$ &  & $0.94$ \\
 & & & (0.021) & (12.20) & (1{,}349.3) &  & \\
 & $2$ & $10\%$ & $-5.172$ & $-206.65$ & $-369.2$ &  & $1.32$ \\
 & & & (0.021) & (11.62) & (1{,}239.9) &  & \\
 & $3$ & $1\%$ & $-6.333$ & $-189.99$ & $-22{,}260.7$ &  & $1.05$ \\
 & & & (0.048) & (28.81) & (3{,}174.5) &  & \\
 & $3$ & $5\%$ & $-5.809$ & $-195.88$ & $-8{,}688.3$ &  & $1.47$ \\
 & & & (0.030) & (18.75) & (2{,}232.9) &  & \\
 & $3$ & $10\%$ & $-5.549$ & $-203.43$ & $-4{,}090.8$ &  & $1.89$ \\
 & & & (0.030) & (16.97) & (1{,}970.9) &  & \\
\midrule
\multicolumn{8}{l}{\textit{Panel B: Case $\mathrm{ct}$}} \\
\midrule
\textit{FIEG} & $1$ & $1\%$ & $-5.043$ & $15.32$ & $-42{,}793.3$ & $2{,}856{,}103$ & $0.95$ \\
 & & & (0.042) & (41.20) & (10{,}060.6) & (651{,}783) & \\
 & $1$ & $5\%$ & $-4.542$ & $47.80$ & $-38{,}713.4$ & $2{,}458{,}907$ & $1.20$ \\
 & & & (0.030) & (29.74) & (7{,}275.2) & (473{,}459) & \\
 & $1$ & $10\%$ & $-4.319$ & $68.30$ & $-36{,}050.1$ & $2{,}147{,}638$ & $1.07$ \\
 & & & (0.019) & (18.56) & (4{,}471.2) & (289{,}152) & \\
 & $2$ & $1\%$ & $-5.436$ & $15.77$ & $-40{,}112.1$ & $2{,}412{,}202$ & $0.74$ \\
 & & & (0.060) & (58.67) & (14{,}200.8) & (917{,}217) & \\
 & $2$ & $5\%$ & $-4.836$ & $-21.33$ & $-23{,}126.6$ & $1{,}379{,}299$ & $1.10$ \\
 & & & (0.045) & (43.06) & (10{,}314.2) & (662{,}801) & \\
 & $2$ & $10\%$ & $-4.591$ & $-22.27$ & $-17{,}159.3$ & $983{,}975$ & $1.00$ \\
 & & & (0.033) & (30.44) & (7{,}228.5) & (463{,}002) & \\
 & $3$ & $1\%$ & $-5.521$ & $-171.87$ & $-2{,}468.2$ &  & $1.11$ \\
 & & & (0.047) & (29.89) & (3{,}659.0) &  & \\
 & $3$ & $5\%$ & $-5.092$ & $-118.50$ & $-828.5$ &  & $1.10$ \\
 & & & (0.024) & (12.81) & (1{,}210.5) &  & \\
 & $3$ & $10\%$ & $-4.866$ & $-96.29$ & $-980.7$ &  & $1.34$ \\
 & & & (0.023) & (12.83) & (1{,}283.5) &  & \\
\textit{BIEG} & $1$ & $1\%$ & $-5.137$ & $-0.21$ & $-33{,}853.3$ & $1{,}989{,}958$ & $1.18$ \\
 & & & (0.053) & (56.75) & (14{,}887.7) & (1{,}001{,}095) & \\
 & $1$ & $5\%$ & $-4.762$ & $126.70$ & $-47{,}937.8$ & $2{,}734{,}736$ & $1.62$ \\
 & & & (0.036) & (36.19) & (8{,}872.6) & (581{,}905) & \\
 & $1$ & $10\%$ & $-4.524$ & $148.47$ & $-47{,}630.3$ & $2{,}686{,}792$ & $1.68$ \\
 & & & (0.032) & (30.50) & (7{,}363.8) & (479{,}100) & \\
 & $2$ & $1\%$ & $-5.292$ & $-157.50$ & $-1{,}889.7$ &  & $0.91$ \\
 & & & (0.042) & (22.78) & (2{,}433.6) &  & \\
 & $2$ & $5\%$ & $-4.974$ & $26.54$ & $-33{,}101.3$ & $2{,}024{,}194$ & $0.91$ \\
 & & & (0.039) & (37.42) & (9{,}100.2) & (589{,}596) & \\
 & $2$ & $10\%$ & $-4.764$ & $60.27$ & $-32{,}321.0$ & $1{,}857{,}014$ & $0.93$ \\
 & & & (0.030) & (27.66) & (6{,}624.7) & (432{,}727) & \\
 & $3$ & $1\%$ & $-5.738$ & $-115.29$ & $-4{,}766.6$ &  & $0.88$ \\
 & & & (0.038) & (22.16) & (2{,}471.8) &  & \\
 & $3$ & $5\%$ & $-5.225$ & $-80.73$ & $-3{,}393.2$ &  & $0.77$ \\
 & & & (0.017) & (9.14) & (876.1) &  & \\
 & $3$ & $10\%$ & $-4.978$ & $-63.89$ & $-3{,}036.4$ &  & $1.43$ \\
 & & & (0.025) & (14.12) & (1{,}455.8) &  & \\
\textit{GIEG} & $1$ & $1\%$ & $-5.635$ & $-456.98$ & $30{,}907.1$ & $-2{,}547{,}899$ & $1.17$ \\
 & & & (0.053) & (55.79) & (14{,}908.8) & (1{,}035{,}417) & \\
 & $1$ & $5\%$ & $-5.322$ & $-267.53$ & $-849.0$ &  & $2.56$ \\
 & & & (0.032) & (20.21) & (2{,}476.1) &  & \\
 & $1$ & $10\%$ & $-5.098$ & $-250.30$ & $1{,}552.8$ &  & $3.07$ \\
 & & & (0.030) & (19.02) & (2{,}101.2) &  & \\
 & $2$ & $1\%$ & $-6.241$ & $-227.20$ & $-18{,}697.5$ &  & $1.07$ \\
 & & & (0.043) & (29.43) & (3{,}491.9) &  & \\
 & $2$ & $5\%$ & $-5.626$ & $-347.95$ & $20{,}974.4$ & $-1{,}808{,}307$ & $0.68$ \\
 & & & (0.029) & (26.72) & (6{,}432.1) & (436{,}249) & \\
 & $2$ & $10\%$ & $-5.494$ & $-241.75$ & $-2{,}093.8$ &  & $1.08$ \\
 & & & (0.018) & (10.58) & (1{,}134.6) &  & \\
 & $3$ & $1\%$ & $-6.555$ & $-209.62$ & $-26{,}860.9$ &  & $1.28$ \\
 & & & (0.053) & (35.50) & (4{,}223.7) &  & \\
 & $3$ & $5\%$ & $-6.117$ & $-207.86$ & $-13{,}768.7$ &  & $1.10$ \\
 & & & (0.024) & (14.19) & (1{,}688.1) &  & \\
 & $3$ & $10\%$ & $-5.876$ & $-210.87$ & $-8{,}244.2$ &  & $0.59$ \\
 & & & (0.010) & (6.08) & (671.6) &  & \\
\end{longtable}
\noindent\footnotesize \textbf{Note:} $\hat\sigma$ is the weighted residual standard error defined in the text; because the weights are inverse squared Monte Carlo standard errors, $\hat\sigma \approx 1$ indicates that the surface fits the simulated quantiles to within simulation noise. The cubic term is retained when $|\hat\beta_3|$ exceeds $1.96$ standard errors and dropped otherwise.
\normalsize

\subsection{A Numerical Distribution Function}\label{sec:ndf}

Response surfaces at three conventional levels tell a user whether a statistic clears a threshold, not how much evidence it carries. Following \citet{MacKinnon1994ApproximateTests, MacKinnon1996NumericalTests}, we therefore construct a numerical distribution function, which converts any observed statistic at any sample size into an approximate $p$-value. This is done by fitting response surfaces at a dense grid of quantiles rather than at three and interpolating between them. As in MacKinnon's own tabulations, the result is distributed rather than printed: $184$ surfaces at each of the $18$ combinations of test, $N$, and deterministic case come to more than $3,000$ in all, which no set of printed tables can reasonably carry. The replication repository provides them, alongside an implementation of the three tests, so that Section~\ref{sec:application} can report $p$-values rather than only rejections.

We fit Equation~\eqref{eq:surfaceresponse}, by the same WLS and with the same rule for the cubic term, at a grid $\mathcal{P}$ of $184$ probabilities, dense in the left tail where a left-sided test lives: $36$ points below $0.01$, $75$ between $0.011$ and $0.20$, and $73$ above $0.21$.

Given an observed statistic $S^{obs}$ computed on a sample of size $T$ with $N$ regressors, the approximate $p$-value is obtained in three steps. First, evaluate the fitted critical value $\hat{c}_{s,N,p}(T)$ at every $p \in \mathcal{P}$ from Equation~\eqref{eq:surfaceresponse}, which gives $184$ points on the fitted distribution function at that sample size. Second, locate $S^{obs}$ between two adjacent points $\hat{c}_{s,N,p_{i-1}}(T) < S^{obs} \leq \hat{c}_{s,N,p_i}(T)$. Third, interpolate linearly not in $p$ but in its normal quantile,
\begin{equation}\label{eq:ndf}
    \hat{p}\left(S^{obs}, T\right) = \Phi\!\left( \Phi^{-1}(p_{i-1}) + \lambda \left[ \Phi^{-1}(p_{i}) - \Phi^{-1}(p_{i-1}) \right] \right), \quad
    \lambda = \frac{S^{obs} - \hat{c}_{s,N,p_{i-1}}(T)}{\hat{c}_{s,N,p_{i}}(T) - \hat{c}_{s,N,p_{i-1}}(T)},
\end{equation}
where $\Phi$ is the standard normal distribution function. Interpolating in $\Phi^{-1}(p)$ rather than in $p$ is the device \citet{MacKinnon1994ApproximateTests} uses to keep the approximation accurate far into the tail. Statistics beyond the ends of the grid are reported as $p < 0.0001$ or $p > 0.99$ rather than extrapolated.

Figure~\ref{fig:ndf_dist} shows the resulting function at $T = 140$, $N = 1$, in Case $\mathrm{ct}$, the configuration used in Section~\ref{sec:application}. The three diamonds mark the statistics reported there. The \textit{FIEG} and \textit{BIEG} curves are nearly indistinguishable, while the \textit{GIEG} curve shifts left by roughly $2.0$ at the $5\%$ level, which is the cost of the doubly-flexible search expressed as a distribution rather than a single number.
Equation~\eqref{eq:ndf} reproduces the quantiles it was fitted to, with mean absolute deviations of $0.0009$, $0.0023$, and $0.0035$ at the $1\%$, $5\%$, and $10\%$ levels and a maximum deviation of $0.028$. The harder test is off the grid. 
Figure~\ref{fig:ndf_cal} plots the whole empirical distribution function of those $p$-values against the $45$-degree line at $T = 140$.
\begin{figure}[H]
    \centering
    \begin{subfigure}[H]{.45\linewidth}
        \centering
        \includegraphics[width = \textwidth]{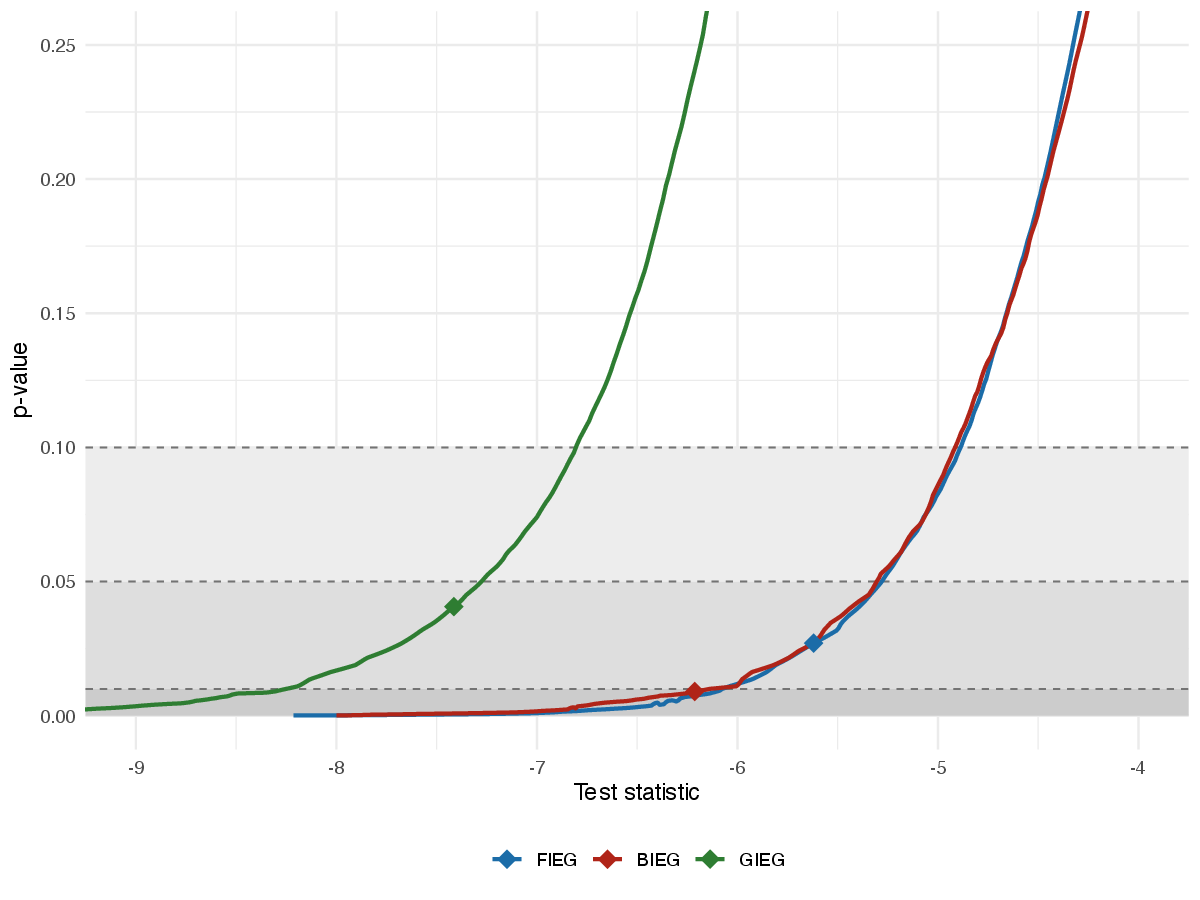}
        \caption{Case $\mathrm{ct}$, $N = 1$, and $T = 140$.}
        \label{fig:ndf_dist}
    \end{subfigure}
    \begin{subfigure}[H]{.45\linewidth}
        \centering
        \includegraphics[width = \textwidth]{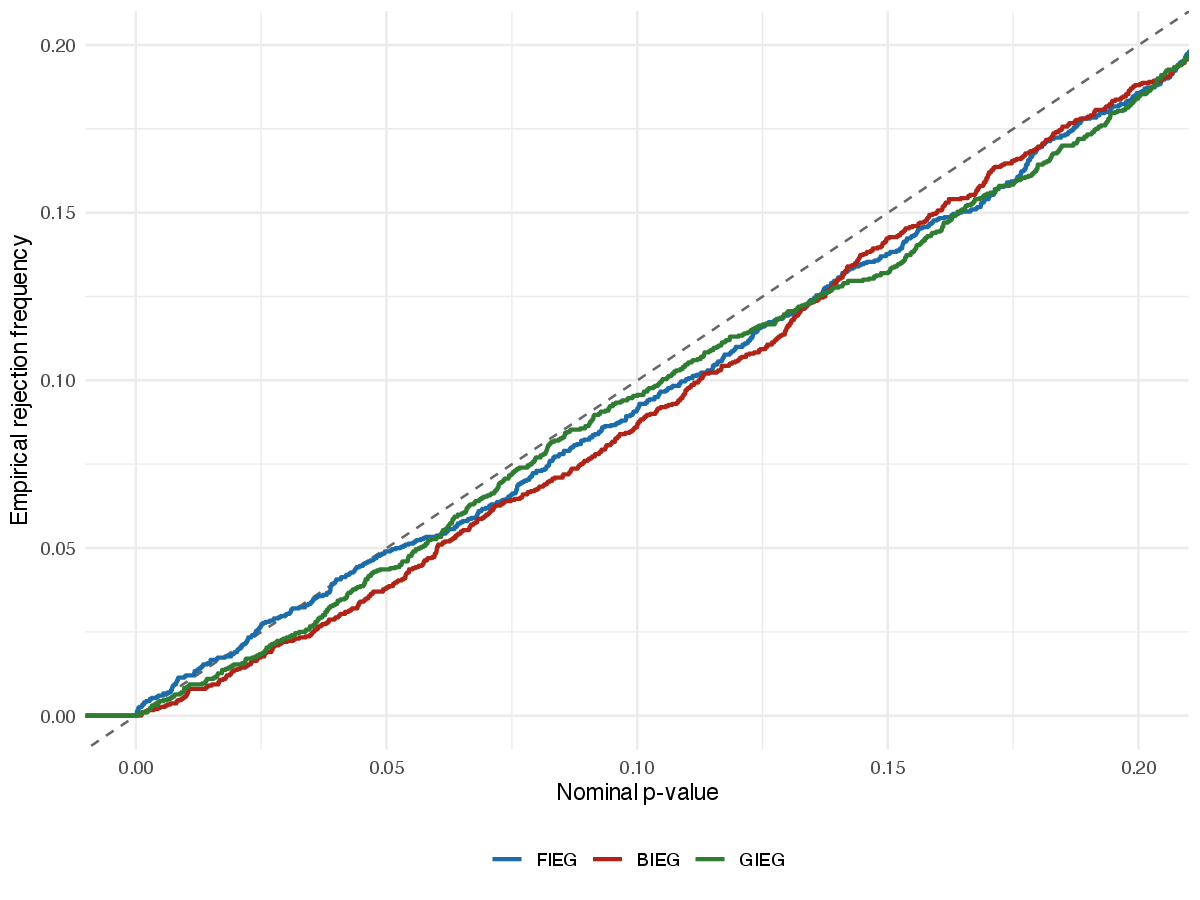}
        \caption{Calibration on $3,000$ independent null draws.}
        \label{fig:ndf_cal}
    \end{subfigure}
    \caption{The numerical distribution function. Panel (a): fitted $p$-value as a function of the observed statistic at $T = 140$, $N = 1$, Case $\mathrm{ct}$, with the dashed lines marking the $1\%$, $5\%$, and $10\%$ levels and the diamonds marking the statistics of Table~\ref{tab:application_results}. Panel (b): empirical distribution function of the $p$-values assigned by Equation~\eqref{eq:ndf} to $3{,}000$ independent null replications at $T = 140$, against the $45$-degree line.}
    \label{fig:ndf}
\end{figure}

Table~\ref{tab:ndf_calibration} reports the fraction of $3{,}000$ independent null replications assigned a $p$-value at or below each nominal level, at four sample sizes, none of which was used in fitting, a correctly calibrated function returns fractions equal to the nominal levels. 

\footnotesize
\setlength{\tabcolsep}{5pt}
\begin{longtable}{rr rrr rrr rrr}
\caption{\label{tab:ndf_calibration} Calibration of the numerical distribution function: fraction of $3{,}000$ independent null replications assigned a $p$-value at or below each nominal level. None of the four sample sizes belongs to the grid on which the response surfaces were fitted.} \\
\toprule
& & \multicolumn{3}{c}{\textit{FIEG}} & \multicolumn{3}{c}{\textit{BIEG}} & \multicolumn{3}{c}{\textit{GIEG}} \\
\cmidrule(lr){3-5}\cmidrule(lr){6-8}\cmidrule(lr){9-11}
$T$ & $N$ & $0.01$ & $0.05$ & $0.10$ & $0.01$ & $0.05$ & $0.10$ & $0.01$ & $0.05$ & $0.10$ \\
\midrule
\endfirsthead
\multicolumn{11}{l}{\small\textit{Table \thetable\ continued from previous page}} \\
\toprule
& & \multicolumn{3}{c}{\textit{FIEG}} & \multicolumn{3}{c}{\textit{BIEG}} & \multicolumn{3}{c}{\textit{GIEG}} \\
\cmidrule(lr){3-5}\cmidrule(lr){6-8}\cmidrule(lr){9-11}
$T$ & $N$ & $0.01$ & $0.05$ & $0.10$ & $0.01$ & $0.05$ & $0.10$ & $0.01$ & $0.05$ & $0.10$ \\
\midrule
\endhead
\midrule
\multicolumn{11}{r}{\small\textit{continued on next page}} \\
\endfoot
\bottomrule
\endlastfoot
\multicolumn{11}{l}{\textit{Panel A: Case $\mathrm{c}$}} \\
\midrule
$90$ & $1$ & $0.021$ & $0.076$ & $0.148$ & $0.016$ & $0.059$ & $0.113$ & $0.011$ & $0.050$ & $0.144$ \\
$90$ & $3$ & $0.009$ & $0.057$ & $0.128$ & $0.011$ & $0.047$ & $0.115$ & $0.008$ & $0.058$ & $0.120$ \\
$140$ & $1$ & $0.010$ & $0.051$ & $0.097$ & $0.009$ & $0.047$ & $0.100$ & $0.008$ & $0.046$ & $0.079$ \\
$140$ & $3$ & $0.010$ & $0.044$ & $0.090$ & $0.010$ & $0.047$ & $0.091$ & $0.009$ & $0.051$ & $0.099$ \\
$350$ & $1$ & $0.011$ & $0.049$ & $0.102$ & $0.009$ & $0.051$ & $0.102$ & $0.009$ & $0.048$ & $0.108$ \\
$350$ & $3$ & $0.011$ & $0.052$ & $0.103$ & $0.008$ & $0.048$ & $0.095$ & $0.008$ & $0.050$ & $0.106$ \\
$600$ & $1$ & $0.011$ & $0.052$ & $0.108$ & $0.014$ & $0.058$ & $0.105$ & $0.011$ & $0.050$ & $0.102$ \\
$600$ & $3$ & $0.012$ & $0.051$ & $0.096$ & $0.008$ & $0.049$ & $0.104$ & $0.010$ & $0.049$ & $0.100$ \\
\midrule
\multicolumn{11}{l}{\textit{Panel B: Case $\mathrm{ct}$}} \\
\midrule
$90$ & $1$ & $0.023$ & $0.076$ & $0.143$ & $0.014$ & $0.062$ & $0.122$ & $0.009$ & $0.061$ & $0.127$ \\
$90$ & $3$ & $0.012$ & $0.057$ & $0.108$ & $0.016$ & $0.057$ & $0.101$ & $0.008$ & $0.048$ & $0.106$ \\
$140$ & $1$ & $0.012$ & $0.049$ & $0.091$ & $0.006$ & $0.038$ & $0.086$ & $0.008$ & $0.044$ & $0.095$ \\
$140$ & $3$ & $0.012$ & $0.050$ & $0.103$ & $0.012$ & $0.054$ & $0.104$ & $0.010$ & $0.046$ & $0.088$ \\
$350$ & $1$ & $0.007$ & $0.049$ & $0.095$ & $0.008$ & $0.050$ & $0.102$ & $0.010$ & $0.052$ & $0.102$ \\
$350$ & $3$ & $0.013$ & $0.057$ & $0.108$ & $0.008$ & $0.046$ & $0.099$ & $0.011$ & $0.049$ & $0.104$ \\
$600$ & $1$ & $0.010$ & $0.051$ & $0.101$ & $0.010$ & $0.051$ & $0.103$ & $0.010$ & $0.051$ & $0.100$ \\
$600$ & $3$ & $0.010$ & $0.045$ & $0.093$ & $0.008$ & $0.052$ & $0.099$ & $0.010$ & $0.053$ & $0.105$ \\
\end{longtable}
\vspace{-7mm}\footnotesize \textbf{Note:} Entries are empirical rejection frequencies of the $p$-value from Equation~\eqref{eq:ndf} under the null, and should equal the nominal level in the column heading. The Monte Carlo standard error of an entry is $0.002$ at the $1\%$ level, $0.004$ at $5\%$, and $0.005$ at $10\%$. $T = 90$ lies below the smallest sample size used to fit the surfaces and is included to show where the approximation fails.
\normalsize\vspace{5.5mm}

At $T = 350$ and $600$ the calibration is close to exact: the $1\%$ level returns between $0.007$ and $0.014$, the $5\%$ level between $0.045$ and $0.058$, and the $10\%$ level between $0.093$ and $0.108$, with two of $72$ entries more than two Monte Carlo standard errors from nominal. At $T = 140$ the fit is looser, and the misses are one-sided: five of $36$ entries fall more than two standard errors below their nominal level and none above, the largest being \textit{GIEG} at $10\%$ in Case $\mathrm{c}$ ($0.079$ against $0.100$) and \textit{BIEG} at $5\%$ in Case $\mathrm{ct}$ ($0.038$ against $0.050$). However, \textit{FIEG} is within two standard errors throughout. The tabulation is therefore mildly conservative at the sample size of Section~\ref{sec:application}, so the rejections reported there are, if anything, understated.

At $T = 90$ the tests over-reject by up to $0.076$ against a nominal $0.05$ for \textit{FIEG}. That sample size lies below the smallest in the fitting grid, so Equation~\eqref{eq:surfaceresponse} is extrapolated rather than interpolated, and the finite-sample terms diverge quickly below the fitted range.

The tabulation is therefore valid for $T \geq 100$, for $N \leq 3$, for the two deterministic cases of Equation~\eqref{eq:detcases}, and for $r_0 = 0.15$, a restriction that binds Section~\ref{sec:application} as well. The first three are boundaries of the present grid. However, the trimming fraction is different in kind, since it enters the definition of the statistic and so changes the null distribution rather than merely the point at which it is evaluated.

\subsection{Practical Guide: Choosing Between the Tests}\label{sec:practicalguide}

The choice among $\textit{FIEG}$, $\textit{BIEG}$, and $\textit{GIEG}$ should be guided by what the practitioner already assumes about the location of a possible break. $\textit{FIEG}$ fixes the start of the window at the beginning of the sample and varies the endpoint, which makes it a natural choice when the cointegrating relationship is believed to hold from the outset and to break down later. $\textit{BIEG}$ fixes the endpoint and varies the starting point, and is therefore more appropriate when the relationship is expected to hold currently, but not necessarily at the beginning of the sample. $\textit{GIEG}$ makes neither assumption and varies both points jointly, so it is preferred when the location of a break is unknown a priori. The flexibility of $\textit{GIEG}$ carries a computational cost, since it examines many more windows. Moreover, by Proposition~\ref{prop:consistency}, $\textit{FIEG}$ has no power against a relationship that begins part-way through the sample, and $\textit{BIEG}(r_0,1)$ has none against one that ends part-way through. Choosing the wrong direction eliminates power rather than merely reducing it. We therefore recommend $\textit{FIEG}$ or $\textit{BIEG}$ only where directional prior knowledge is genuinely available, and $\textit{GIEG}$ whenever the direction is in doubt.

The tests are also naturally applied recursively, and we would stress this as the way $\textit{GIEG}$ is intended to be used. A rejection locates a single window, not the only one. Having found one window, a practitioner should remove it and re-run $\textit{GIEG}$ on each of the segments that remain, repeating until no segment rejects or the window sizes become too short, so that the record is mapped into the sub-periods over which a long-run relationship holds. The critical values tabulated in Section~\ref{sec:responsesurface} apply to the first stage of such a scheme only, since every later segment is selected by the outcome of the stage before it; Section~\ref{sec:application} returns to this point.

\section{Application: Temperature and Sea Level}\label{sec:application}

The relationship between global sea level and surface temperature has been a subject of considerable scientific debate, particularly regarding appropriate statistical methodology. While semi-empirical models have been proposed to estimate this relationship, earlier approaches did not properly account for the dependence structure of the data \citep{Rahmstorf2007ARise}. This oversight has significant implications: when one non-stationary process is regressed on another, spurious inference is likely to be drawn \citep{Schmith2007CommentRise}.

The application of cointegration methodology is, therefore, essential for this relationship. Cointegration analysis is capable of handling the peculiarities inherent in trending time series, providing a statistically rigorous framework for understanding how sea level and temperature move together. \citet{Schmith2012StatisticalMethods} tested for a cointegration relationship between global mean sea level and global mean surface temperature. They uncover that the two series share a single common stochastic trend, meaning they have cointegration rank one. When they decompose their fitted vector error-correction model into its two single-equation adjustment relations, they find that temperature adjusts significantly to deviations from the long-run relationship, while the coefficient for sea level is not statistically significant. In the EG single-equation framework, this corresponds to treating temperature as the $y$ variable and sea level as the $x$ variable. 

Besides uncovering the significant cointegrating relationship, \citet{Schmith2012StatisticalMethods} also find that the disequilibrium error is not stable throughout the sample. This provides a natural motivating application for Cointegration by Parts tests. Rather than assuming the cointegration holds uniformly across the full sample, our tests scan multiple subsamples for the strongest evidence of cointegration, and thereby indicate where in the sample that evidence is concentrated.

We use the CSIRO reconstruction of global mean sea level [GMSL], following the Church and White methodology used by \citet{Schmith2012StatisticalMethods}, and the HadCRUT5 global mean surface temperature [GMST] anomaly series \citep{Legresy2021GlobalV1., Morice2021AnSet}. Both series are aggregated to yearly frequency and aligned, given that the CSIRO data ends in 2019, leaving us with $T = 140$ observations from 1880--2019. Further, the HadCRUT5 data is rebased from the 1961--1990 reference period to the standard 1850--1900 pre-industrial baseline. This rebasing is immaterial for the test, since the constant in Equation~\eqref{eq:model1} absorbs any change of baseline in either series; we apply it only so that the plotted levels are directly interpretable against the usual pre-industrial reference. The two time series are shown in Figure~\ref{fig:GMSL_GMTA}.

\begin{figure}[H]
    \centering
    \includegraphics[width=.8\linewidth]{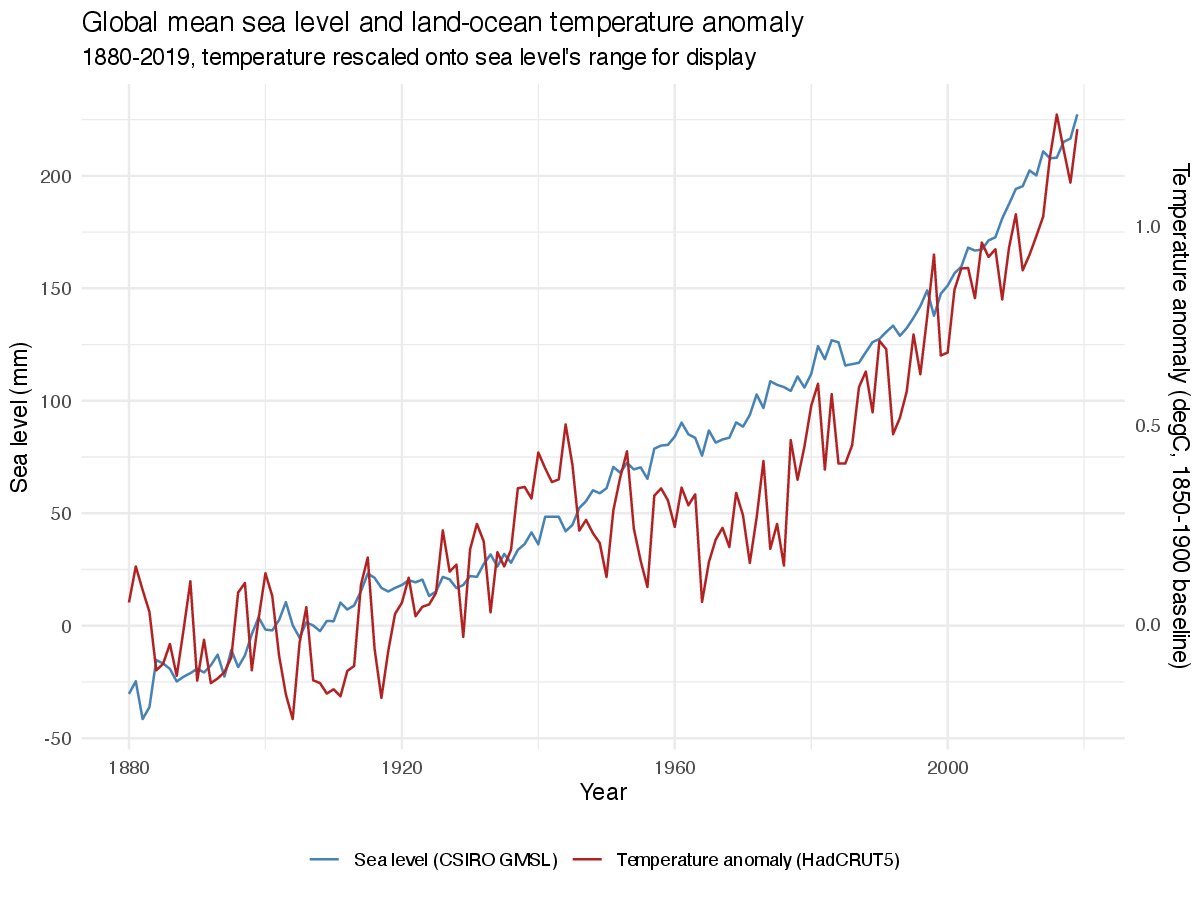}
    \caption{Yearly series of global mean sea level (measured in mm, blue line) and global surface temperature anomaly (measured in $^{\circ}$C, red line).}
    \label{fig:GMSL_GMTA}
\end{figure}

Before we can apply our Cointegration by Parts tests, we need to determine if GMSL and temperature anomalies are non-stationary. We perform ADF tests on the levels and first differences of both series, with the lag order selected by BIC as in Section~\ref{sec:simdesign}. In levels, neither series gives any evidence against a unit root, under either deterministic specification: the statistics are $2.21$ for sea level and $0.64$ for temperature with a constant, and $-0.42$ and $-1.35$ with a constant and a trend, against $5\%$ critical values of approximately $-2.88$ and $-3.44$ at $T=140$. In first differences the null is rejected decisively for both, with statistics of $-15.59$ and $-10.97$. Both series are therefore treated as $I(1)$. Following the finding of \citet{Schmith2012StatisticalMethods}, we let temperature be $y$ and sea level $x$, as the single regressor. We perform the $\textit{FIEG}(r_0)$, $\textit{BIEG}(r_0, r_2)$, and $\textit{GIEG}(r_0)$ tests using $r_0 = 0.15$, which is standard in the parameter-instability literature \citep{Andrews1993TestsPoint}, together with the lag rule of Section~\ref{sec:simdesign}. At $T = 140$ this makes the shortest admissible window $21$ observations, and $T$ is comfortably above the $T = 100$ floor established in Section~\ref{sec:ndf}.

Both series display a pronounced upward trend over the sample, which raises the question of the appropriate deterministic specification in Equation~\eqref{eq:detcases}. If $x_t$ and $y_t$ are integrated with drift, the drifts dominate the cointegrating regression under Case $\mathrm{c}$: the OLS estimator then converges to the ratio of the two drifts at rate $T^{3/2}$ rather than to the cointegrating parameter, and the Case $\mathrm{c}$ critical values, which are calibrated on driftless random walks, no longer apply. Including a linear trend, that is Case $\mathrm{ct}$, restores super-consistency of $\hat{\beta}$ and yields limits in terms of window-detrended Brownian motions; this is the distinction between deterministic and stochastic cointegration discussed by \citet{Hansen1992EfficientTrends}. We therefore treat Case $\mathrm{ct}$ as the appropriate specification for this application, and report Case $\mathrm{c}$ alongside it, tabulated under the same design in Section~\ref{sec:simulation}, as a comparison with the wider residual-based cointegration literature.

Table~\ref{tab:application_results} shows the three test statistics, the $p$-value from the numerical distribution function of Section~\ref{sec:ndf}, the corresponding response surface critical values, and the location of each extreme window, reported as calendar years.

\begin{table}[ht!]
    \centering
    \caption{\label{tab:application_results} Test statistics for the sea level and temperature relationship, $r_0 = 0.15$, $T = 140$, 1880--2019.}
    \footnotesize
    \begin{tabular}{l rr rrr r}
    \toprule
     & Statistic & $p$-value & \multicolumn{3}{c}{Critical value} & Window \\
     \cmidrule(lr){4-6}
     & & & $1\%$ & $5\%$ & $10\%$ & (Years) \\
\midrule
\multicolumn{7}{l}{\textit{Panel A: Case $\mathrm{c}$}} \\
\midrule
$\textit{FIEG}(r_0)$ & $-5.428$ & $0.011$ & $-5.479$ & $-4.663$ & $-4.273$ & 1880--2014 \\
$\textit{BIEG}(r_0,1)$ & $-5.812$ & $0.004$ & $-5.408$ & $-4.661$ & $-4.262$ & 1967--2019 \\
$\textit{GIEG}(r_0)$ & $-6.192$ & $0.090$ & $-7.452$ & $-6.461$ & $-6.152$ & 1970--2007 \\
\midrule
\multicolumn{7}{l}{\textit{Panel B: Case $\mathrm{ct}$}} \\
\midrule
$\textit{FIEG}(r_0)$ & $-5.620$ & $0.027$ & $-6.076$ & $-5.279$ & $-4.887$ & 1880--2019 \\
$\textit{BIEG}(r_0,1)$ & $-6.213$ & $0.009$ & $-6.141$ & $-5.306$ & $-4.914$ & 1967--2019 \\
$\textit{GIEG}(r_0)$ & $-7.414$ & $0.041$ & $-8.251$ & $-7.276$ & $-6.807$ & 1947--1967 \\
    \bottomrule
    \end{tabular}
    \\[4pt]
    \textbf{Note:} $N = 1$ regressor, lag order by BIC. Critical values are from Equation~\eqref{eq:surfaceresponse} with the coefficients of Table~\ref{tab:response_surface}; $p$-values are from the numerical distribution function of Equation~\eqref{eq:ndf}. Case $\mathrm{ct}$ is the specification argued for in the text.
\end{table}

Under Case $\mathrm{ct}$, all three statistics reject the null of no cointegration at the $5\%$ level, with $p$-values of $0.027$, $0.009$, and $0.041$. Under Case $\mathrm{c}$ the picture is similar for $\textit{FIEG}$ and $\textit{BIEG}$ but not for $\textit{GIEG}$, whose $p$-value of $0.090$ leaves it short of the $5\%$ level. This divergence is itself informative: the doubly-flexible statistic pays the largest critical-value penalty for its search, so it is the statistic most sensitive to using the wrong deterministic specification, and the Case $\mathrm{c}$ column is misspecified when both series drift.

The location of each window follows the pattern the construction of the statistics dictates. $\textit{FIEG}$ is constrained to begin at the start of the sample, and its extremum spans the whole period. $\textit{BIEG}$ is free to choose its own starting point and shifts to 1967--2019, discarding everything before the late 1960s. $\textit{GIEG}$, free in both endpoints, selects 1947--1967 under Case $\mathrm{ct}$.

The $\textit{GIEG}$ window is $21$ observations long, exactly the shortest the trimming admits at $r_0 = 0.15$ and $T = 140$. A minimizing window on the boundary of the search set makes the result sensitive to that choice, and Table~\ref{tab:application_robustness} confirms it: the $\textit{GIEG}$ $p$-value moves from $0.038$ at $r_0 = 0.10$ to $0.041$ at $r_0 = 0.15$ and $0.161$ at $r_0 = 0.20$. $\textit{FIEG}$ and $\textit{BIEG}$, whose windows are not at the boundary, are invariant to $r_0$ over this range. The $\textit{GIEG}$ rejection should therefore be read as suggestive rather than as firm evidence.

\begin{table}[H]
    \centering
    \caption{\label{tab:application_robustness} Robustness of the Case $\mathrm{ct}$ results to the trimming fraction and the normalization.}
    \footnotesize
    \setlength{\tabcolsep}{4pt}
    \begin{tabular}{cl rrc rrc rrc}
    \toprule
    & & \multicolumn{3}{c}{\textit{FIEG}} & \multicolumn{3}{c}{\textit{BIEG}} & \multicolumn{3}{c}{\textit{GIEG}} \\
    \cmidrule(lr){3-5}\cmidrule(lr){6-8}\cmidrule(lr){9-11}
    $r_0$ & Normalization & Stat. & $p$ & Window & Stat. & $p$ & Window & Stat. & $p$ & Window \\
    \midrule
$0.10$ & $y=$ temp. & $-5.620$ & $0.027$ & 1880--2019 & $-6.213$ & $0.009$ & 1967--2019 & $-7.457$ & $0.038$ & 1948--1965 \\
$0.10$ & $y=$ s.\,l. & $-4.858$ & $0.106$ & 1880--1893 & $-4.808$ & $0.120$ & 1993--2019 & $-5.477$ & $0.636$ & 1920--1999 \\
$0.15$ & $y=$ temp. & $-5.620$ & $0.027$ & 1880--2019 & $-6.213$ & $0.009$ & 1967--2019 & $-7.414$ & $0.041$ & 1947--1967 \\
$0.15$ & $y=$ s.\,l. & $-4.048$ & $0.359$ & 1880--1945 & $-4.808$ & $0.120$ & 1993--2019 & $-5.477$ & $0.636$ & 1920--1999 \\
$0.20$ & $y=$ temp. & $-5.620$ & $0.027$ & 1880--2019 & $-6.213$ & $0.009$ & 1967--2019 & $-6.484$ & $0.161$ & 1946--1973 \\
$0.20$ & $y=$ s.\,l. & $-4.048$ & $0.359$ & 1880--1945 & $-4.031$ & $0.345$ & 1992--2019 & $-5.477$ & $0.636$ & 1920--1999 \\
    \bottomrule
    \end{tabular}
    \\[4pt]
    \textbf{Note:} All entries are Case $\mathrm{ct}$ at $T = 140$, with the lag order selected by BIC. ``$y=$ temp.'' regresses temperature on sea level, which is the normalization implied by the weak-exogeneity finding of \citet{Schmith2012StatisticalMethods}; ``$y=$ s.\,l.'' reverses it. $p$-values are from Equation~\eqref{eq:ndf}.
\end{table}

Two features of the result deserve comment. First, the direction of the regression matters. With sea level rather than temperature on the left-hand side no statistic rejects at any conventional level, and the $\textit{GIEG}$ $p$-value rises to $0.64$. This is a known feature of single-equation residual-based testing rather than a defect of our procedure, and it is why the regressand is chosen here on the weak-exogeneity result of \citet{Schmith2012StatisticalMethods} rather than by convenience; a system analogue of the kind discussed in Section~\ref{sec:conclusion} would remove the choice altogether.

Second, and bearing more directly on what the result means, the window 1947--1967 sits on a known discontinuity in the observational record. \citet{Thompson2008ATemperature} document a spurious shift in global sea surface temperature centered on 1945, caused by a wartime change in measurement methodology and corroborated by \citet{Chan2019CorrectingWarming, Chan2019SystematicMethod}. The rejection may therefore reflect a genuine episode of tighter co-movement, or a measurement artifact that produces a mean-reverting residual over the same stretch. These two readings cannot be separated with the statistics reported here. What the procedure does deliver is the location: it identifies the stretch of the record over which the question has to be settled, which a full-sample test would average away.

We illustrate the GMSL and GMST anomalies, with the shaded area indicating the window that minimizes the $\textit{GIEG}(r_0)$ statistic under Case $\mathrm{ct}$, in Figure~\ref{fig:windows_application}.

\begin{figure}[ht!]
    \centering
    \includegraphics[width=0.8\linewidth]{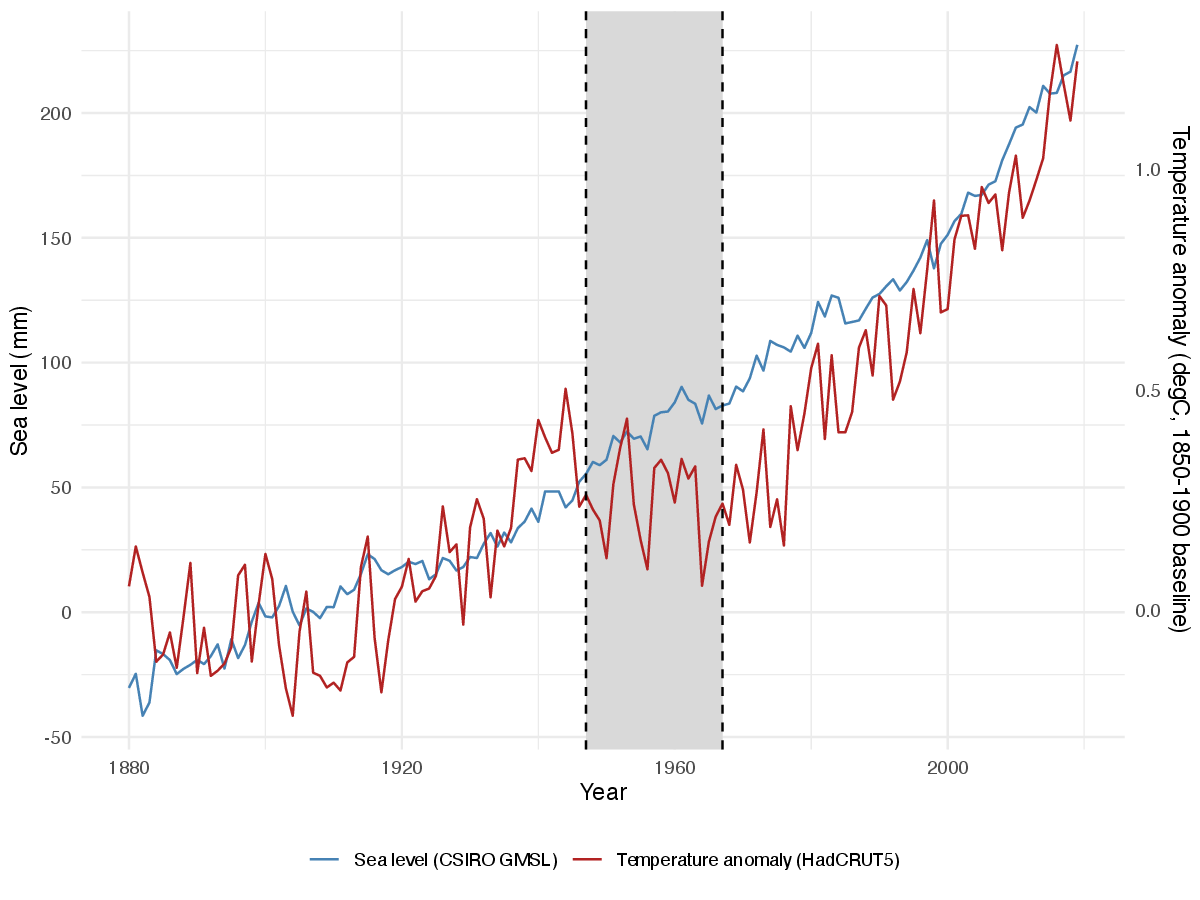}
    \caption{Yearly series of global mean sea level (measured in mm, blue line) and global mean surface temperature anomaly (measured in $^{\circ}$C, red line). The dashed vertical black lines mark the start and end of the window that minimizes the $\textit{GIEG}(r_0)$ statistic under Case $\mathrm{ct}$; that window is shaded in gray.}
    \label{fig:windows_application}
\end{figure}

As set out in Section~\ref{sec:practicalguide}, the natural next step after locating a window is to remove it and re-run the tests, and in particular $\textit{GIEG}$, on the segments that remain, asking whether a further episode of cointegration is present elsewhere in the record. The sample here does not permit it: removing 1947--1967 leaves segments of $67$ and $52$ observations, both well below the $T = 100$ floor of Section~\ref{sec:ndf}. 

Our results can be used as suggestive evidence that the long-run relationship between sea level and global mean surface temperature anomalies is concentrated in part of the record rather than uniform across it. They cannot, by themselves, be used to conclude that cointegration genuinely breaks down outside the located window: as set out in Section~\ref{sec:intro}, the null here is no cointegration anywhere, so a rejection localizes evidence and does not establish a breakdown. Nor can they establish that the located window reflects the climate system rather than the instrument record, for the reasons given above. What they do establish is that a full-sample Engle-Granger test is not the only reading available, and that the evidence in this well-studied relationship is not spread evenly across the 1880--2019 period.

\section{Conclusion}\label{sec:conclusion}

There are several real-world scenarios in which cointegration cannot be assumed to hold uniformly across a sample. The restriction that a single window spanning the whole record is the right object to test has been called into question by, e.g., \citet{Gregory1996Residual-basedShifts, Bierens2010TIME-VARYINGCOINTEGRATION}. \citet{Gregory1996Residual-basedShifts} was among the first to examine various types of structural breaks in cointegration, while \citet{Bierens2010TIME-VARYINGCOINTEGRATION} found evidence of time-varying cointegration between international prices and nominal exchange rates. Thus, there is a need for testing procedures that do not impose constant co-movement between series over the full sample. We build on the methodology developed for testing for financial bubbles, in which \citet{Phillips2011BehaviorValues} construct a test statistic from standard unit root machinery, and on \citet{Gregory1996Residual-basedShifts} for the use of an infimum-based test.

We derive the limiting distributions of the three statistics, show that they are free of nuisance parameters, and tabulate them by simulation in both deterministic specifications. From those simulations, we construct a numerical distribution function in the sense of \citet{MacKinnon1994ApproximateTests, MacKinnon1996NumericalTests}, which converts any observed statistic at any sample size into an approximate $p$-value, and which we verify is correctly calibrated at sample sizes outside the grid on which it was fitted. Proposition~\ref{prop:consistency} establishes which alternatives each statistic is consistent against: $\textit{FIEG}$ has power only when the cointegrating episode begins at the start of the sample, $\textit{BIEG}(r_0,1)$ only when it runs to the end, and $\textit{GIEG}$ in either case, at a greater computational cost.

Lastly, we apply our tests to sea level and global mean surface temperature anomalies, following the results of \citet{Schmith2012StatisticalMethods}. Under the constant-plus-trend specification that the drifting series require, all three statistics reject the null of no cointegration at the $5\%$ level, and the minimizing windows place the evidence in identifiable sub-periods rather than uniformly across the record. The window selected by $\textit{GIEG}$, 1947--1967, coincides with the bucket-to-intake discontinuity in the sea surface temperature record documented by \citet{Thompson2008ATemperature}.

Three limitations should be kept in view, and each points to a natural extension. First, the hypotheses in Equation~\eqref{eq:hypotheses} pair a null of no cointegration anywhere against an alternative of cointegration somewhere. Therefore, a rejection localizes the evidence for a long-run relationship, it does not establish that a relationship which held earlier has since broken down. Testing that claim requires the complementary pairing, with cointegration over the full sample as the null. Second, the procedure is single-equation and residual-based, so it detects at most one cointegrating relationship and inherits the dependence on the choice of regressand documented in Section~\ref{sec:application}; a system analogue, in which the rank of the cointegrating space is tested window by window in the manner of \citet{Johansen1988StatisticalVectors, Johansen1990MaximumMoney, Johansen1991EstimationModels, Johansen1995Likelihood-BasedModels}, would relax both restrictions. Ongoing work is developing these two extensions. Third, because the cointegrating vector is re-estimated within each window, a rejection is equally consistent with a relationship that holds throughout the sample but whose coefficients shift at some date, as in \citet{Gregory1996Residual-basedShifts}. Distinguishing partial cointegration from a shift in the cointegrating vector requires additional structure; as a simple first diagnostic, we recommend reporting the window-by-window estimates of $\hat{\beta}$ alongside the test statistics.

\section*{Reproducibility}

The code used for the simulation study and empirical application in Section~\ref{sec:simulation}--\ref{sec:application} can be found on GitHub (\href{https://github.com/okvist04/Cointegration-by-Parts-Econometric-Reviews-Replication}{Reproducibility Repository}). It contains an implementation of $\textit{FIEG}$, $\textit{BIEG}$, and $\textit{GIEG}$ together with the numerical distribution function of Section~\ref{sec:ndf}. That means, the $p$-values can be obtained directly, the response surface coefficients at all $184$ probabilities of the grid $\mathcal{P}$, in both deterministic cases and for every $N$, of which Table~\ref{tab:response_surface} prints only the three conventional levels. Further, the simulated minima from which they are estimated and the scripts that produce every table and figure in Sections~\ref{sec:simulation} and~\ref{sec:application} can also be obtained here.

\section*{Acknowledgments}
We thank the participants of the 9th Conference on Econometric Models of Climate Change in 2025 for their useful comments and suggestions. We extend special thanks to Felix Pretis and Charisios Grivas for their valuable insights and suggestions. 

\section*{Declaration of Interest}
The authors declare that they have no conflict of interest. 

\section*{Funding}

This research did not receive any specific grant from funding agencies in the public, commercial, or not-for-profit sectors.

\section*{CRediT Authorship Contribution Statement}

\textbf{Olivia Kvist:} Conceptualization, Methodology, Data curation, Software, Formal analysis, Investigation, Writing - original draft, Writing - review \& editing, Visualization.

\textbf{J. Eduardo Vera-Valdés:} Conceptualization, Methodology, Writing - review \& editing, Supervision, Validation. 

\section*{Declaration of Generative AI}

During the preparation of this paper, the authors utilized generative AI to correct typos and speed up computations. After using generative AI, the authors reviewed and edited the content as needed and take full responsibility for the content of the published article.

\section*{Data Availability Statement}

The data used in Section~\ref{sec:application} is freely available. The GMSL can be found on \href{http://hdl.handle.net/102.100.100/482857?index=1}{CSIRO} and the global surface temperature can be found on \href{https://www.metoffice.gov.uk/hadobs/hadcrut5/data/HadCRUT.5.1.0.0/download.html}{HadCRUT5}.

\bibliographystyle{elsarticle-harv}
\bibliography{references}

\appendix

% Numbering in the appendix. In the body the theorem-like counters print as
% \arabic{section}.<n>; elsarticle sets \thesection to "Appendix A" here, so \arabic{section}
% would restart at 1. Switch to the section letter and reset each counter at every appendix
% section, so that the results print as A.1, A.2, ..., B.1, ..., as the text refers to them.
\numberwithin{thm}{section}
\numberwithin{lemma}{section}
\numberwithin{proposition}{section}
\numberwithin{remark}{section}
\renewcommand{\thethm}{\Alph{section}.\arabic{thm}}
\renewcommand{\thelemma}{\Alph{section}.\arabic{lemma}}
\renewcommand{\theproposition}{\Alph{section}.\arabic{proposition}}
\renewcommand{\theremark}{\Alph{section}.\arabic{remark}}

\section{Limiting Objects, Supporting Lemmas, and Proofs}\label{app:theory}

This appendix collects the material supporting Theorem~\ref{thm:FIEG}: the construction of the
limiting objects, the lemmas on which all three statistics rest, the proof of the theorem itself,
the invariance of its limit to nuisance parameters, the consistency of the three search sets against
the two break directions, and the treatment of data-dependent lag selection.

It may help to have the roles of the four lemmas in view before the details. Lemma~\ref{lem:gram}
says that the window regressions never degenerate, and says it at every window at once rather than
at one window at a time. Lemma~\ref{lem:uniform} then upgrades the convergence of
$\textit{EG}(r_1,r_2)$ at a single window to convergence of the whole process indexed by the window,
which is what allows the infimum to be taken in the limit. Those two carry the null hypothesis. The
remaining two describe the alternative: Lemma~\ref{lem:clean} shows that a window lying inside the
stationary segment sends the statistic to $-\infty$, and Lemma~\ref{lem:tight} that a family of
windows held away from that segment stays bounded. Proposition~\ref{prop:consistency} then reads off
which of the two applies to each of the three search sets. A reader willing to take the machinery on
trust may go directly from Equation~\eqref{eq:Qfunctional}, which names the limit, to
Proposition~\ref{prop:consistency}.

Throughout, $\textit{EG}(r_1,r_2)$ is the window statistic of Section~\ref{sec:theory}, written out here in
full. The estimator of Equation~\eqref{eq:recursive_beta} and the residuals of
Equation~\eqref{eq:recursive_residual} are computed on $[\lfloor Tr_1\rfloor+1,\,\lfloor Tr_2\rfloor]$ in place
of $[1,\lfloor Tr\rfloor]$,
\begin{align}\label{eq:beta_window}
    \hat{\delta}_{r_1,r_2} = \begin{pmatrix}\hat{\theta}_{r_1,r_2} \\ \hat{\beta}_{r_1,r_2}\end{pmatrix}
      & = \left( \sum_{t=\lfloor Tr_1\rfloor+1}^{\lfloor Tr_2\rfloor} z_t z_t^\top \right)^{-1}
          \sum_{t=\lfloor Tr_1\rfloor+1}^{\lfloor Tr_2\rfloor} z_t y_t , \\
    \hat{\varepsilon}_{t,r_1,r_2} & = y_t - \hat{\theta}_{r_1,r_2}^\top d_t - \hat{\beta}_{r_1,r_2}^\top x_t ,
      \nonumber
\end{align}
for $(r_1,r_2)\in\mathcal{R}(r_0)$; the ADF regression of Equation~\eqref{eq:residmodel} is run on
$\hat{\varepsilon}_{t,r_1,r_2}$ over the same window, and
$\textit{EG}(r_1,r_2) := \hat{\gamma}_{r_1,r_2}/\operatorname{se}(\hat{\gamma}_{r_1,r_2})$, the recursive case
being $r_1 = 0$.

\subsection{The Limiting Objects}\label{app:limitobjects}

Because the estimator in Equation~\eqref{eq:recursive_beta} includes the deterministic regressors, the limiting objects are built from the residuals of a projection on $d$ rather than from the Brownian motions themselves. Let $d(\cdot)$ denote the limit of the deterministic regressors, so that $d(s) = 1$ in Case $\mathrm{c}$ and $d(s) = (1,s)^\top$ in Case $\mathrm{ct}$. For a window $(r_1,r_2)\in\mathcal{R}(r_0)$ and any scalar process $B$ on $[r_1,r_2]$, define the $L^2[r_1,r_2]$-projection residual
\begin{equation}\label{eq:projection}
    B^{\perp}(s) := B(s) - d(s)^\top \left[\int_{r_1}^{r_2} d(u) d(u)^\top \mathrm{d}u \right]^{-1} \int_{r_1}^{r_2} d(u) B(u)\, \mathrm{d}u, \qquad s\in[r_1,r_2],
\end{equation}
suppressing the dependence on $(r_1,r_2)$ in the notation, and applying Equation~\eqref{eq:projection} componentwise when the process is vector-valued, as it is for $B_x$.

In Case $\mathrm{n}$, $B^{\perp} = B$; in Case $\mathrm{c}$, $B^{\perp}$ is $B$ demeaned over the window; and in Case $\mathrm{ct}$, $B^{\perp}$ is $B$ demeaned and detrended over the window. Equation~\eqref{eq:projection} is the continuous-time counterpart of partialling $d_t$ out of the regression, so that the slope estimator in Equation~\eqref{eq:recursive_beta} is the one obtained by regressing projected $y$ on projected $x$.

Under Assumption~\ref{ass:fclt}, define for every admissible window $(r_1,r_2)\in\mathcal{R}(r_0)$
\begin{align}\label{eq:limit_objects}
    b_{r_1,r_2} & := \left[\int_{r_1}^{r_2} B_x^{\perp}(u) B_x^{\perp}(u)^\top \mathrm{d}u \right]^{-1} \int_{r_1}^{r_2} B_x^{\perp}(u) B_y^{\perp}(u)\, \mathrm{d}u, \\
    B_{\hat{\varepsilon},r_1,r_2}(s) & := B_y^{\perp}(s) - B_x^{\perp}(s)^\top b_{r_1,r_2}, \qquad s \in [r_1,r_2], \nonumber \\
    \omega_{\hat{\varepsilon},r_1,r_2}^{2} & := \omega_{yy} - 2\, b_{r_1,r_2}^\top \Omega_{xy} + b_{r_1,r_2}^\top \Omega_{xx}\, b_{r_1,r_2} . \nonumber
\end{align}
Setting $a_{r_1,r_2} := (-b_{r_1,r_2}^\top,\, 1)^\top$, the variance rate is the quadratic form $\omega_{\hat{\varepsilon},r_1,r_2}^{2} = a_{r_1,r_2}^\top \Omega\, a_{r_1,r_2}$, so that
\begin{equation}\label{eq:omegabound}
    \omega_{\hat{\varepsilon},r_1,r_2}^{2} \;\geq\; \lambda_{\min}(\Omega)\,\lVert a_{r_1,r_2}\rVert^{2} \;\geq\; \lambda_{\min}(\Omega) \;>\; 0
\end{equation}
at every window and every realization, the second inequality because the last entry of $a_{r_1,r_2}$ is one and the third by the positive definiteness of $\Omega$ in Assumption~\ref{ass:fclt}(i). Equation~\eqref{eq:omegabound} is the bound used in Step~3 of the proof of Lemma~\ref{lem:uniform}. The variance rate takes the same form in all three cases of Equation~\eqref{eq:detcases}, because the projection in Equation~\eqref{eq:projection} subtracts a process of finite variation and therefore leaves quadratic variation unchanged.

Let $W_{\hat{\varepsilon},r_1,r_2} := B_{\hat{\varepsilon},r_1,r_2}/\omega_{\hat{\varepsilon},r_1,r_2}$ denote the standardization of $B_{\hat{\varepsilon},r_1,r_2}$ by its quadratic-variation rate $\omega_{\hat{\varepsilon},r_1,r_2}$. For the recursive windows of Theorem~\ref{thm:FIEG} we abbreviate $b_r := b_{0,r}$, $B_{\hat{\varepsilon},r} := B_{\hat{\varepsilon},0,r}$, $\omega_{\hat{\varepsilon},r} := \omega_{\hat{\varepsilon},0,r}$ and $W_{\hat{\varepsilon},r} := W_{\hat{\varepsilon},0,r}$. The normalization by $\omega_{\hat{\varepsilon},r_1,r_2}$ is required because the ratios appearing in Theorems~\ref{thm:FIEG}, \ref{thm:BIEG}, and \ref{thm:GIEG} are homogeneous of degree one in their argument process, whereas the $t$-statistics they approximate are scale-free.

It is convenient to name the limit itself. For $(r_1,r_2)\in\mathcal{R}(r_0)$ define
\begin{equation}\label{eq:Qfunctional}
    \mathcal{Q}(r_1,r_2) := \frac{W_{\hat{\varepsilon},r_1,r_2}(r_2)^2 - W_{\hat{\varepsilon},r_1,r_2}(r_1)^2 - (r_2-r_1)}{2\left\{\int_{r_1}^{r_2} W_{\hat{\varepsilon},r_1,r_2}(s)^2\,\mathrm{d}s\right\}^{1/2}}.
\end{equation}
Equation~\eqref{eq:Qfunctional} is the form in which the Dickey-Fuller limit is usually written for a
process of unit variance rate, and it is the form used throughout the recursive unit-root literature
\citep{Phillips2011BehaviorValues, Phillips2015TestingSP500}.

Writing the limit this way rather than as the
stochastic integral $\int_{r_1}^{r_2} W_{\hat{\varepsilon},r_1,r_2}\,\mathrm{d}W_{\hat{\varepsilon},r_1,r_2}$
is not a matter of taste here, and the reason is specific to our setting. The coefficient
$b_{r_1,r_2}$ in Equation~\eqref{eq:limit_objects} is a functional of the entire Brownian path over
$[r_1,r_2]$, so $W_{\hat{\varepsilon},r_1,r_2}$ is not adapted to the filtration generated by
$(B_x,B_y)$ and the stochastic integral is not an It\^{o} integral. No such difficulty arises for
Equation~\eqref{eq:Qfunctional}: the numerator involves only two endpoint values and the quadratic
variation, and $\langle W_{\hat{\varepsilon},r_1,r_2}\rangle_s = s$ because, by construction,
$W_{\hat{\varepsilon},r_1,r_2}$ is $B_{\hat{\varepsilon},r_1,r_2}$ divided by its own quadratic-variation
rate, and the projection in Equation~\eqref{eq:projection} subtracts a process of finite variation and
therefore leaves quadratic variation unchanged. The two expressions agree whenever both are defined,
by the pathwise It\^{o} formula, which holds for any continuous path possessing a quadratic variation
and requires no adaptedness or probabilistic structure \citep{Follmer1981CalculProbabilites}.

We therefore take Equation~\eqref{eq:Qfunctional} as the definition. Two things follow that we use below:
the limit is well defined for every window, and it is a continuous function of $(r_1,r_2)$ on
$\mathcal{R}(r_0)$, which is what the infimum in each of the three statistics requires.

\subsection{Supporting Lemmas}\label{app:uniform}

The window regressions are only well behaved if the limiting regressor Gram matrix stays away from
singularity, and because the statistics minimize over windows, it must do so at every window at once rather
than at each window separately. That is stronger than it sounds: the minimizing window is random, so
almost sure non-singularity for each fixed $(r_1,r_2)$ does not by itself rule out degeneracy somewhere on
$\mathcal{R}(r_0)$. It is nevertheless a consequence of Assumption~\ref{ass:fclt}(i) alone.

\begin{lemma}[Uniform non-degeneracy of the window Gram matrix]\label{lem:gram}
    Let Assumption~\ref{ass:fclt}(i) hold and let $r_0\in(0,1)$. Then, in each of the three cases of
    Equation~\eqref{eq:detcases},
    \begin{equation}\label{eq:uniformgram}
        \inf_{(r_1,r_2)\in\mathcal{R}(r_0)} \lambda_{\min}\!\left( \int_{r_1}^{r_2} B_x^{\perp}(u) B_x^{\perp}(u)^\top \mathrm{d}u \right) > 0 \qquad \text{almost surely},
    \end{equation}
    where $\lambda_{\min}$ denotes the smallest eigenvalue and $B_x^{\perp}$ is defined in
    Equation~\eqref{eq:projection}.
\end{lemma}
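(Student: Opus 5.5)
The argument combines continuity of the smallest eigenvalue in the window with compactness of $\mathcal{R}(r_0)$, and then handles the per-window non-degeneracy on a single full-probability event rather than window by window. Write $G(r_1,r_2) := \int_{r_1}^{r_2} B_x^{\perp}(u)B_x^{\perp}(u)^\top \mathrm{d}u$. By the Rayleigh characterization,
\[
\lambda_{\min}(G(r_1,r_2)) = \min_{\lVert c\rVert=1} \int_{r_1}^{r_2} \big(c^\top B_x^{\perp}(u)\big)^2\,\mathrm{d}u .
\]
Note that $c^\top B_x^{\perp}$ is the $L^2[r_1,r_2]$-projection residual of the scalar path $c^\top B_x$ off $d(\cdot)$, since the projection in Equation~\eqref{eq:projection} is linear.

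The plan has three steps.

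\textbf{Step 1 (continuity).} Fix a realization on which $B_x$ is continuous. The map
\[
(r_1,r_2,c)\mapsto \int_{r_1}^{r_2}\big(c^\top B_x^{\perp}(u)\big)^2\mathrm{d}u
\]
is jointly continuous on $\mathcal{R}(r_0)\times S^{N-1}$. This holds because the Gram matrix $\int_{r_1}^{r_2} d\,d^\top$ has determinant bounded below on $\mathcal{R}(r_0)$: it equals $r_2-r_1\geq r_0$ in Case c, and $(r_2-r_1)^4/12$ in Case ct. Its inverse and the projection coefficients are therefore continuous in $(r_1,r_2)$, and the integrals of a continuous path are continuous in their limits. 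Since $\mathcal{R}(r_0)\times S^{N-1}$ is compact, the infimum in Equation~\eqref{eq:uniformgram} is attained at some $(r_1^*,r_2^*,c^*)$. It therefore suffices to show the following: almost surely, for every window $(r_1,r_2)$ with $r_2-r_1\ge r_0$ and every unit $c$, the path $c^\top B_x^{\perp}$ is not identically zero on $[r_1,r_2]$.

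\textbf{Step 2 (reduction to a path property).} If $c^\top B_x^{\perp}\equiv 0$ on $[r_1,r_2]$, then $c^\top B_x(s)=d(s)^\top \kappa$ for some constant $\kappa$ and all $s\in[r_1,r_2]$. In other words, $c^\top B_x$ coincides with a polynomial of degree at most one on an interval of positive length. The paper itself uses the key fact: such a polynomial has finite variation, hence zero quadratic variation. By contrast, $c^\top B_x$ has quadratic variation $c^\top\Omega_{xx}c\,(r_2-r_1)\geq \lambda_{\min}(\Omega)\,r_0>0$, using positive definiteness of $\Omega$ (and hence of $\Omega_{xx}$) from Assumption~\ref{ass:fclt}(i). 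This is a contradiction.

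\textbf{Step 3 (the uniformity issue).} This is the main obstacle: we need one null set that serves all windows and all directions simultaneously, because the minimizer $(r_1^*,r_2^*,c^*)$ is random. I would use the standard fact that the quadratic variation of the vector process $B_x$ along dyadic partitions converges, on a single full-probability event, to $\Omega_{xx}\,t$ for all $t$ simultaneously. One obtains this on each rational $t$ and then extends by monotonicity of the diagonal entries, with polarization for the off-diagonal ones. On that event, for every $c$ and every interval $[r_1,r_2]$, the quadratic variation of $c^\top B_x$ over the interval equals $c^\top\Omega_{xx}c\,(r_2-r_1)$, by bilinearity. A degree-one polynomial has zero quadratic variation along those same partitions.

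Hence, on this event, Step 2 applies at $(r_1^*,r_2^*,c^*)$, so the attained minimum is strictly positive. This gives Equation~\eqref{eq:uniformgram} in all three deterministic cases. Case n is the special case $d\equiv 0$, where $c^\top B_x\equiv 0$ on the interval already contradicts the quadratic variation.
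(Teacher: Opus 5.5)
Your proposal is correct and follows the paper's proof essentially step for step. The shared steps are uniform invertibility of the deterministic Gram matrix (via $\det = L^4/12$ in Case $\mathrm{ct}$), then continuity plus compactness to reduce to positivity at every window, and finally a contradiction between the zero quadratic variation of an element of the span of $d$ and the quadratic variation $(r_2-r_1)\,c^\top\Omega_{xx}c$ of $c^\top B_x$. That contradiction holds on a single full-probability event for all windows and directions at once; the only difference is that you construct this event explicitly (dyadic partitions, rational times, monotonicity and polarization), whereas the paper simply posits an event $\mathcal{N}$ on which $\langle B_x\rangle_u = u\,\Omega_{xx}$ for all $u$.
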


\begin{proof}[Proof of Lemma~\ref{lem:gram}]
    Write
    \begin{equation*}
        M(r_1,r_2) := \int_{r_1}^{r_2} B_x^{\perp}(u)B_x^{\perp}(u)^\top \mathrm{d}u, \qquad
        G(r_1,r_2) := \int_{r_1}^{r_2} d(u)d(u)^\top \mathrm{d}u,
    \end{equation*}
    and set $L := r_2 - r_1 \geq r_0$. The argument has three steps.

    \emph{Step 1. The deterministic Gram matrix is uniformly invertible.} In Case $\mathrm{n}$ there is nothing to
    prove, since $B_x^{\perp} = B_x$. In Case $\mathrm{c}$, $G = L \geq r_0$. In Case $\mathrm{ct}$ a direct
    computation gives $\det G = L^{4}/12$, while $\operatorname{tr} G = L + (r_2^3-r_1^3)/3 \leq 2$ for
    $r_1,r_2\in[0,1]$, so that
    $\lambda_{\min}(G) = \det G / \lambda_{\max}(G) \geq \det G / \operatorname{tr} G \geq r_0^{4}/24$.
    Hence $\sup_{\mathcal{R}(r_0)}\lVert G^{-1}\rVert < \infty$ and $(r_1,r_2)\mapsto G(r_1,r_2)^{-1}$ is
    continuous on $\mathcal{R}(r_0)$. This is the first of the two places where the trimming is used.

    \emph{Step 2. Continuity and compactness.} On the almost sure event that $B_x$ has continuous sample paths, the
    projection coefficient $C(r_1,r_2) := G(r_1,r_2)^{-1}\int_{r_1}^{r_2} d(u)B_x(u)^\top \mathrm{d}u$, read as $C \equiv 0$ in Case $\mathrm{n}$, is
    continuous on $\mathcal{R}(r_0)$ by Step~1, hence so is
    $B_x^{\perp}(\cdot) = B_x(\cdot) - C(r_1,r_2)^\top d(\cdot)$ jointly in its argument and in the window, and
    hence so is $M$. The set $\mathcal{R}(r_0)$ is compact and $\lambda_{\min}$ is a continuous function of a
    symmetric matrix, so $\lambda_{\min}(M(\cdot,\cdot))$ attains its infimum on $\mathcal{R}(r_0)$. It is
    therefore enough to show that the infimum is attained at a positive value, that is, that
    $\lambda_{\min}(M(r_1,r_2))>0$ at every window, on a single almost sure event.

    \emph{Step 3. Positivity at every window simultaneously.} Let $\mathcal{N}$ denote the almost sure event on which
    $B_x$ has continuous sample paths and quadratic variation $\langle B_x\rangle_u = u\,\Omega_{xx}$ for all
    $u\in[0,1]$. Fix $\omega\in\mathcal{N}$, a window $(r_1,r_2)\in\mathcal{R}(r_0)$ and a vector
    $a\in\mathbb{R}^N$ with $\lVert a\rVert = 1$, and suppose $a^\top M(r_1,r_2) a = 0$. Then
    $\int_{r_1}^{r_2}\{a^\top B_x^{\perp}(u)\}^2\mathrm{d}u = 0$, and since the integrand is continuous,
    $a^\top B_x^{\perp} \equiv 0$ on $[r_1,r_2]$. Equivalently,
    \begin{equation*}
        a^\top B_x(u) = \{C(r_1,r_2)a\}^\top d(u), \qquad u \in [r_1,r_2],
    \end{equation*}
    so that $a^\top B_x$ agrees on $[r_1,r_2]$ with an element of the span of $d$: the zero function in
    Case $\mathrm{n}$, a constant in Case $\mathrm{c}$, and an affine function in Case $\mathrm{ct}$. Such a function is of bounded variation and,
    being continuous, has zero quadratic variation on $[r_1,r_2]$. On $\mathcal{N}$, however, the quadratic
    variation of $a^\top B_x$ over $[r_1,r_2]$ equals
    \begin{equation*}
        a^\top\left\{\langle B_x\rangle_{r_2} - \langle B_x\rangle_{r_1}\right\}a = (r_2-r_1)\, a^\top \Omega_{xx} a \; \geq \; r_0\, \lambda_{\min}(\Omega_{xx}) \; > \; 0,
    \end{equation*}
    the last inequality by the positive definiteness of $\Omega$ in Assumption~\ref{ass:fclt}(i). This is a
    contradiction, so $a^\top M(r_1,r_2)a>0$ for every unit $a$, that is
    $\lambda_{\min}(M(r_1,r_2))>0$.

    The quadratic-variation identity is a single statement about the path of $B_x$, so it holds on
    $\mathcal{N}$ simultaneously for every window and every direction; this is what removes the
    difficulty that the minimizing window is random. Combining with Step~2,
    Equation~\eqref{eq:uniformgram} holds on $\mathcal{N}$.
\end{proof}

The argument uses only the quadratic variation of the limiting process, so it holds for every $N$ and in all three
deterministic cases without modification, and it needs no bound on how close to singular the Gram matrix comes:
degeneracy at any window would force a Brownian path to be affine on an interval of positive length, which the
quadratic variation forbids.

The argument also locates the role of the trimming. The restriction $r_2 - r_1 \geq r_0$ enters
twice, once in the uniform invertibility of $G$ and once through the compactness of $\mathcal{R}(r_0)$, and
Equation~\eqref{eq:uniformgram} fails at $r_0 = 0$, where windows of vanishing length drive $\lambda_{\min}(M)$ to
zero. Trimming is therefore part of the definition of the statistics, as Section~\ref{sec:simdesign} treats it.

Before stating the three theorems, we record the convergence on which all of them rest. Each statistic is an
infimum of $\textit{EG}(r_1,r_2)$ over a subset of $\mathcal{R}(r_0)$, so what is needed is not convergence of
$\textit{EG}(r_1,r_2)$ for each window separately, but convergence of the entire process indexed by
$(r_1,r_2)$. We state it once here and then apply it three times.

\begin{lemma}[Uniform convergence of the window statistic]\label{lem:uniform}
    Let $x_t$, $y_t$, and $\varepsilon_t$ satisfy Definition~\ref{def:model}, let Assumption~\ref{ass:fclt} hold,
    and let $H_0$ in Equation~\eqref{eq:hypotheses} hold. Then
    \begin{equation}\label{eq:uniformconv}
        \left\{ \textit{EG}(r_1,r_2) \right\}_{(r_1,r_2)\in\mathcal{R}(r_0)} \Rightarrow \left\{ \mathcal{Q}(r_1,r_2) \right\}_{(r_1,r_2)\in\mathcal{R}(r_0)}
    \end{equation}
    as random elements of $\ell^{\infty}(\mathcal{R}(r_0))$, the space of bounded functions on
    $\mathcal{R}(r_0)$ under the uniform metric, with $\mathcal{Q}$ as in Equation~\eqref{eq:Qfunctional}.
    Consequently, for any $\mathcal{W}\subseteq\mathcal{R}(r_0)$,
    \begin{equation*}
        \inf_{(r_1,r_2)\in\mathcal{W}}\textit{EG}(r_1,r_2) \xrightarrow{d} \inf_{(r_1,r_2)\in\mathcal{W}}\mathcal{Q}(r_1,r_2).
    \end{equation*}
\end{lemma}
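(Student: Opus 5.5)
The plan is to reduce the window statistic, uniformly in the window, to a continuous functional of the partial-sum process of Assumption~\ref{ass:fclt}(i). Then the continuous mapping theorem and the almost-sure continuity of the limit supplied by Lemma~\ref{lem:gram} give the result. Concretely, define $X_T(s) := T^{-1/2}(x_{\lfloor Ts\rfloor}, y_{\lfloor Ts\rfloor})$. Under $H_0$ this converges to $(B_x,B_y)$ in $D[0,1]^{N+1}$; by Skorokhod representation I may take the convergence to be uniform almost surely. The work is then to show three things, each uniformly over $(r_1,r_2)\in\mathcal{R}(r_0)$: (a) the window slope converges, $\hat\beta_{r_1,r_2}\to b_{r_1,r_2}$; (b) the scaled residuals converge, $T^{-1/2}\hat\varepsilon_{\lfloor Ts\rfloor,r_1,r_2}\to B_{\hat\varepsilon,r_1,r_2}(s)$, with the deterministic part handled through the projection \eqref{eq:projection}; (c) the ADF ingredients converge to their limits. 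Once the whole process converges in $\ell^\infty(\mathcal{R}(r_0))$, the corollary for $\mathcal{W}$ follows because $f\mapsto\inf_{\mathcal{W}} f$ is $1$-Lipschitz under the uniform metric.

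\emph{Step 1 (regression step).} Write the sample moments as Riemann sums of $X_T$ over the window, after partialling out $d_t$:
\[
T^{-2}\sum_t \tilde x_t\tilde x_t^\top \to M(r_1,r_2), \qquad T^{-2}\sum_t \tilde x_t\tilde y_t \to \int_{r_1}^{r_2} B_x^{\perp}B_y^{\perp}.
\]
These hold uniformly in the window, since the map from a path to its windowed integrals is uniformly continuous on $\mathcal{R}(r_0)$ with sup norm. By Lemma~\ref{lem:gram}, $\inf_{\mathcal{R}(r_0)}\lambda_{\min}(M)>0$ almost surely, so matrix inversion is uniformly continuous on the relevant range. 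This gives (a) and, after rescaling the deterministic coefficients, (b).

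\emph{Step 2 (ADF step).} I treat first the fixed-lag case, beginning with $p=0$. The $t$-ratio is built from three pieces: $T^{-1}\sum\hat\varepsilon_{t-1}\Delta\hat\varepsilon_t$, $T^{-2}\sum\hat\varepsilon_{t-1}^2$, and the residual variance. The first is the main difficulty, because it is not a continuous functional of the path. I would use the algebraic identity
\[
\sum \hat\varepsilon_{t-1}\Delta\hat\varepsilon_t = \tfrac12\Big(\hat\varepsilon_{\rm end}^2-\hat\varepsilon_{\rm start}^2-\sum(\Delta\hat\varepsilon_t)^2\Big),
\]
which mirrors the form of \eqref{eq:Qfunctional}. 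The endpoint terms converge uniformly by (b). For the last term, $\Delta\hat\varepsilon_t = a_{r_1,r_2}^\top(\varepsilon_{x,t},\varepsilon_{y,t})^\top+O_p(T^{-1/2})$ uniformly, which gives
\[
T^{-1}\sum_{\rm window}(\Delta\hat\varepsilon_t)^2\to (r_2-r_1)\, a^\top\Sigma a,
\]
where $\Sigma$ is the short-run variance. To get this uniformly I need a uniform law of large numbers for the partial sums of $u_tu_t^\top$; this is an extra condition I would have to add, or read into Assumption~\ref{ass:fclt}(i). With $p$ lags, and under the sieve regime of \citet{Said1984TestingOrder}, the lagged differences absorb the short-run dynamics. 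The $t$-statistic's normalisation then replaces $a^\top\Sigma a$ by the long-run rate $\omega^2_{\hat\varepsilon,r_1,r_2}$. The uniform bound \eqref{eq:omegabound} keeps the denominator away from zero, so dividing by $\omega$ is uniformly continuous and produces $\mathcal{Q}(r_1,r_2)$.

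\emph{Main obstacle.} The main obstacle is the uniformity of the lag-augmentation argument. Said--Dickey is a fixed-window result, so I must show that the estimated lag coefficients $\hat\psi_{i,r_1,r_2}$ converge, and that the correction term $\sum_i\hat\psi_i$ converges to its limit, uniformly over a continuum of windows. I would first try to exploit the fact that only $O(T^2)$ distinct discrete windows exist. I would combine this with maximal inequalities for the sample autocovariances of the stationary differences over windows of length at least $Tr_0$: because these windows have length proportional to $T$, the window sample moments converge uniformly. With $\bar p^3/T\to0$, this should make the sieve error $o_p(1)$ uniformly. The remaining technical step is tightness in $\ell^\infty$, and it follows from the representation itself: the limit is continuous on the compact set $\mathcal{R}(r_0)$, and the approximation error is uniform.
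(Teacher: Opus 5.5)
Your architecture is the paper's: a single functional limit for the partial sums; window moments that converge uniformly because every window is a difference of partial sums; uniform invertibility from Lemma~\ref{lem:gram}; the bound \eqref{eq:omegabound} for the standardization; the continuous-mapping theorem; and the $1$-Lipschitz infimum. Using Skorokhod instead of the continuous-mapping theorem in $\ell^\infty$ is cosmetic.

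The genuine difference is how you handle the cross-product $\sum\hat\varepsilon_{t-1}\Delta\hat\varepsilon_t$. The paper treats it as a term of stochastic-integral type and invokes uniform-in-the-upper-limit convergence of partial-sum stochastic integrals \citep{Hansen1992ConvergenceProcesses}. That limit carries the one-sided long-run covariance, which the lag augmentation later removes. You sum by parts instead, so the numerator becomes a continuous function of two endpoint values and of the window second-moment matrix of the innovations. This has three advantages:
\begin{itemize}
\item it matches \eqref{eq:Qfunctional} term by term, with no stochastic integral at any stage;
\item it needs only a law of large numbers for $u_tu_t^\top$, not the mixing conditions behind Hansen's theorem;
\item the short-run/long-run discrepancy appears transparently as the gap between $a^\top\Sigma a$ and $\omega^2_{\hat\varepsilon,r_1,r_2}$.
\end{itemize}
You are right that this law of large numbers is not contained in Assumption~\ref{ass:fclt}(i). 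The same applies to the paper: its Step~2 also needs $T^{-1}\sum_t(\Delta\hat\varepsilon_{t,r_1,r_2})^2$ to converge, and that is neither a Riemann sum of the partial-sum process nor implied by an FCLT. So the paper's claim to use only Assumption~\ref{ass:fclt} is overstated. You also need only a pointwise version. The map $s\mapsto T^{-1}\sum_{t\le Ts}(c^\top u_t)^2$ is nondecreasing with a continuous limit, so pointwise convergence is automatically uniform. Cross-products then follow by polarization, and windows are differences.

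That observation also dissolves most of your ``main obstacle.'' Summation by parts extends to the lagged cross-products, because $\hat\varepsilon_{t-1}-\hat\varepsilon_{t-j-1}$ is a sum of $j$ differences. So with fixed $p$, every ingredient of the ADF $t$-ratio is a continuous function of endpoint values, $\int B_{\hat\varepsilon}^2$, and window autocovariances up to lag $p$. Uniformity over windows then follows with no counting. A union bound over the $O(T^2)$ windows would instead need per-window tail bounds of order $o(T^{-2})$, the same kind of condition Remark~\ref{rem:bic} assumes rather than proves. In the sieve regime you still need rates uniform in the lag $j\le\bar p$. Neither you nor the paper supplies them; the paper's Step~3 simply asserts a window-free rate.

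Two small repairs are needed.
\begin{itemize}
\item $\hat a-a$ has no rate, so the claimed $O_p(T^{-1/2})$ error in $\Delta\hat\varepsilon_t = a^\top u_t + O_p(T^{-1/2})$ does not hold with the limit $a$. Write $\Delta\hat\varepsilon_t=\hat a^\top u_t+O_p(T^{-1/2})$ with $\hat a:=(-\hat\beta_{r_1,r_2}^\top,1)^\top$, and pass to the limit in the quadratic form $\hat a^\top(T^{-1}\sum u_tu_t^\top)\hat a$.
\item Continuity of the $t$-ratio also requires $\inf_{\mathcal{R}(r_0)}\int_{r_1}^{r_2}B_{\hat\varepsilon,r_1,r_2}(s)^2\,\mathrm{d}s>0$ almost surely. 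This denominator is missing from your list and from the paper's three. It follows from the quadratic-variation argument of Lemma~\ref{lem:gram} applied to $(B_x^\top,B_y)^\top$, since $a_{r_1,r_2}$ has last entry one.
\end{itemize}
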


\begin{proof}[Proof of Lemma~\ref{lem:uniform}]
    The argument has four steps. Only the third uses anything specific to the present setting.

    \emph{Step 1. One functional limit, not one per window.} Let
    $\xi_T(s) := T^{-1/2}\sum_{t\leq\lfloor Ts\rfloor}(\varepsilon_{x,t}^\top,\varepsilon_{y,t})^\top$ denote the
    normalized partial-sum process. Assumption~\ref{ass:fclt}(i) gives
    $\xi_T \Rightarrow (B_x^\top,B_y)^\top$ in $D[0,1]^{N+1}$. Every quantity entering
    $\textit{EG}(r_1,r_2)$ (the window Gram matrix, the window cross-moment, the estimated coefficient, the
    recursive residuals, and the numerator and denominator of the $t$-statistic) is a function of $\xi_T$
    restricted to $[r_1,r_2]$, together with sample second moments of the increments. There is therefore a
    single weak convergence to establish, and the window index enters only through the functional applied to it.

    \emph{Step 2. Window functionals converge uniformly in the endpoints.} Writing $z_t = (d_t^\top,x_t^\top)^\top$ as in
    Equation~\eqref{eq:recursive_beta}, the normalized window moments satisfy
    \begin{equation*}
        \sup_{(r_1,r_2)\in\mathcal{R}(r_0)} \left\lVert D_T^{-1}\!\!\sum_{t=\lfloor Tr_1\rfloor+1}^{\lfloor Tr_2\rfloor}\!\! z_t z_t^\top D_T^{-1} - \int_{r_1}^{r_2} \zeta(u)\zeta(u)^\top \mathrm{d}u \right\rVert \xrightarrow{p} 0 ,
    \end{equation*}
    with $D_T := \operatorname{diag}(T^{1/2}I_q,\, T I_N)$ and $\zeta = (d^\top, B_x^\top)^\top$, and similarly for the
    cross-moment and for $T^{-1}\sum_t (\Delta\hat{\varepsilon}_{t,r_1,r_2})^2$. For the terms that are Riemann sums, this follows from the
    previous step by the continuous-mapping theorem, since $(r_1,r_2)\mapsto\int_{r_1}^{r_2}$ is continuous on
    $\mathcal{R}(r_0)$; for the terms of stochastic-integral type it follows from the convergence of the partial-sum
    stochastic integral as a process in its upper limit \citep{Hansen1992ConvergenceProcesses}, which is what
    supplies the uniformity that a pointwise argument cannot; uniformity in the lower endpoint follows from the same
    statement, since $\int_{r_1}^{r_2} = \int_0^{r_2} - \int_0^{r_1}$.
    \citet{Hansen1992TestsProcesses} uses these same uniform-in-endpoint results for recursive regressions
    with $I(1)$ regressors, which is the closest antecedent for the objects appearing here.

    \emph{Step 3. The map is continuous, uniformly over the trimmed set.} Combining the two previous steps,
    $\textit{EG}(r_1,r_2) = \Phi(\xi_T)(r_1,r_2) + o_p(1)$ uniformly on $\mathcal{R}(r_0)$, where $\Phi$ sends a
    continuous path to the right-hand side of Equation~\eqref{eq:Qfunctional}; the lag augmentation contributes the
    usual correction, which under Assumption~\ref{ass:fclt}(ii) annihilates the one-sided long-run covariance and
    replaces the innovation variance by $\omega_{\hat{\varepsilon},r_1,r_2}^2$, and does so at a rate that does not
    depend on the window because every window has length at least $r_0$. The map $\Phi$ is continuous at continuous
    paths in the uniform topology provided three denominators are bounded away from zero uniformly on
    $\mathcal{R}(r_0)$:
    \begin{enumerate}[label=(\alph*),leftmargin=2.2em,topsep=3pt,itemsep=1pt,parsep=0pt]
        \item the deterministic Gram matrix $\int_{r_1}^{r_2} d\,d^\top$, whose smallest eigenvalue is
              bounded below by a constant depending only on $r_0$ and on the case in
              Equation~\eqref{eq:detcases}, since $r_2-r_1\geq r_0$;
        \item the window Gram matrix, by Lemma~\ref{lem:gram}; and
        \item $\omega_{\hat{\varepsilon},r_1,r_2}^2$, bounded below by $\lambda_{\min}(\Omega)>0$ at every
              window by Equation~\eqref{eq:omegabound}.
    \end{enumerate}
    This is where the trimming does its
    work: at $r_0 = 0$ the first of the three bounds fails, the window regressions degenerate, and no such
    statement is available. Continuity of $\Phi$ and Step~1 give Equation~\eqref{eq:uniformconv} by the
    continuous-mapping theorem in $\ell^{\infty}(\mathcal{R}(r_0))$ \citep{VanDerVaart1996WeakProcesses}.

    \emph{Step 4. The infimum.} For any $\mathcal{W}\subseteq\mathcal{R}(r_0)$ the map $f\mapsto\inf_{\mathcal{W}} f$ is
    $1$-Lipschitz with respect to the uniform norm, hence continuous, and a second application of the
    continuous-mapping theorem gives the final claim.
\end{proof}

Lemmas~\ref{lem:gram} and~\ref{lem:uniform} together derive the functional convergence from
Assumption~\ref{ass:fclt}: a functional central limit theorem with a non-singular long-run covariance matrix, and a
deterministic lag rule. The second step of the proof invokes, rather than re-derives, the classical
uniform-in-endpoint convergence results for $I(1)$ regression functionals cited there.

Each ingredient has an antecedent. \citet{Phillips2011BehaviorValues, Phillips2015TestingSP500} take suprema of
recursive unit-root statistics over the same doubly-indexed, trimmed window set, although their statistic is a
functional of the observed series with no parameter estimated inside the window, so their limit map is continuous
and no analogue of Lemma~\ref{lem:gram} arises. \citet{Gregory1996Residual-basedShifts} take an infimum of a
residual-based statistic in which the cointegrating vector is estimated, which is the feature that generates
the window Gram matrix, but over a single index. \citet{Andrews1993TestsPoint} supplies the architecture of a
trimmed index set followed by a continuous-mapping step for an extremum over that set, and
\citet{Banerjee1992RecursiveEvidence} the recursive minimum Dickey-Fuller statistic. The statistics studied here
combine the two features: a doubly-indexed infimum of a statistic with a window-specific estimated coefficient.

Lemma~\ref{lem:uniform} is stated under $H_0$, and the behavior under $H_1$ needs two further
consequences of Assumption~\ref{ass:fclt}. The first says what happens on a window contained in the
stationary segment, and is what drives every consistency statement in the paper. The second says what
happens on windows kept away from that segment.

\begin{lemma}[Divergence on a clean window]\label{lem:clean}
    Let Definition~\ref{def:model} and Assumption~\ref{ass:fclt} hold, let $\varepsilon_t$ satisfy
    Definition~\ref{def:coint} with stationary segment $\mathcal{S}$ as in Equation~\eqref{eq:statseg}, and
    let $(r_1,r_2)\in\mathcal{R}(r_0)$ be \emph{clean}, that is $[r_1,r_2]\subseteq\mathcal{S}$. Then there
    is a constant $c_{r_1,r_2}>0$ such that
    \begin{equation}\label{eq:adf_divergence}
        \textit{EG}(r_1,r_2) = -c_{r_1,r_2}\sqrt{T}\,\{1+o_p(1)\} \xrightarrow{p} -\infty .
    \end{equation}
\end{lemma}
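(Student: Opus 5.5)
The plan is to reduce the problem, on the clean window, to a standard Engle--Granger regression under cointegration. We then show that the ADF $t$-statistic computed from its residuals behaves like that of a stationary autoregression, which diverges at rate $\sqrt{T}$. Write $n := \lfloor Tr_2\rfloor - \lfloor Tr_1\rfloor \geq \lfloor Tr_0\rfloor$. Because $[r_1,r_2]\subseteq\mathcal{S}$, Definition~\ref{def:coint} gives that $\varepsilon_t$ is $I(0)$ for every $t$ in the window. By Definition~\ref{def:model}, $y_t = \delta^\top z_t + \varepsilon_t$ holds there with the true $\delta$, and $x_t$ is $I(1)$ with increments satisfying Assumption~\ref{ass:fclt}(i).

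\emph{Step 1 (super-consistency on the window).} Using the scaling $D_T$ from Step~2 of the proof of Lemma~\ref{lem:uniform}, write
\[
D_T(\hat{\delta}_{r_1,r_2}-\delta) = \Bigl(D_T^{-1}\textstyle\sum z_tz_t^\top D_T^{-1}\Bigr)^{-1} D_T^{-1}\textstyle\sum z_t\varepsilon_t .
\]
The Gram factor converges to $\int_{r_1}^{r_2}\zeta\zeta^\top$, which is uniformly invertible by Lemma~\ref{lem:gram} together with Step~1 of its proof. The cross term $T^{-1}\sum x_t\varepsilon_t$ is $O_p(1)$ by the standard convergence of $I(1)\times I(0)$ sample moments to a stochastic integral plus a one-sided long-run covariance; $T^{-1/2}\sum d_t\varepsilon_t=O_p(1)$ as well. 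Hence $\hat{\beta}_{r_1,r_2}-\beta=O_p(T^{-1})$ and $\hat{\theta}_{r_1,r_2}-\theta=O_p(T^{-1/2})$. It follows that
\[
\hat{\varepsilon}_{t,r_1,r_2} = \varepsilon_t - (\hat{\delta}-\delta)^\top z_t = \varepsilon_t + O_p(T^{-1/2})
\]
uniformly in $t$ over the window, since $\max_t\lVert x_t\rVert = O_p(T^{1/2})$. The same holds for first differences, with an error of order $O_p(T^{-1})$.

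\emph{Step 2 (sample moments of the ADF regression).} Replacing $\hat{\varepsilon}_{t,r_1,r_2}$ by $\varepsilon_t$ changes every normalized moment $n^{-1}\sum \hat{\varepsilon}_{t-i}\hat{\varepsilon}_{t-j}$ and $n^{-1}\sum\Delta\hat{\varepsilon}_{t-i}\hat{\varepsilon}_{t-j}$ by $o_p(1)$. This follows from Cauchy--Schwarz and Step~1. The stationary segment has positive length, so a law of large numbers for the stationary part delivers probability limits given by the autocovariances $\Gamma$ of $\varepsilon_t$. In the finite-order regime, OLS in Equation~\eqref{eq:residmodel} then yields $\hat{\gamma}_{r_1,r_2}\xrightarrow{p}\gamma^\ast := \phi-1<0$, because stationarity of the AR($p_0+1$) representation forces $\phi<1$. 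Moreover $n\,\operatorname{se}(\hat{\gamma})^2 \xrightarrow{p} \sigma^2_\nu\,[\Gamma_p^{-1}]_{11} =: v>0$, where $\Gamma_p$ is the covariance matrix of $(\varepsilon_{t-1},\Delta\varepsilon_{t-1},\dots,\Delta\varepsilon_{t-p})$. This matrix is positive definite since $\varepsilon_t$ is a nondegenerate stationary process.

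\emph{Step 3 (conclusion).} Combining the two steps,
\[
\textit{EG}(r_1,r_2) = \sqrt{n}\,\hat{\gamma}/(\sqrt{n}\operatorname{se}(\hat{\gamma})) = \sqrt{n}\,\{\gamma^\ast/\sqrt{v}+o_p(1)\}.
\]
With $n\sim T(r_2-r_1)$ this gives Equation~\eqref{eq:adf_divergence} with $c_{r_1,r_2} = \sqrt{(r_2-r_1)/v}\,\lvert\gamma^\ast\rvert$.

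The main obstacle is the sieve regime of Assumption~\ref{ass:fclt}(ii), where $p=\bar{p}(T)\to\infty$. There Step~2 needs uniform control over a growing number of lagged moments and a lower bound on $\lambda_{\min}$ of the growing covariance matrix. I would handle this as in \citet{Said1984TestingOrder}, using the bound $\bar{p}^3/T\to0$. The target is to show that $\hat{\gamma}$ converges to $\phi(1)-1$, the sum of the AR($\infty$) coefficients, which is negative, and that $n\operatorname{se}(\hat{\gamma})^2$ converges to $\sigma^2_\nu\,/\,\{\text{spectral-type constant}\}>0$. The key check is that the $O_p(T^{-1/2})$ estimation error from Step~1, summed over $\bar{p}$ regressors, remains $o_p(1)$ in operator norm. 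It should, because it contributes at most $O_p(\bar{p}/\sqrt{T})$.
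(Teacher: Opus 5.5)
For fixed $\bar{p}$ your argument is the paper's own: super-consistency on the clean window, a uniform $O_p(T^{-1/2})$ bound on $\hat{\varepsilon}_{t,r_1,r_2}-\varepsilon_t$, convergence of the ADF moments to those of $\varepsilon_t$, a negative limit for $\hat{\gamma}_{r_1,r_2}$, and a standard error of exact order $T^{-1/2}$. Your condition $\phi<1$ is the one actually needed; the paper's $\lvert\phi^{*}_{r_1,r_2}\rvert<1$ is neither required nor implied by stationarity.

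The step that fails is the sieve regime of Assumption~\ref{ass:fclt}(ii). There $n\operatorname{se}(\hat{\gamma}_{r_1,r_2})^2$ does not converge to a positive constant; it grows like $\bar{p}$. The regressors in Equation~\eqref{eq:residmodel} span the same space as the level lags $\hat{\varepsilon}_{t-1},\dots,\hat{\varepsilon}_{t-\bar{p}-1}$. In terms of the level coefficients $a$, $\gamma=\mathbf{1}^\top a-1$, so $\operatorname{se}(\hat{\gamma})^2=\hat{\sigma}_\nu^2\,\mathbf{1}^\top(Z^\top Z)^{-1}\mathbf{1}$ with $Z$ the matrix of level lags. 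The Toeplitz autocovariance matrix satisfies $g^\top\Gamma_{\bar{p}+1}g\leq 2\pi\sup_\lambda f(\lambda)\lVert g\rVert^2$, hence $\mathbf{1}^\top\Gamma_{\bar{p}+1}^{-1}\mathbf{1}\geq(\bar{p}+1)/\{2\pi\sup_\lambda f(\lambda)\}\to\infty$.

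White noise makes this explicit. Projecting $\varepsilon_{t-1}$ on $\Delta\varepsilon_{t-1},\dots,\Delta\varepsilon_{t-p}$ leaves variance $\sigma^2/(p+1)$, so $n\operatorname{se}(\hat{\gamma})^2\to p+1$ and $\textit{EG}\approx-\sqrt{n/(p+1)}$. Put differently, $\varepsilon_{t-1}$ becomes asymptotically collinear with the lagged differences. The lower bound on $\lambda_{\min}$ that you intended to establish in the ADF coordinates therefore does not exist.

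Hence in the sieve regime $\lvert\textit{EG}(r_1,r_2)\rvert$ is of exact order $\sqrt{T/\bar{p}}$. No constant $c_{r_1,r_2}$ makes Equation~\eqref{eq:adf_divergence} true as written; the exact $\sqrt{T}$ rate holds only for fixed $\bar{p}$. The paper's proof does not separate the two regimes and asserts the $T^{-1/2}$ order of the standard error unconditionally. It has the same hole and offers no way around it.

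What can be proved, and all that Proposition~\ref{prop:consistency} uses, is divergence. Work in level coordinates and let $a(L)=1-\sum_{j\geq1}a_jL^j$ be the AR($\infty$) polynomial of $\varepsilon_t$ on $\mathcal{S}$. Then $\hat{\gamma}_{r_1,r_2}\xrightarrow{p}-a(1)<0$, while $\operatorname{se}(\hat{\gamma}_{r_1,r_2})\asymp\sqrt{\bar{p}/T}$; the upper bound uses $\inf_\lambda f(\lambda)>0$. So $\textit{EG}(r_1,r_2)\xrightarrow{p}-\infty$ because $\bar{p}/T\to0$. You should either restrict the exact-rate claim to fixed $\bar{p}$ or replace $\sqrt{T}$ by $\sqrt{T/\bar{p}}$ in the sieve case.
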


\begin{proof}[Proof of Lemma~\ref{lem:clean}]
    On a clean window $\varepsilon_t$ is $I(0)$ throughout, so the window regression is a genuine
    cointegrating regression and, under the standard conditions for residual-based cointegration estimation,
    its OLS estimator is super-consistent,
    \begin{equation}\label{eq:superconsistency}
        \hat{\beta}_{r_1,r_2} - \beta = O_p(T^{-1}), \qquad \hat{\theta}_{r_1,r_2} - \theta = O_p(T^{-1/2}),
        \qquad [r_1,r_2]\subseteq\mathcal{S}.
    \end{equation}
    Super-consistency is specific to clean windows: if $[r_1,r_2]$ also covers part of an $I(1)$ segment, the
    regression is spurious over a non-vanishing fraction of the window, $\hat{\beta}_{r_1,r_2}$ converges at a
    slower rate, and the argument that follows does not apply.

    Since $\max_{t\leq T}\lVert x_t\rVert = O_p(\sqrt{T})$ for an $I(1)$ process and $d_t = O(1)$ under
    Equation~\eqref{eq:detcases}, Equation~\eqref{eq:superconsistency} bounds the estimation error in the
    window residuals uniformly in $t$,
    \begin{equation}\label{eq:residual_error}
        \max_{t\leq T}\bigl\lvert \hat{\varepsilon}_{t,r_1,r_2} - \varepsilon_t \bigr\rvert
          = \max_{t\leq T}\bigl\lvert (\hat{\theta}_{r_1,r_2}-\theta)^\top d_t
            + (\hat{\beta}_{r_1,r_2}-\beta)^\top x_t \bigr\rvert = O_p(T^{-1/2}),
    \end{equation}
    so estimating the coefficient vector does not alter the leading asymptotic behavior of the residual
    process. The level coefficient in the ADF regression of Equation~\eqref{eq:residmodel}, run on
    $\hat{\varepsilon}_{t,r_1,r_2}$ over the same window, therefore behaves as it would on $\varepsilon_t$
    itself,
    \begin{equation*}
        \hat{\gamma}_{r_1,r_2} \xrightarrow{p} \gamma^{*}_{r_1,r_2} = \phi^{*}_{r_1,r_2} - 1 < 0,
        \qquad \lvert\phi^{*}_{r_1,r_2}\rvert < 1,
    \end{equation*}
    with a standard error of exact order $T^{-1/2}$, that is
    $\operatorname{se}(\hat{\gamma}_{r_1,r_2}) = \varsigma_{r_1,r_2}\,T^{-1/2}\{1+o_p(1)\}$ for a constant
    $\varsigma_{r_1,r_2} > 0$. The window contains $\lfloor T(r_2-r_1)\rfloor$ observations with
    $r_2-r_1\geq r_0$, so its length enters $\varsigma_{r_1,r_2}$ but not the rate. Equation~\eqref{eq:adf_divergence}
    follows with $c_{r_1,r_2} = -\gamma^{*}_{r_1,r_2}/\varsigma_{r_1,r_2} > 0$. An upper bound on the standard
    error would not suffice here: it is the exact rate that turns a fixed negative probability limit for
    $\hat{\gamma}_{r_1,r_2}$ into divergence of the ratio.
\end{proof}

Lemma~\ref{lem:clean} requires containment of the window within the stationary segment, not the
presence of a stationary sub-segment inside the window. A window covering both regimes carries a unit
root over a non-vanishing fraction, and the next lemma shows that such windows contribute nothing to the
divergence, provided they are kept away from $\mathcal{S}$ uniformly.

\begin{lemma}[Tightness away from the stationary segment]\label{lem:tight}
    Let Definition~\ref{def:model} and Assumption~\ref{ass:fclt} hold and let $\varepsilon_t$ satisfy
    Definition~\ref{def:coint} with stationary segment $\mathcal{S}$. Let $\mathcal{W}\subseteq\mathcal{R}(r_0)$
    be such that, for some $\underline{\ell}>0$,
    \begin{equation}\label{eq:elowbound}
        \bigl\lvert [r_1,r_2]\setminus\mathcal{S} \bigr\rvert \;\geq\; \underline{\ell}
        \qquad\text{for every } (r_1,r_2)\in\mathcal{W}.
    \end{equation}
    Then $\sup_{(r_1,r_2)\in\mathcal{W}} \lvert\textit{EG}(r_1,r_2)\rvert = O_p(1)$.
\end{lemma}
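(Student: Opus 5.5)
The plan is to treat each window in $\mathcal{W}$ as a spurious regression over a non-vanishing $I(1)$ stretch. I would then obtain a nondegenerate functional limit for the whole family $\{\textit{EG}(r_1,r_2)\}_{\mathcal{W}}$ by the same continuous-mapping route as Lemma~\ref{lem:uniform}. Tightness of the supremum then follows from continuity of the limit on a compact set.

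The first step is to describe the error process under $H_1$ at the $\sqrt{T}$ scale. Under Definition~\ref{def:coint}, $T^{-1/2}\varepsilon_{\lfloor Ts\rfloor}$ converges to $B_\varepsilon(s)\mathbf{1}\{s\notin\mathcal{S}\}$, where $B_\varepsilon$ is the Brownian motion driving the $I(1)$ segment, restarted at $\tau_0$ under the forward break. On $\mathcal{S}$ the stationary part is $O_p(1)$ and hence negligible after scaling. Jointly with $B_x$, this limit is obtained from Assumption~\ref{ass:fclt}(i) applied to the innovations. In place of $B_y^{\perp}$ in Equation~\eqref{eq:limit_objects} I would use $(\beta^\top B_x+B_\varepsilon\mathbf{1}_{\mathcal{S}^c})^{\perp}$, and define $b_{r_1,r_2}$ and $B_{\hat\varepsilon,r_1,r_2}$ exactly as before, with the projection and regression taken over the window.

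Next I would repeat Steps~1--3 of Lemma~\ref{lem:uniform} with these objects. Step~2 carries over unchanged, because the uniform-in-endpoint convergence of Riemann sums and partial-sum stochastic integrals does not care that the integrand vanishes on part of the window. The stationary part contributes $O_p(1)$ terms that vanish after normalization. The ADF statistic is scale-free, so it remains a continuous functional of the scaled residual path. The result is $\textit{EG}(r_1,r_2)\Rightarrow\tilde{\mathcal{Q}}(r_1,r_2)$ in $\ell^\infty(\mathcal{W})$. Here $\tilde{\mathcal{Q}}$ is a Dickey--Fuller-type ratio in which the denominator is the normalized $\int_{r_1}^{r_2}B_{\hat\varepsilon}^2$ and the numerator involves the endpoint values and the quadratic variation. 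That quadratic variation is now accumulated only over $[r_1,r_2]\setminus\mathcal{S}$. An extra term also appears from the jump, or restart, at $\tau_0$, together with the correlation between the stationary part and its lags, which the ADF lags absorb. Once the process converges, $\sup_{\mathcal{W}}|\textit{EG}|$ converges in distribution to $\sup_{\mathcal{W}}|\tilde{\mathcal{Q}}|$, which is an almost surely finite random variable. This gives the $O_p(1)$ claim, and I would take the closure of $\mathcal{W}$ if needed so that the index set is compact.

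The main obstacle is the uniform nondegeneracy of the denominators, and this is exactly where condition~\eqref{eq:elowbound} enters. The OLS standard error of $\hat\gamma$ scales like $\{\sum\hat\varepsilon_{t-1}^2\}^{-1/2}$ times the residual variance of the ADF regression. Two quantities must therefore stay bounded below, uniformly over $\mathcal{W}$. The first is $T^{-2}\sum\hat\varepsilon_{t-1}^2\to\int_{r_1}^{r_2}B_{\hat\varepsilon}^2$. The second is the innovation variance rate, which must remain positive at the $I(1)$ scale.

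For the first quantity I would adapt Step~3 of Lemma~\ref{lem:gram}. Suppose $\int B_{\hat\varepsilon}^2=0$. Then $B_{\hat\varepsilon}\equiv0$ on the window, so $B_\varepsilon\mathbf{1}_{\mathcal{S}^c}$ would lie in the span of $B_x$ and $d$ on $[r_1,r_2]\setminus\mathcal{S}$. That set has length at least $\underline\ell$. But the quadratic variation of $B_\varepsilon-a^\top B_x$ there is at least $\underline\ell\,\lambda_{\min}(\Omega)>0$, which is a contradiction. As in Lemma~\ref{lem:gram}, this holds on a single almost-sure event and simultaneously for all windows, and compactness then gives a uniform positive lower bound.

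For the second quantity I would use the analogue of Equation~\eqref{eq:omegabound}. Without~\eqref{eq:elowbound}, windows approaching clean ones would let the numerator scale collapse toward the $\sqrt{T}$ divergence of Lemma~\ref{lem:clean}. That is precisely what the uniform gap $\underline\ell$ rules out.
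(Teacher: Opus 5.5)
Your $O_p(1)$ conclusion is right, but the first step of your route fails as stated, and the paper avoids that step on purpose. You propose to obtain $\textit{EG}\Rightarrow\tilde{\mathcal{Q}}$ in $\ell^{\infty}(\mathcal{W})$ by the continuous-mapping route of Lemma~\ref{lem:uniform}. The paper claims only tightness, because under a reverse break the limit of $T^{-1/2}\varepsilon_{\lfloor Ts\rfloor}$ jumps at $\tau_0$, and that jump breaks uniform convergence in the window index. A window with $r_1<\tau_0$ picks up the $O_p(\sqrt{T})$ increment at $t=\lfloor T\tau_0\rfloor$ as soon as $r_2\geq\lfloor T\tau_0\rfloor/T$, whereas $\tilde{\mathcal{Q}}$ picks it up only at $r_2=\tau_0$. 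On that band, of width below $1/T$ and non-empty unless $T\tau_0$ is an integer, the normalized end value of the residual and the normalized sum of squared increments both differ by a non-vanishing $O_p(1)$ amount. Hence $\sup_{\mathcal{W}}\lvert\textit{EG}-\tilde{\mathcal{Q}}\rvert$ does not tend to zero whenever $\mathcal{W}$ contains such windows; the $\textit{FIEG}$ set does under a reverse break with $\tau_0>r_0$. For the same reason $\tilde{\mathcal{Q}}$ is not continuous on $\overline{\mathcal{W}}$, so continuity on a compact set is not what makes $\sup_{\mathcal{W}}\lvert\tilde{\mathcal{Q}}\rvert$ finite.

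Your description of the limit is also off. On a non-clean window $\hat{\beta}_{r_1,r_2}-\beta$ has a non-zero limit, so on $\mathcal{S}$ the residual carries the $I(1)$ term $-(\hat{\beta}_{r_1,r_2}-\beta)^\top x_t$. Quadratic variation therefore accrues over the whole window, not only over $[r_1,r_2]\setminus\mathcal{S}$. Moreover, one lag order cannot absorb both the serial correlation of the overdifferenced increments on $\mathcal{S}$ and that of the $I(1)$ increments off it.

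None of this threatens the $O_p(1)$ claim, because no limit is needed. Bound $\sup_{\mathcal{W}}\lvert\textit{EG}\rvert$ by a uniform bound on the normalized numerator divided by a uniform lower bound on the normalized denominator. The numerator bound follows from tightness of $\xi_T$ and the Gram lower bounds, exactly as in the paper. Integral-type quantities such as $T^{-2}\sum\hat{\varepsilon}_{t-1}^{2}$ do converge uniformly over windows despite the jump, since moving a limit of integration by $O(1/T)$ changes them by $O_p(1/T)$. Your compactness argument therefore transfers to the sample quantities.

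With the limit dropped, your argument is the paper's with one genuine improvement. The paper's denominator factors are the deterministic Gram matrix, the window Gram matrix, and the ADF innovation variance, and it spends $\underline{\ell}$ on the last. Your first quantity, $T^{-2}\sum\hat{\varepsilon}_{t-1}^{2}$, is not on that list, yet it is the factor that degenerates as windows approach a clean one. As $\lvert[r_1,r_2]\setminus\mathcal{S}\rvert\to 0$, both the limit of $\hat{\beta}_{r_1,r_2}-\beta$ and $\int_{r_1}^{r_2}B_{\hat{\varepsilon},r_1,r_2}^{2}$ shrink to zero. This matches the $O_p(T)$ order of $\sum\hat{\varepsilon}_{t-1}^{2}$ on a clean window, which produces the $\sqrt{T}$ rate in Lemma~\ref{lem:clean}. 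Your quadratic-variation argument on $[r_1,r_2]\setminus\mathcal{S}$ is exactly where $\underline{\ell}$ is needed: it runs on a single almost-sure event and is closed by compactness of $\overline{\mathcal{W}}$, which still satisfies Equation~\eqref{eq:elowbound} because $\lvert[r_1,r_2]\setminus\mathcal{S}\rvert$ is continuous in the endpoints. The innovation variance, by contrast, would stay bounded below even without $\underline{\ell}$, since overdifferenced stationary increments still have positive one-step prediction variance, as the constant $\varsigma_{r_1,r_2}>0$ in Lemma~\ref{lem:clean} presupposes. Your closing sentence attaches $\underline{\ell}$ to that second quantity and speaks of the numerator scale collapsing. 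That misplaces the mechanism, which your first-quantity argument already identifies correctly: it is the denominator that collapses.
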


\begin{proof}[Proof of Lemma~\ref{lem:tight}]
    Only tightness is claimed, not a limit, and this is what makes the argument short: Definition~\ref{def:coint}
    leaves the limiting residual process discontinuous at $\tau_0$, so no continuous-mapping statement of the kind
    used in Lemma~\ref{lem:uniform} is available, and none is needed.

    Definition~\ref{def:coint} changes the residual process, not the regressors, so $B_x$ is as under $H_0$ and
    Lemma~\ref{lem:gram} applies verbatim. The normalized partial-sum process $\xi_T$ of the proof of
    Lemma~\ref{lem:uniform} remains tight, since
    $T^{-1/2}\sum_{t\leq\lfloor Ts\rfloor}\Delta\varepsilon_t = T^{-1/2}(\varepsilon_{\lfloor Ts\rfloor}-\varepsilon_0)$
    is $O_p(1)$ uniformly in $s$ under either direction of Definition~\ref{def:coint}, and the deterministic
    increments contribute $O(T^{-1})$. Step~2 of that proof writes every ingredient of $\textit{EG}(r_1,r_2)$ as a
    functional of $\xi_T$ restricted to $[r_1,r_2]$, bounded in the uniform norm by a constant multiple of
    $\sup_{s}\lVert\xi_T(s)\rVert$; the numerator of the statistic is therefore $O_p(1)$ uniformly on
    $\mathcal{R}(r_0)$.

    It remains to bound the denominator away from zero uniformly on $\mathcal{W}$. Its three factors are
    \begin{enumerate}[label=(\alph*),leftmargin=2.2em,topsep=3pt,itemsep=1pt,parsep=0pt]
        \item the deterministic Gram matrix, bounded below by a constant depending only on $r_0$ and
              the case in Equation~\eqref{eq:detcases};
        \item the window Gram matrix, by Lemma~\ref{lem:gram}; and
        \item the estimated innovation variance of the ADF regression.
    \end{enumerate}
    For the last, Equation~\eqref{eq:elowbound} places a sub-interval of
    $[r_1,r_2]$ of length at least $\underline{\ell}$ outside $\mathcal{S}$, and over it the residual increments
    carry the unit-root innovations, whose variance conditional on the regressor increments is positive by the
    positive definiteness of $\Omega$ in Assumption~\ref{ass:fclt}(i); the window average is therefore bounded
    below by a positive multiple of $\underline{\ell}$, uniformly on $\mathcal{W}$. Hence
    $\sup_{\mathcal{W}}\lvert\textit{EG}\rvert$ is $O_p(1)$.

    Equation~\eqref{eq:elowbound} is what the argument needs, and it is strictly more than the requirement that
    $\mathcal{W}$ contain no clean window: a set of windows none of which is clean may still approach a clean
    one, and along such a sequence $\textit{EG}$ need not be bounded in probability.
    Proposition~\ref{prop:consistency} verifies Equation~\eqref{eq:elowbound} in each case where it is invoked.
\end{proof}

\subsection{Proof of the Main Theorem and Invariance of the Limit}\label{app:proofFIEG}

\begin{proof}[Proof of Theorem~\ref{thm:FIEG}]
    The proof combines the FCLT with standard results for unit-root and residual-based cointegration
    asymptotics \citep{Mann1943OnRelationships,Donsker1951AnTheorems,Dickey1979DistributionRoot,Phillips1987TowardsAutoregression,Phillips1988TestingRegression,Phillips1990AsymptoticCointegration,Phillips2015TestingDetectors}.
    Under $H_0$ the residual process is integrated of order one. Consider the recursive OLS estimator in
    Equation~\eqref{eq:recursive_beta}. Since $x_t$ and $y_t$ are integrated processes, the FCLT, together
    with the corresponding Riemann-sum convergence and the Continuous Mapping Theorem [CMT], yields
    \begin{equation}\label{eq:beta_limit}
        \hat{\beta}_r \Rightarrow b_r = \left[\int_0^r B_x^{\perp}(u) B_x^{\perp}(u)^\top \mathrm{d}u \right]^{-1} \int_0^r B_x^{\perp}(u) B_y^{\perp}(u) \mathrm{d}u, \quad r\in[r_0,1],
    \end{equation}
    and the recursive residuals in Equation~\eqref{eq:recursive_residual} satisfy the process convergence
    \begin{equation}\label{eq:recursive_residual_limit}
        T^{-1/2}\,\hat{\varepsilon}_{\lfloor Ts \rfloor, r} \Rightarrow B_{\hat{\varepsilon},r}(s), \qquad 0 \leq s \leq r,
    \end{equation}
    with $B_{\hat{\varepsilon},r}$ as defined in Equation~\eqref{eq:limit_objects}. The dependence of
    $B_{\hat{\varepsilon},r}(s)$ on the recursive endpoint $r$ reflects the fact that the coefficient used to
    construct the residual process is estimated using observations up to $\lfloor Tr \rfloor$. Applying the
    standard Dickey-Fuller argument to that process, with the lag-order condition following
    Equation~\eqref{eq:residmodel} annihilating the one-sided long-run covariance term and replacing the
    innovation variance by $\omega_{\hat{\varepsilon},r}^{2}$ in the denominator, gives
    $\textit{EG}(r) \xrightarrow{d} \mathcal{Q}(0,r)$ for each $r\in[r_0,1]$.

    Equations~\eqref{eq:beta_limit} and~\eqref{eq:recursive_residual_limit} are the case $r_1=0$ of the
    convergence established in the proof of Lemma~\ref{lem:uniform}, which upgrades the pointwise statement to
    the process indexed by the recursive endpoint. Since the $\textit{FIEG}$ search set
    $\{(0,r) : r \in [r_0,1]\}$ is a subset of $\mathcal{R}(r_0)$, the final claim of that lemma gives
    \begin{equation}\label{eq:fieg_limit_proof}
        \textit{FIEG}(r_0) = \inf_{r \in [r_0,1]}\textit{EG}(r) \xrightarrow{d} \inf_{r \in [r_0,1]} \mathcal{Q}(0,r).
    \end{equation}

    Under the alternative hypothesis $H_1$ in Equation~\eqref{eq:hypotheses}, Lemma~\ref{lem:clean} applies to
    any clean recursive window, that is to any $r\in[r_0,1]$ with $[0,r]\subseteq\mathcal{S}$, and gives
    $\textit{EG}(r)\xrightarrow{p}-\infty$; one such window is therefore enough to drive
    $\textit{FIEG}(r_0)$ to $-\infty$. Under a forward break, Equation~\eqref{eq:epsbehave},
    $\mathcal{S}=[0,\tau_0]$ and every $r\in[r_0,\tau_0]$ gives a clean window, so
    $\textit{FIEG}(r_0)\xrightarrow{p}-\infty$ whenever $\tau_0\geq r_0$. Under a reverse break,
    Equation~\eqref{eq:epsbehave_reverse}, the situation is different: $\mathcal{S}=[\tau_0,1]$ with
    $\tau_0>0$, and every recursive window $[0,r]$ contains the initial $I(1)$ segment $[0,\tau_0)$. No
    admissible window is then clean, Equation~\eqref{eq:elowbound} holds on the $\textit{FIEG}$ search set with
    $\underline{\ell}=\tau_0$, and Lemma~\ref{lem:tight} gives $\textit{FIEG}(r_0)=O_p(1)$, so the test has no
    power. This directional limitation is intrinsic to the forward-recursive design rather than a matter of
    sample size; see Proposition~\ref{prop:consistency}.
\end{proof}

\begin{lemma}[Invariance to nuisance parameters]\label{lem:pivotal}
    Let Assumption~\ref{ass:fclt} and the null hypothesis $H_0$ in Equation~\eqref{eq:hypotheses} hold, so that both blocks are exactly $I(1)$. Then the limit in Theorem~\ref{thm:FIEG} does not depend on the long-run covariance matrix $\Omega$. It depends only on $N$, on $r_0$, and on the deterministic case in Equation~\eqref{eq:detcases}.
\end{lemma}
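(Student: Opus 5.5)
The plan is to reduce the Brownian motion $(B_x^\top,B_y)^\top$ with long-run covariance $\Omega$ to a standard $(N+1)$-dimensional Brownian motion by a linear transformation. Then I will check that every object entering $\mathcal{Q}(r_1,r_2)$ in Equation~\eqref{eq:Qfunctional} is unchanged by that transformation. Since $\Omega$ is positive definite by Assumption~\ref{ass:fclt}(i), I will use the block-triangular (Cholesky-type) factorization
\begin{equation*}
B_x = \Omega_{xx}^{1/2} W_x, \qquad B_y = \Omega_{yx}\Omega_{xx}^{-1} B_x + \omega_{y\cdot x}^{1/2} W_y, \qquad \omega_{y\cdot x} := \omega_{yy} - \Omega_{yx}\Omega_{xx}^{-1}\Omega_{xy} > 0,
\end{equation*}
where $(W_x^\top,W_y)^\top$ is a standard Brownian motion on $[0,1]$. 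This is the route of \citet{Phillips1990AsymptoticCointegration} for the static residual-based test, and I intend to carry it out window by window.

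\emph{Step 1: linearity of the projection.} The map $B\mapsto B^{\perp}$ of Equation~\eqref{eq:projection} is linear and commutes with constant matrices. Hence $B_x^{\perp}=\Omega_{xx}^{1/2}W_x^{\perp}$ and $B_y^{\perp}=\Omega_{yx}\Omega_{xx}^{-1}B_x^{\perp}+\omega_{y\cdot x}^{1/2}W_y^{\perp}$ on every window.

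\emph{Step 2: the coefficient and the residual.} Substituting into $b_{r_1,r_2}$ of Equation~\eqref{eq:limit_objects} gives
\begin{equation*}
b_{r_1,r_2} = \Omega_{xx}^{-1}\Omega_{xy} + \omega_{y\cdot x}^{1/2}\,\Omega_{xx}^{-1/2}\,\tilde b_{r_1,r_2}, \qquad \tilde b_{r_1,r_2} := \Bigl[\int W_x^{\perp}W_x^{\perp\top}\Bigr]^{-1}\int W_x^{\perp}W_y^{\perp}.
\end{equation*}
This is well defined at every window by Lemma~\ref{lem:gram}. It then follows that
\begin{equation*}
B_{\hat\varepsilon,r_1,r_2}=\omega_{y\cdot x}^{1/2}\bigl(W_y^{\perp}-W_x^{\perp\top}\tilde b_{r_1,r_2}\bigr)=:\omega_{y\cdot x}^{1/2}\,\tilde W_{r_1,r_2}.
\end{equation*}

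\emph{Step 3: the variance rate.} Plugging the same expression for $b_{r_1,r_2}$ into $\omega^2_{\hat\varepsilon,r_1,r_2}=a^\top\Omega a$ should give
\begin{equation*}
\omega^2_{\hat\varepsilon,r_1,r_2}=\omega_{y\cdot x}\bigl(1+\tilde b_{r_1,r_2}^\top\tilde b_{r_1,r_2}\bigr).
\end{equation*}
This matches the quadratic-variation rate of $\omega_{y\cdot x}^{1/2}\tilde W_{r_1,r_2}$, so it can be cross-checked without any algebra. Consequently
\begin{equation*}
W_{\hat\varepsilon,r_1,r_2}=\tilde W_{r_1,r_2}\big/\bigl(1+\lVert\tilde b_{r_1,r_2}\rVert^2\bigr)^{1/2},
\end{equation*}
and $\Omega$ has disappeared. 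In particular, the scale factor $\omega_{y\cdot x}$ cancels, which reflects the degree-one homogeneity noted after Equation~\eqref{eq:limit_objects}.

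\emph{Step 4: conclusion.} Because the identities in Steps~1--3 hold pathwise and simultaneously for all $(r_1,r_2)$, the whole process $\{\mathcal{Q}(r_1,r_2)\}$ is the same measurable functional of $(W_x,W_y)$ whatever $\Omega$ is. Its law is therefore that of the $\Omega=I_{N+1}$ case, and so is the law of its infimum over $\{(0,r):r\in[r_0,1]\}$. What remains depends only on the dimension $N$ of $W_x$, on the index set fixed by $r_0$, and on $d(\cdot)$, which is fixed by the case in Equation~\eqref{eq:detcases}.

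The main obstacle I expect is Step~3. It must be confirmed that the normalization in $\mathcal{Q}$ uses exactly the quadratic-variation rate, so that the rate equals $\omega_{y\cdot x}(1+\lVert\tilde b\rVert^2)$ with no leftover $\Omega$-dependent cross term. I would do this by writing $a_{r_1,r_2}$ in the transformed coordinates, where $\Omega$ becomes the identity. A secondary point is that $\tilde b$ is random and non-adapted. This is harmless because $\mathcal{Q}$ was defined pathwise through endpoint values and $\int W^2$, so no stochastic-integral manipulation is needed.
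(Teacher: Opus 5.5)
Your proposal is correct and follows essentially the same route as the paper: decompose $B_y=\Omega_{yx}\Omega_{xx}^{-1}B_x+B_{y\cdot x}$, use linearity of the projection so that $\Omega_{xx}^{-1}\Omega_{xy}$ cancels from the residual, rescale to standard Brownian motions, and obtain $\omega^2_{\hat\varepsilon}=\omega_{y\cdot x}(1+\lVert\tilde b\rVert^2)$, so the scale cancels in $W_{\hat\varepsilon}$. Your $\omega_{y\cdot x}$ is the paper's $\omega_{y\cdot x}^2$, and otherwise the differences are cosmetic: you work on a general window $(r_1,r_2)$ and state explicitly that the identities hold pathwise for all windows at once, whereas the paper writes the argument for $r_1=0$ and says it applies verbatim on $[r_1,r_2]$.
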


\begin{proof}[Proof of Lemma~\ref{lem:pivotal}]
    The argument is a change of variables in two moves. The first partials $B_y$ on $B_x$, so that the
    nuisance covariance enters only through a term that cancels; the second rescales both processes to
    standard Brownian motions, so that the remaining scale factor cancels as well.

    Decompose $B_y = \Omega_{yx}\Omega_{xx}^{-1} B_x + B_{y\cdot x}$, where $B_{y\cdot x}$ is a scalar Brownian motion independent of $B_x$ with variance rate $\omega_{y\cdot x}^{2} = \omega_{yy} - \Omega_{yx}\Omega_{xx}^{-1}\Omega_{xy}$. The projection in Equation~\eqref{eq:projection} is linear, so the decomposition is preserved by it, $B_y^{\perp} = \Omega_{yx}\Omega_{xx}^{-1} B_x^{\perp} + B_{y\cdot x}^{\perp}$. Substituting into Equation~\eqref{eq:beta_limit} gives
    \begin{equation*}
        b_r = \Omega_{xx}^{-1}\Omega_{xy} + \left[\int_0^r B_x^{\perp} B_x^{\perp\top} \mathrm{d}u\right]^{-1}\int_0^r B_x^{\perp} B_{y\cdot x}^{\perp}\, \mathrm{d}u,
    \end{equation*}
    so that the first term cancels in Equation~\eqref{eq:limit_objects} and
    \begin{equation*}
        B_{\hat{\varepsilon},r}(s) = B_{y\cdot x}^{\perp}(s) - B_x^{\perp}(s)^\top c_r, \qquad c_r := \left[\int_0^r B_x^{\perp} B_x^{\perp\top} \mathrm{d}u\right]^{-1}\int_0^r B_x^{\perp} B_{y\cdot x}^{\perp}\, \mathrm{d}u.
    \end{equation*}

    Now write $B_x = \Omega_{xx}^{1/2} V$ and $B_{y\cdot x} = \omega_{y\cdot x} U$, where $V$ is a standard $N$-dimensional and $U$ a standard scalar Brownian motion, mutually independent. Since the projection commutes with multiplication by a constant matrix, $B_x^{\perp} = \Omega_{xx}^{1/2} V^{\perp}$ and $B_{y\cdot x}^{\perp} = \omega_{y\cdot x} U^{\perp}$. Then $B_x^{\perp}(s)^\top c_r = \omega_{y\cdot x}\, V^{\perp}(s)^\top \tilde{c}_r$ with $\tilde{c}_r = \left[\int_0^r V^{\perp} V^{\perp\top} \mathrm{d}u\right]^{-1}\int_0^r V^{\perp} U^{\perp} \,\mathrm{d}u$, and hence
    \begin{equation*}
        B_{\hat{\varepsilon},r}(s) = \omega_{y\cdot x}\left\{ U^{\perp}(s) - V^{\perp}(s)^\top \tilde{c}_r \right\}, \qquad \omega_{\hat{\varepsilon},r}^{2} = \omega_{y\cdot x}^{2}\left(1 + \tilde{c}_r^\top \tilde{c}_r\right),
    \end{equation*}
    where the expression for $\omega_{\hat{\varepsilon},r}^{2}$ is unaffected by the projection, since subtracting a process of finite variation leaves quadratic variation unchanged. The factor $\omega_{y\cdot x}$ therefore cancels in $W_{\hat{\varepsilon},r} = B_{\hat{\varepsilon},r}/\omega_{\hat{\varepsilon},r}$, giving
    \begin{equation}\label{eq:pivotal_form}
        W_{\hat{\varepsilon},r}(s) = \frac{U^{\perp}(s) - V^{\perp}(s)^\top \tilde{c}_r}{\sqrt{1 + \tilde{c}_r^\top \tilde{c}_r}},
    \end{equation}
    which is a functional of standard Brownian motions and the deterministic function $d$ alone, and therefore free of $\Omega$. Since $\mathcal{Q}(r_1,r_2)$ in Equation~\eqref{eq:Qfunctional} depends on the underlying processes only through $W_{\hat{\varepsilon},r_1,r_2}$, the same is true of the limit itself. The argument applies verbatim on the window $[r_1,r_2]$, so the limits in Theorem~\ref{thm:BIEG} and Theorem~\ref{thm:GIEG} are likewise free of $\Omega$.
\end{proof}

\subsection{Consistency and Lag Selection}\label{app:consistency}

\begin{proposition}[Directional consistency]\label{prop:consistency}
    Let Definition~\ref{def:model} and Assumption~\ref{ass:fclt} hold, and let $\varepsilon_t$ satisfy
    Definition~\ref{def:coint} for some $\tau_0\in(0,1)$, with stationary segment $\mathcal{S}$ as in
    Equation~\eqref{eq:statseg}. Consider a statistic of the form
    $\inf_{(r_1,r_2)\in\mathcal{W}}\textit{EG}(r_1,r_2)$ for a search set $\mathcal{W}\subseteq\mathcal{R}(r_0)$.
    Then the statistic diverges to $-\infty$ in probability if $\mathcal{W}$ contains a clean window, and is
    $O_p(1)$ if $\mathcal{W}$ satisfies Equation~\eqref{eq:elowbound} for some $\underline{\ell}>0$.
    Consequently,
    \begin{enumerate}[label=(\roman*)]
        \item $\textit{FIEG}(r_0) \xrightarrow{p} -\infty$ under a forward break with $\tau_0 \geq r_0$, and $\textit{FIEG}(r_0) = O_p(1)$ under a reverse break, for every $\tau_0\in(0,1)$;
        \item $\textit{BIEG}(r_0,1) \xrightarrow{p} -\infty$ under a reverse break with $1-\tau_0 \geq r_0$, and $\textit{BIEG}(r_0,1) = O_p(1)$ under a forward break, for every $\tau_0\in(0,1)$;
        \item $\textit{GIEG}(r_0) \xrightarrow{p} -\infty$ under either break direction, provided only that $\lvert\mathcal{S}\rvert \geq r_0$.
    \end{enumerate}
\end{proposition}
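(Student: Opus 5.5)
The general claim splits into two halves, each handled by one of the lemmas on the alternative. For the divergence half, suppose $\mathcal{W}$ contains a clean window $(r_1^*,r_2^*)$. Then
\[
\inf_{\mathcal{W}}\textit{EG}\le \textit{EG}(r_1^*,r_2^*),
\]
and Lemma~\ref{lem:clean} gives $\textit{EG}(r_1^*,r_2^*)\xrightarrow{p}-\infty$. So for every $M$, $P(\inf_{\mathcal{W}}\textit{EG}<-M)\to 1$. One fixed window suffices, so no uniformity over $\mathcal{W}$ is needed here.

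For the boundedness half, Lemma~\ref{lem:tight} gives $\sup_{\mathcal{W}}\lvert\textit{EG}\rvert=O_p(1)$ under Equation~\eqref{eq:elowbound}. Since $\lvert\inf_{\mathcal{W}}\textit{EG}\rvert\le\sup_{\mathcal{W}}\lvert\textit{EG}\rvert$, the infimum is $O_p(1)$.

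The three consequences then reduce to geometry of the search sets. In each case I exhibit either a clean window in the set, or the constant $\underline{\ell}$ of Equation~\eqref{eq:elowbound}.

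\emph{Item (i).} For the forward break, $\mathcal{S}=[0,\tau_0]$ with $\tau_0\ge r_0$, so the recursive window $(0,r_0)$ is clean. For the reverse break, every window $[0,r]$ contains $[0,\tau_0)$, so $\lvert[0,r]\setminus\mathcal{S}\rvert\ge\tau_0$. Take $\underline{\ell}=\tau_0>0$; this holds for every $\tau_0\in(0,1)$. This half repeats the argument at the end of the proof of Theorem~\ref{thm:FIEG}.

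\emph{Item (ii).} This is the mirror image for the set $\{(r_1,1):r_1\in[0,1-r_0]\}$. Under the reverse break with $1-\tau_0\ge r_0$, the window $(\tau_0,1)$ is admissible and clean. Under the forward break, every $[r_1,1]$ contains $(\tau_0,1]$, so take $\underline{\ell}=1-\tau_0$.

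\emph{Item (iii).} The GIEG set is all of $\mathcal{R}(r_0)$. Whichever direction holds, $\mathcal{S}$ is an interval of length at least $r_0$, so the window equal to $\mathcal{S}$ itself is admissible and clean. The first half then applies.

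The only delicate point I foresee is endpoint bookkeeping. Because of the floors in Definition~\ref{def:coint}, a window like $(\tau_0,1)$ may pick up a single $I(1)$ observation at $\lfloor T\tau_0\rfloor$. I would handle this by choosing the clean window strictly inside $\mathcal{S}$ whenever $\lvert\mathcal{S}\rvert>r_0$. In the boundary case $\lvert\mathcal{S}\rvert=r_0$, I would note that a bounded number of contaminating observations changes the residual moments by $O_p(T^{-1/2})$ relative terms. The $\sqrt{T}$ divergence in Lemma~\ref{lem:clean} is therefore unaffected.

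Everything else is direct invocation of Lemmas~\ref{lem:clean} and~\ref{lem:tight}. Their hypotheses (clean containment, and a uniform positive lower bound $\underline{\ell}$) are exactly what the case analysis verifies.
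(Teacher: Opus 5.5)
Your proposal follows the paper's proof essentially step for step: Lemma~\ref{lem:clean} on one clean window, then Lemma~\ref{lem:tight} with $\lvert\inf_{\mathcal{W}}\textit{EG}\rvert\le\sup_{\mathcal{W}}\lvert\textit{EG}\rvert$, then a geometric check of each search set. Your floor bookkeeping is a harmless extra, but the window that actually picks up the observation $t=\lfloor T\tau_0\rfloor$ is $(0,\tau_0)$ under the forward break, not $(\tau_0,1)$, which starts at $\lfloor T\tau_0\rfloor+1$. One small slip, which the paper shares: when $\tau_0>r_0$, the recursive windows $[0,r]$ with $r<\tau_0$ lie inside $[0,\tau_0)$ rather than containing it, so Equation~\eqref{eq:elowbound} holds with $\underline{\ell}=\min(\tau_0,r_0)$ in (i), and symmetrically with $\min(1-\tau_0,r_0)$ in (ii); the conclusions are unchanged.
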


\begin{proof}[Proof of Proposition~\ref{prop:consistency}]
    The two general claims are Lemmas~\ref{lem:clean} and~\ref{lem:tight}: a clean window in $\mathcal{W}$
    contributes an $\textit{EG}(r_1,r_2)$ of order $-\sqrt{T}$, so the infimum diverges; and under
    Equation~\eqref{eq:elowbound} the whole family is uniformly bounded in probability, so the infimum is
    $O_p(1)$. It remains to check which of the two applies to each search set.

    For (i), the $\textit{FIEG}$ search set is $\{(0,r): r\in[r_0,1]\}$. Under a forward break
    $\mathcal{S}=[0,\tau_0]$ and $(0,r)$ is clean for every $r\in[r_0,\tau_0]$, a non-empty range when
    $\tau_0\geq r_0$. Under a reverse break $\mathcal{S}=[\tau_0,1]$ with $\tau_0>0$, so every $[0,r]$
    contains $[0,\tau_0)$ and Equation~\eqref{eq:elowbound} holds with $\underline{\ell}=\tau_0$. Part (ii)
    is the mirror image, with search set $\{(r_1,1): r_1\in[0,1-r_0]\}$: under a reverse break with
    $1-\tau_0\geq r_0$ the window $(r_1,1)$ is clean for every $r_1\in[\tau_0,1-r_0]$, and under a forward
    break every $[r_1,1]$ contains $[\tau_0,1]$, so Equation~\eqref{eq:elowbound} holds with
    $\underline{\ell}=1-\tau_0$. For (iii), $\mathcal{R}(r_0)$ contains every window of length $r_0$, and
    $\mathcal{S}$ contains such a window whenever $\lvert\mathcal{S}\rvert\geq r_0$.

    The uniform separation in Equation~\eqref{eq:elowbound} is available in (i) and (ii) because a restricted
    search set pins one endpoint, which keeps every window a fixed distance from $\mathcal{S}$. It is not
    available for $\mathcal{R}(r_0)$ itself, and no statement is needed there: by (iii) the $\textit{GIEG}$
    search set always contains a clean window under $H_1$.
\end{proof}

Finally, we return to the lag order, which Assumption~\ref{ass:fclt}(ii) fixes by a deterministic
rule while the implementation selects it from the data.

\begin{remark}[Data-dependent lag selection]\label{rem:bic}
    Let $\hat{p}(r_1,r_2)$ denote the order selected by BIC in the window $[r_1,r_2]$, let
    $\mathcal{R}_T(r_0) \subset \mathcal{R}(r_0)$ denote the finite collection of windows
    actually scanned at sample size $T$, and define the event
    \begin{equation*}
        A_T := \left\{ \hat{p}(r_1,r_2) = \bar{p} \ \text{ for all } (r_1,r_2) \in \mathcal{R}_T(r_0) \right\}.
    \end{equation*}
    On $A_T$ the statistic process computed with selected lags coincides, path by path,
    with the one computed under the deterministic rule, and hence so do their infima.
    Consequently, if $\mathbb{P}(A_T) \to 1$, the two are asymptotically equivalent and
    Theorems~\ref{thm:FIEG}, \ref{thm:BIEG}, and \ref{thm:GIEG} apply unchanged. This
    argument requires no continuity of the selected order in $(r_1,r_2)$, only that the
    selection be eventually constant across windows.

    When $\varepsilon_t$ admits a finite autoregressive representation, BIC is order
    consistent in each window, since every admissible window contains at least
    $\lfloor r_0 T \rfloor$ observations and consistency of the criterion is preserved in
    autoregressions with unit roots \citep{Paulsen1984OrderRoots}. Uniformity over the
    $O(T^{2})$ windows scanned by $\textit{GIEG}$ requires in addition that the per-window
    misselection probability be $o(T^{-2})$, which we assume rather than establish.
    Pointwise, the limiting distribution of the ADF $t$-statistic is known to be invariant
    to data-dependent lag selection, provided the selected order satisfies the rate
    condition in probability \citep{Ng1995UnitLag, Chang2002ONROOTS}. The
    argument above delivers the uniform version by a different route, and a cheaper one: on
    $A_T$ the two processes are identical, so no invariance of the limit under a
    window-varying order is needed at all.
\end{remark}

\section{Theorem and Proof for BIEG and GIEG}\label{app:BIEGandGIEG}

This appendix presents the formal theorems for the BIEG and GIEG statistics. Both are computed from the
window estimator and window residuals of Equation~\eqref{eq:beta_window} and differ from $\textit{FIEG}$, and
from each other, only in the set over which the infimum is taken. Since Lemma~\ref{lem:uniform} delivers the
convergence of the whole process $\{\textit{EG}(r_1,r_2)\}$ on $\mathcal{R}(r_0)$ and its final claim covers
every subset of it, each theorem is an application of that lemma to the relevant search set, and the
substance of each proof lies in the behavior under the alternative.

\begin{thm}[Backward-Infimum Engle-Granger]\label{thm:BIEG}
    Let $x_t, y_t,$ and $\varepsilon_t$ satisfy Definition~\ref{def:model}, and let Assumption~\ref{ass:fclt} hold. Then, under the null hypothesis $H_0$ in Equation~\eqref{eq:hypotheses}, for each fixed $r_2 \in [r_0, 1]$,
    \begin{align}\label{eq:asympnullBIEG}
        \textit{BIEG}(r_0, r_2) \xrightarrow{d} \inf_{r_1 \in [0, r_2 - r_0]} \mathcal{Q}(r_1, r_2). 
    \end{align}    
\end{thm}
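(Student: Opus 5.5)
The plan is to obtain Theorem~\ref{thm:BIEG} as a direct application of Lemma~\ref{lem:uniform}, in the same way Theorem~\ref{thm:FIEG} was obtained. Fix $r_2\in[r_0,1]$. The first step is to identify the BIEG search set,
\begin{equation*}
    \mathcal{W}_{r_2} := \{(r_1,r_2) : r_1\in[0,r_2-r_0]\},
\end{equation*}
and to check that it is a subset of $\mathcal{R}(r_0)$. This is immediate from Equation~\eqref{eq:Rset}, since $r_2\in[r_0,1]$ and $r_1\in[0,r_2-r_0]$. The set is also nonempty and compact: it is a closed segment, degenerating to the single point $(0,r_0)$ when $r_2=r_0$. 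With this in hand, $\textit{BIEG}(r_0,r_2)=\inf_{\mathcal{W}_{r_2}}\textit{EG}(r_1,r_2)$ by Equation~\eqref{eq:BIEGstat}.

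The second step invokes the process convergence in Equation~\eqref{eq:uniformconv}, which holds under $H_0$, Definition~\ref{def:model} and Assumption~\ref{ass:fclt}. The final claim of Lemma~\ref{lem:uniform} then gives the result directly, because $f\mapsto\inf_{\mathcal{W}_{r_2}}f$ is $1$-Lipschitz on $\ell^\infty(\mathcal{R}(r_0))$ and the continuous mapping theorem applies. This yields Equation~\eqref{eq:asympnullBIEG}.

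For completeness, I would also spell out the pointwise content for a window with $r_1>0$. The window estimator of Equation~\eqref{eq:beta_window} converges to $b_{r_1,r_2}$, and $T^{-1/2}\hat\varepsilon_{\lfloor Ts\rfloor,r_1,r_2}\Rightarrow B_{\hat\varepsilon,r_1,r_2}(s)$ on $[r_1,r_2]$. Including $d_t$ removes the $O_p(\sqrt T)$ initial level at $\lfloor Tr_1\rfloor$. This is why the numerator of $\mathcal{Q}$ carries the $-W(r_1)^2$ term: the ADF sum telescopes from $\hat\varepsilon_{\lfloor Tr_1\rfloor}$ to $\hat\varepsilon_{\lfloor Tr_2\rfloor}$.

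The main obstacle is not in this theorem itself. It is inherited from Lemma~\ref{lem:uniform}, specifically the uniformity in the lower endpoint $r_1$, which the lemma handles through $\int_{r_1}^{r_2}=\int_0^{r_2}-\int_0^{r_1}$ together with Lemma~\ref{lem:gram}. The only genuinely new check is that $\mathcal{Q}(\cdot,r_2)$ is continuous on the compact set $\mathcal{W}_{r_2}$. This guarantees that the limiting infimum is attained and finite almost surely, and it follows from the continuity of $\mathcal{Q}$ on $\mathcal{R}(r_0)$ noted after Equation~\eqref{eq:Qfunctional}. Finally, Lemma~\ref{lem:pivotal}, which applies verbatim on windows, makes the limit free of $\Omega$.
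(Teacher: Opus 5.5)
Your proposal is correct and follows the paper's proof essentially step for step. The paper also records the pointwise limits $\hat\beta_{r_1,r_2}\Rightarrow b_{r_1,r_2}$ and $\textit{EG}(r_1,r_2)\xrightarrow{d}\mathcal{Q}(r_1,r_2)$, uses Lemma~\ref{lem:uniform} to upgrade them to the process on $\mathcal{R}(r_0)$, and applies the lemma's final claim to the BIEG search set as a subset of $\mathcal{R}(r_0)$; its further discussion of $H_1$ is not part of this statement.

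One correction to your side remark: the deterministic terms center the window residuals but do not remove their initial level, which is still of order $\sqrt{T}$ at $\lfloor Tr_1\rfloor$. That is exactly why $-W_{\hat\varepsilon,r_1,r_2}(r_1)^2$ survives in the numerator, and the telescoping that produces it works the same way in Case~$\mathrm{n}$, where nothing is partialled out.
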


\begin{proof}[Proof of Theorem~\ref{thm:BIEG}]
    Under $H_0$ the residual process is integrated of order one on every window. As in
    Equation~\eqref{eq:beta_limit}, the FCLT together with the Riemann-sum convergence and the CMT gives
    $\hat{\beta}_{r_1,r_2} \Rightarrow b_{r_1,r_2}$ and
    $T^{-1/2}\hat{\varepsilon}_{\lfloor Ts\rfloor,r_1,r_2} \Rightarrow B_{\hat{\varepsilon},r_1,r_2}(s)$ for
    $s\in[r_1,r_2]$, with the limits of Equation~\eqref{eq:limit_objects}; the standard Dickey-Fuller argument,
    with the lag-order condition following Equation~\eqref{eq:residmodel} annihilating the one-sided long-run
    covariance term and replacing the innovation variance by $\omega_{\hat{\varepsilon},r_1,r_2}^{2}$, gives
    $\textit{EG}(r_1,r_2) \xrightarrow{d} \mathcal{Q}(r_1,r_2)$. Lemma~\ref{lem:uniform} upgrades this to the
    process indexed by $(r_1,r_2)$ on the whole of $\mathcal{R}(r_0)$. For fixed $r_2\in[r_0,1]$ the
    $\textit{BIEG}$ search set $\{(r_1,r_2) : r_1 \in [0, r_2-r_0]\}$ is a subset of $\mathcal{R}(r_0)$, so the
    final claim of that lemma yields Equation~\eqref{eq:asympnullBIEG}.

    Under $H_1$, Lemma~\ref{lem:clean} gives $\textit{EG}(r_1,r_2)\xrightarrow{p}-\infty$ on any clean window,
    so for a given $r_2$ the BIEG statistic diverges whenever some $r_1\in[0,r_2-r_0]$ makes
    $[r_1,r_2]\subseteq\mathcal{S}$. The endpoint $r_2$ therefore determines against which break direction the
    test has power. In the specification used throughout this paper, $r_2=1$, so every window is of the form
    $[r_1,1]$. Under a reverse break, Equation~\eqref{eq:epsbehave_reverse}, $\mathcal{S}=[\tau_0,1]$ and any
    $r_1\in[\tau_0,1-r_0]$ gives a clean window, so $\textit{BIEG}(r_0,1)\xrightarrow{p}-\infty$ whenever
    $1-\tau_0\geq r_0$. Under a forward break, Equation~\eqref{eq:epsbehave}, $\mathcal{S}=[0,\tau_0]$ with
    $\tau_0<1$ and every window $[r_1,1]$ contains the terminal $I(1)$ segment $[\tau_0,1]$; no admissible
    window is clean, Lemma~\ref{lem:tight} applies with $\underline{\ell}=1-\tau_0$, and
    $\textit{BIEG}(r_0,1)=O_p(1)$. For a general fixed $r_2<1$ the roles are partially reversed, since a clean
    window exists under a forward break whenever $r_0 \leq r_2 \leq\tau_0$. See
    Proposition~\ref{prop:consistency}.
\end{proof}

\begin{thm}[Generalized-Infimum Engle-Granger]\label{thm:GIEG}
    Let $x_t$, $y_t$, and $\varepsilon_t$ satisfy Definition~\ref{def:model} and let Assumption~\ref{ass:fclt} hold. Then, under the null hypothesis $H_0$ in Equation~\eqref{eq:hypotheses}, it holds that
    \begin{align}\label{eq:asympnullGIEG}
        \textit{GIEG}(r_0) \xrightarrow{d} \inf_{(r_1,r_2) \in \mathcal{R}(r_0)} \mathcal{Q}(r_1, r_2). 
    \end{align}    
\end{thm}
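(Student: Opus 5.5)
The plan follows the same template as the proofs of Theorems~\ref{thm:FIEG} and~\ref{thm:BIEG}. Since Lemma~\ref{lem:uniform} already delivers process convergence on the whole admissible set, the argument reduces to three checks: the $\textit{GIEG}$ search set is admissible, the pointwise limit is the stated one, and the infimum map is continuous there.

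\emph{Step 1: the pointwise limit.} Under $H_0$, $\varepsilon_t$ is $I(1)$ on every window. For any fixed $(r_1,r_2)\in\mathcal{R}(r_0)$, the FCLT of Assumption~\ref{ass:fclt}(i), the Riemann-sum convergence and the CMT give $\hat{\beta}_{r_1,r_2}\Rightarrow b_{r_1,r_2}$ and $T^{-1/2}\hat{\varepsilon}_{\lfloor Ts\rfloor,r_1,r_2}\Rightarrow B_{\hat{\varepsilon},r_1,r_2}(s)$ on $[r_1,r_2]$, exactly as in Equation~\eqref{eq:beta_limit}. The standard Dickey-Fuller argument, with the lag rule of Assumption~\ref{ass:fclt}(ii) removing the one-sided long-run covariance and normalizing by $\omega_{\hat{\varepsilon},r_1,r_2}^2$, then yields $\textit{EG}(r_1,r_2)\xrightarrow{d}\mathcal{Q}(r_1,r_2)$. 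Writing $\mathcal{Q}$ in the endpoint/quadratic-variation form of Equation~\eqref{eq:Qfunctional} avoids any adaptedness issue.

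\emph{Step 2: the search set and the infimum.} The $\textit{GIEG}$ search set in Equation~\eqref{eq:GIEGstat} is exactly $\mathcal{R}(r_0)$ of Equation~\eqref{eq:Rset}, so $\mathcal{W}=\mathcal{R}(r_0)$ in the final claim of Lemma~\ref{lem:uniform}. Lemma~\ref{lem:gram} guarantees that the window Gram matrices are uniformly non-degenerate on this set, and Equation~\eqref{eq:omegabound} guarantees that the variance rates are bounded below. Lemma~\ref{lem:uniform} therefore gives $\{\textit{EG}(r_1,r_2)\}\Rightarrow\{\mathcal{Q}(r_1,r_2)\}$ in $\ell^\infty(\mathcal{R}(r_0))$. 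The map $f\mapsto\inf_{\mathcal{R}(r_0)}f$ is $1$-Lipschitz, so the CMT gives Equation~\eqref{eq:asympnullGIEG}. Because $\mathcal{Q}$ is continuous on the compact set $\mathcal{R}(r_0)$, the limiting infimum is attained and is a.s.\ finite, and its law is free of $\Omega$ by Lemma~\ref{lem:pivotal}.

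\emph{Step 3: behavior under $H_1$ (remark).} Proposition~\ref{prop:consistency}(iii) gives $\textit{GIEG}(r_0)\xrightarrow{p}-\infty$ whenever $|\mathcal{S}|\geq r_0$, since $\mathcal{S}$ then contains a window of length $r_0$, which is clean. Lemma~\ref{lem:clean} applies to that window.

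The only genuinely delicate point is the two-dimensional uniformity used in Step 2: the discrete windows actually scanned must approximate every point of $\mathcal{R}(r_0)$, and the lag augmentation must vanish uniformly in $(r_1,r_2)$ and not only in the upper endpoint. I would handle the first through the identity $\int_{r_1}^{r_2}=\int_0^{r_2}-\int_0^{r_1}$ together with the uniform-in-upper-limit stochastic integral convergence used in Step~2 of Lemma~\ref{lem:uniform}. For the second, I would use the observation that every window has length at least $r_0$, so the rate at which the lag correction vanishes does not depend on the window. Both are taken as established in that lemma, so the theorem follows as a direct application.
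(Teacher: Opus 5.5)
Your proposal is correct and follows essentially the same route as the paper. The pointwise Dickey--Fuller limit is obtained exactly as in the proof of Theorem~\ref{thm:BIEG}, and the $\textit{GIEG}$ search set is identified with the whole of $\mathcal{R}(r_0)$, so Lemma~\ref{lem:uniform} and its $1$-Lipschitz infimum step apply directly; the behavior under $H_1$ then follows from a clean window of length $r_0$ via Lemma~\ref{lem:clean}, while your remarks on attainment of the limiting infimum and on invariance to $\Omega$ via Lemma~\ref{lem:pivotal} are consistent with the paper but not needed for the stated convergence.
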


\begin{proof}[Proof of Theorem~\ref{thm:GIEG}]
    The null argument is the one just given for Theorem~\ref{thm:BIEG}, with no restriction on either endpoint:
    the $\textit{GIEG}$ search set is the whole of $\mathcal{R}(r_0)$, so Lemma~\ref{lem:uniform} applies to it
    directly and gives Equation~\eqref{eq:asympnullGIEG}.

    Under $H_1$, that same absence of restriction is what distinguishes $\textit{GIEG}$ from the other two
    statistics. Its search set contains a clean window whenever the stationary segment is long enough to hold
    one, that is whenever $\lvert\mathcal{S}\rvert\geq r_0$: under Equation~\eqref{eq:epsbehave_reverse} the
    window $(\tau_0,\,\tau_0+r_0)$ is clean whenever $1-\tau_0\geq r_0$, exactly as $(0,\tau_0)$ is clean under
    Equation~\eqref{eq:epsbehave} whenever $\tau_0\geq r_0$. Lemma~\ref{lem:clean} then gives
    $\textit{GIEG}(r_0)\xrightarrow{p}-\infty$ in both cases, so the test is consistent against both break
    directions; see Proposition~\ref{prop:consistency}.
\end{proof}

\end{document}